\numberwithin{equation}{section}
\theoremstyle{plain}
\newtheorem{theorem}{Theorem}[section]
\newtheorem{prop}[theorem]{Proposition}
\newtheorem{lemma}[theorem]{Lemma}
\newtheorem{cor}[theorem]{Corollary}
\theoremstyle{remark}
\newtheorem{remark}[theorem]{Remark}
\def\N{\mathbb{N}}
\def\R{\mathbb{R}}
\def\Fcal{\mathcal{F}}
\def\Lcal{\mathcal{L}}
\def\Ocal{\mathcal{O}}
\def\Enorm{\textnormal{E}}
\def\vol{\textnormal{vol}}
\def\Tr{\textnormal{Tr}}
\def\eps{\varepsilon}
\def\fmin{f_{\textnormal{min}}}
\def\iid{\overset{\textnormal{iid}}{\sim}}
\begin{document}

%%%%%%%%%%%%%%%%%%%%%%%%%%%%%%%%%%%%%%%%%%%%%%
\begin{frontmatter}
%\title{Algorithms for statistical inference with low-frequency diffusion data: a PDE approach}
\title{Statistical algorithms for low-frequency diffusion data: a PDE approach}

\runtitle{Inference with low-frequency diffusion data}

\begin{aug}
%%%%%%%%%%%%%%%%%%%%%%%%%%%%%%%%%%%%%%%%%%%%%%%
%% Only one address is permitted per author. %%
%% Only division, organization and e-mail is %%
%% included in the address.                  %%
%% Additional information can be included in %%
%% the Acknowledgments section if necessary. %%
%% ORCID can be inserted by command:         %%
%% \orcid{0000-0000-0000-0000}               %%
%%%%%%%%%%%%%%%%%%%%%%%%%%%%%%%%%%%%%%%%%%%%%%%
\author[A]{\fnms{Matteo}~\snm{Giordano\ead[label=e1]{matteo.giordano@unito.it}}}
\and
\author[B]{\fnms{Sven}~\snm{Wang}\ead[label=e2]{sven.wang@hu-berlin.de}}
%%%%%%%%%%%%%%%%%%%%%%%%%%%%%%%%%%%%%%%%%%%%%%
%% Addresses                                %%
%%%%%%%%%%%%%%%%%%%%%%%%%%%%%%%%%%%%%%%%%%%%%%
\address[A]{ESOMAS Department,
University of Turin\printead[presep={,\ }]{e1}}

\address[B]{Department of Mathematics,
Humboldt University Berlin\printead[presep={,\ }]{e2}}
\end{aug}

%%%%%%%%%%%%%%%%%%%%%%%%%%%%%%%%%%%%%%%%%%%%%%
\begin{abstract}
We consider the problem of making nonparametric inference in a class of
multi-dimensional diffusions in divergence form, from low-frequency data. Statistical analysis in this setting is notoriously challenging due to the intractability of the likelihood and its gradient, and computational methods have thus far largely resorted to expensive simulation-based techniques. In this article, we propose a new computational approach which is motivated by PDE theory and is built around the characterisation of the transition densities as solutions of the associated heat (Fokker-Planck) equation. Employing optimal regularity results from the theory of parabolic PDEs, we prove a novel characterisation for the gradient of the likelihood. Using these developments, for the nonlinear inverse problem of recovering the diffusivity, we then show that the numerical evaluation of the likelihood and its gradient can be reduced to standard elliptic eigenvalue problems, solvable by powerful finite element methods. This enables the efficient implementation of a large class of popular statistical algorithms, including (i) preconditioned Crank-Nicolson and Langevin-type methods for posterior sampling, and (ii) gradient-based descent optimisation schemes to compute maximum likelihood and maximum-a-posteriori estimates. We showcase the effectiveness of these methods via extensive simulation studies in a nonparametric Bayesian model with Gaussian process priors, in which both the proposed optimisation and sampling schemes provide good numerical recovery. %\matteo{Interestingly, the optimisation schemes provided satisfactory numerical recovery while requiring a significantly smaller computational time compared to the employed sampling methods}.
The reproducible code is available at \url{https://github.com/MattGiord/LF-Diffusion}.
\end{abstract}
%%%%%%%%%%%%%%%%%%%%%%%%%%%%%%%%%%%%%%%%%%%%%%

%%%%%%%%%%%%%%%%%%%%%%%%%%%%%%%%%%%%%%%%%%%%%%
\begin{keyword}[class=MSC]
\kwd[Primary ]{62M15}
\kwd[; secondary ]{62F15, 62G05}
\end{keyword}
%%%%%%%%%%%%%%%%%%%%%%%%%%%%%%%%%%%%%%%%%%%%%%

%%%%%%%%%%%%%%%%%%%%%%%%%%%%%%%%%%%%%%%%%%%%%%
\begin{keyword}
\kwd{Low-frequency data}
\kwd{inference for SDEs}
\kwd{nonparametric Bayesian inference}
\kwd{Markov chain Monte Carlo}
\kwd{gradient-based optimisation}
\kwd{numerical methods for PDEs}
\kwd{intractable likelihood}
\end{keyword}
%%%%%%%%%%%%%%%%%%%%%%%%%%%%%%%%%%%%%%%%%%%%%%

\end{frontmatter}
%%%%%%%%%%%%%%%%%%%%%%%%%%%%%%%%%%%%%%%%%%%%%%
%% Please use \tableofcontents for articles %%
%% with 50 pages and more                   %%
%%%%%%%%%%%%%%%%%%%%%%%%%%%%%%%%%%%%%%%%%%%%%%
\setcounter{tocdepth}{1}

%\tableofcontents
\graphicspath{{Figures/}}

%
%
%
%
%

%%%%%%%%%%%%%%%%%%%%%%%%%%%%%%%%%%%%%%%%%%%%%%
\section{Introduction}

Diffusions are mathematical models used ubiquitously across the sciences and in applications. They describe the stochastic time-evolution of a large variety of phenomena, including heat conduction \cite{Baher98}, chemical reactions \cite{Peletier12}, cellular dynamics \cite{Briane20} and financial markets \cite{Shreve04}. See the monograph \cite{Allen07} for further examples and references. In many situations, the `drift' and `diffusivity' parameters of a stochastic process $(X_t, \ t\ge 0)$ are not precisely known, and have to be estimated from discrete-time observations of a particle trajectory
\begin{equation}\label{Eq:LFData}
    X^{(n)}:= (X_0,X_D,X_{2D},...,X_{nD}),
\end{equation}
for some `observation distance' $D>0$. This is the central inferential problem considered in the present article. Due to their unorthodox likelihood structure, which is implicitly determined by the transition probabilities of $(X_t, \ t\ge 0)$, discrete diffusion data have posed formidable difficulties for statistical analysis. While remarkable progresses have recently been made in deriving theoretical recovery guarantees, devising efficient computational algorithms remains a significant challenge -- see below for more discussion.

Here, we shall study these issues in 
%\matteo{the ?important?}
a nonparametric model for diffusion inside a bounded
%insulated, and inhomogeneous medium
region; possible extensions will be discussed below. Taking, throughout, the diffusion domain to be a subset $\Ocal\subset\R^d$, $d\in\N$, such a system is \textit{macroscopically} described by the (Fokker-Planck) parabolic partial differential equation (PDE), which we shall refer to as the `heat equation',
\begin{equation}
\label{Eq:HeatEq}
\begin{cases}
\partial_t u - \nabla\cdot( f\nabla u) = 0, 
& \textnormal{on}\  (0,\infty)\times \Ocal,\\
\partial_\nu u = 0, 
& \textnormal{on}\  (0,\infty)\times \partial\Ocal,
\end{cases}
\end{equation}
encapsulating the changes over time in the substance density $u(t,x)$ at each location $x\in\Ocal$ \cite[Chapter 11]{E10}. Above, $\nabla\cdot$ and $\nabla$ denote, respectively, the divergence and gradient operators, $\nu$ is the inward-pointing unit normal vector with associated normal derivative $\partial_\nu$. The zero-Neumann boundary condition $\partial_\nu u= 0$
corresponds to the boundary being `insulated', and $f:\Ocal\to(0,\infty)$ is a `conductivity' function modelling the spatially-varying intensity at which diffusion occurs throughout the inhomogeneous medium. At the \textit{microscopic} level, the trajectory $(X_t, \ t\ge0)$ of a diffusing particle, started inside $\Ocal$, evolves according to the associated stochastic differential equation (SDE),
\begin{equation}
\label{Eq:SDE}
    dX_t = \nabla f(X_t)dt + \sqrt{2f(X_t)}dW_t + \nu(X_t)dL_t, \qquad t\ge 0,
\end{equation}
where $(W_t, \ t\ge0)$ is a standard $d$-dimensional Brownian motion and the term $\nu(X_t)dL_t$ models reflection of the particle at the insulated boundary $\partial\Ocal$ of the medium via the local time process $(L_t, \ t\ge0)$; see \cite{Tanaka79} for details. The connection between the SDE (\ref{Eq:SDE}) and the PDE (\ref{Eq:HeatEq}) will play a key role throughout this paper.

%

%\begin{figure}
%\includegraphics[width=4.7cm]{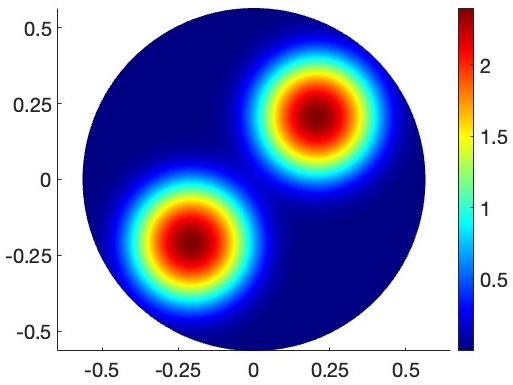}
%\includegraphics[width=4.7cm]{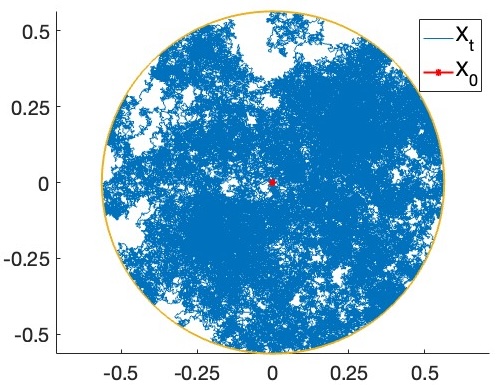}
%\includegraphics[width=4.7cm]{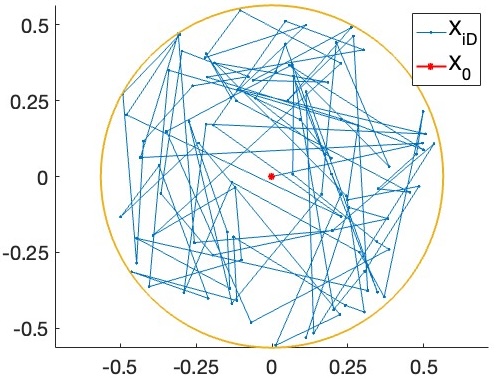}
%\caption{Left: a (reparametrised) conductivity function, modelling two areas of more intense diffusion around the points $(.2,.2)$ and $(-.2,-.2)$. Centre: the (unobserved) continuous trajectory $(X_t, \ 0\le t\le T)$, initialised at the origin and run until time $T=5$. Right: low-frequency observations $X^{(100)}=(X_0,X_D,\dots,X_{100D})$ of the continuous trajectory, sampled with time lag $D=.05$.}
%\label{Fig:StatProbl}
%\end{figure}

The statistical reconstruction task consists of determining nonparametrically (i.e.~within some infinite-dimensional function class) $f$ from the discrete measurements (\ref{Eq:LFData})
separated by a (fixed) time lag $D>0$. Often times, because of the characteristics of the data collection process, $D$ cannot be reduced under a certain non-zero threshold -- in the statistical literature, this is referred to as the `low-frequency' regime, which is the main setting our PDE techniques will be targeted at. 
%An illustration of the problem with synthetic data is provided in Figure \ref{Fig:StatProbl}. 
For example, \cite[Chapter 1]{Majda2012} describes a filtering problem in weather forecasting where measurements are inputted in a large scale dynamical system every few hours; see also \cite{Heckert2022} for a similar situation  in systems biology.
Among the concrete applications of divergence form models, we mention the 3D single particle tracking experiments considered e.g.~in \cite{H21}, wherein the diffusion process \eqref{Eq:SDE} arises as an instance with constant potential; see Section \ref{Eq:Potential} for possible extensions to settings with spatially-varying potential energy. We also refer to \cite[p.~200f]{L13} for applications in the context of spatial ecology. %\matteo{\st{Other possible approaches based on stochastic analysis which are typically only available with `high-frequency' or continuous-time measurements of $(X_t, \ t\ge0)$ will not be pursued here; more discussion can be found below.}}

%\todo{Should we even discuss the different measurement schemes here? / to what extent?}
%For example, \cite[Chapter 1]{Majda2012} describes a filtering problem in weather forecasting where measurements are inputted in a large scale dynamical system every few hours, and where high-frequency or continuous-time observations are unfeasible to implement. We also refer to \cite{Heckert2022} for a similar situation in system biology. Motivated by these examples, we shall here consider the problem of making statistical inference on the conductivity function $f$ in the above prototypical diffusion model, based on low-frequency observations $X^{(n)}$ of a particle trajectory $(X_t,\ t\ge0)$ solving the SDE \eqref{Eq:SDE}. An illustration of the problem with synthetic data is provided in Figure \ref{Fig:StatProbl}. 

%

Related `parameter identification' problems for the conductivity in diffusion equations have also been widely studied in the inverse problem literature, largely in applications where observations of a steady-state system are available in the form of (possibly noisy) point evaluations of the solution of a time-independent elliptic PDE. Among the many contributions, we refer to \cite{C80,kohn1984determining,nachman1988reconstructions,uhlmann2009electrical,AN19} for models with boundary measurements in the context of the famous `Caldéron problem', and to \cite{R81,EHN96,S10,nickl2020convergence,GN20} for interior measurements schemes connected to the `Darcy's flow' model. Finally, there is a wide literature on nonparametric coefficient estimation problems in SDEs with high-frequency and continuous-time data; we refer to \cite{%hoffmann1999adaptive,
giordano2022nonparametric,strauch2018adaptive,NR20,hoffmann2022bayesian,comte2007penalized} for a comprehensive overview and further references.

%

%%%%%%%%%%%%%%%%%%%%%%%%%%%%%%%%%%%%%%%%%%%%%%
\subsection{Challenges}

In the present setting, the invariant distribution of the diffusion process $(X_t,\ t \ge 0)$ in \eqref{Eq:SDE} can be shown to coincide with the uniform distribution $\vol(\Ocal)^{-1}dx$ on $\Ocal$ \cite[Chapter 1.11.3]{BGL14}, and therefore is non-informative about the conductivity $f$. Moreover, in the low-frequency regime, common stochastic analysis-based approaches %sample statistics (such as the `sample quadratic variation') or other
%approaches based on stochastic analysis
which underlie high-frequency and continuous-time methods (e.g.~in \cite{%hoffmann1999adaptive,
giordano2022nonparametric,strauch2018adaptive,NR20,hoffmann2022bayesian,comte2007penalized}) cannot be employed to validly estimate $f$. As laid out in \cite[Section 1.2.3]{GHR04}, this is because low-frequency data do not allow to recover `full trajectory properties'.

Instead, the problem must rather be approached using the information contained in the transition densities $p_{D,f}(x,y), \ x,y\in\Ocal$, namely the probability density functions of the conditional laws $\Pr(X_D\in dy|X_0=x)$, and the resulting likelihood function
\begin{equation}
\label{Eq:Likelihood}
    L_n(f)=\prod_{i=1}^np_{D,f}(X_{(i-1)D},X_{iD}). 
\end{equation}
However, apart from certain special cases, the transition densities of diffusion processes, including those of model \eqref{Eq:SDE}, are generally not available in closed form, making the likelihood for low-frequency observations analytically intractable. This is the central issue posing a huge challenge to the design, implementation and theoretical analysis of statistical algorithms.

In such contexts, 
%a large part of the 
many existing parametric and nonparametric methods rely on computational strategies that involve sophisticated (and often computationally onerous) missing data techniques, whereby the unobserved continuous trajectory between the data points is treated as a latent variable and inputted via simulation schemes for diffusion bridges, enabling the approximation of the likelihood for low-frequency observations
by the more tractable one for continuous-time data. This approach was first pioneered, in general diffusion models, by Pedersen \cite{pedersen1995consistency} to construct simulated maximum likelihood estimators, and by Roberts and Stramer \cite{roberts2001inference}, Elerian et al.~\cite{elerian2001likelihood} and Eraker \cite{eraker2001mcmc} to implement Bayesian inference with data-augmentation. See \cite{durham2002numerical,
delyon2006simulation,
beskos2006exact,golightly2008bayesian,
beskos2009monte,lin2010generating,papaspiliopoulos2013data,bladt2016simulation,van2017bayesian,schauer2017guided} and the many references therein. 

%
%
%

%%%%%%%%%%%%%%%%%%%%%%%%%%%%%%%%%%%%%%%%%%%%%%
\subsection{Main contributions}
In this paper, we adopt a novel PDE perspective to address the computational challenges arising in the nonparametric diffusion statistical model \eqref{Eq:SDE} with low-frequency observations. Our main contributions are as follows.
\begin{itemize}
\item We derive theoretical PDE formulae for likelihoods and gradients which are concretely computable via standard finite element methods for elliptic PDEs.
\item We formulate several novel algorithms for posterior sampling and optimisation.
\item We implement and numerically demonstrate the efficacy of the algorithms for computing maximum-a-posteriori, maximum likelihood and posterior mean estimates, as well as for posterior sampling.
\end{itemize}

Let us briefly expand on these points. In Section \ref{Sec:PDEAppraoch}, building on the characterisation of the transition densities $p_{D,f}$ appearing in \eqref{Eq:Likelihood} as the fundamental solutions of the heat equation \eqref{Eq:HeatEq}, we first show how the computation of $p_{D,f}$ can be reduced to a corresponding time-independent eigenvalue problem for the  elliptic (self-adjoint) infinitesimal generator. This will later lead to a simple likelihood evaluation routine, cf.~\eqref{Eq:SpectrLikelihood}, that does not require any data-augmentation step.
%and only incurs in a negligible discretisation error due to series truncation and the tightly controlled numerical approximation associated to the employed elliptic solver. 
Building on such PDE characterisation, the main theoretical result of this article, Theorem \ref{first-der}, is then derived. In particular, we prove a `closed-form' expression for the gradient of the likelihood, by characterising the Frechét derivatives of the maps $f\mapsto p_{D,f}(x,y)$, for $x,y\in\Ocal$ fixed. These are obtained using perturbation arguments for parabolic PDEs and the so-called `variation-of-constants' principle \cite[Chapter 4]{lunardi}, building on the regularisation step developed by Wang \cite{W19} to deal with the singular behavior of the transition densities relative to vanishing time instants. The full argument is fairly technical; it is presented in Section \ref{sec-gradient-pf} in the Supplement \cite{giordanowangsupplement}. Again using the self-adjointness of the generator and numerical methods for elliptic PDEs, we then propose an efficient strategy to numerically evaluate the likelihood gradient, which can serve as the building block for the implementation of gradient-based statistical algorithms. Thus, while our main results are of independent interest for the literature on parabolic PDEs, they are strongly motivated by the primary goal of gaining access to (previously unavailable) core likelihood-based methodologies in the problem of low-frequency diffusion data.%, at least in the context of the divergence form model \eqref{Eq:SDE}.

Section \ref{Sec:Algo} details the statistical procedures which we derive from the above theoretical developments. Our results enable a large algorithmic toolbox of common likelihood-based computational techniques -- in particular, we pursue both gradient-free (preconditioned Crank-Nicolson, pCN) and gradient-based (unadjusted Langevin, ULA) algorithms for posterior sampling, as well as gradient descent methods. These schemes allow to obtain numerical approximations for posterior mean and maximum a posteriori (MAP) estimates, posterior quantiles for uncertainty quantification, as well as (penalised) maximum likelihood type estimators. The detailed description of the algorithms can be found in Sections \ref{Sec:pCN}-\ref{Sec:GradDesc}.

In several simulation studies, presented in Section \ref{sec:Num}, we apply the above methods to a nonparametric Bayesian model with truncated Gaussian series priors. In the large data limit $n\to \infty$, these priors have recently been shown by Nickl \cite{nickl2024inference} (see also \cite{alberti2024low}) to lead to consistent inference of the data-generating `ground truth' conductivity $f$ -- but in principle, our numerical methods are also applicable to other priors. Interestingly, \cite{nickl2024inference} undertook a similar PDE-based point of view to prove the injectivity of the nonlinear map $f\mapsto p_{D,f}$ from the conductivity to the transition densities, providing the first statistical guarantees for nonparametric Bayesian procedures with multi-dimensional low-frequency diffusion data.

%It was shown that certain truncated Gaussian series priors lead to posterior distributions that asymptotically concentrate around the ground truth as the sample size increases. Our work includes an implementation of such consistent statistical methods, showcasing their feasibility and effectiveness in practice. 

Our work also opens the door for the implementation of further Markov Chain Monte Carlo (MCMC) \cite{CRSW13, %graham2015quasi,
CLM16,BGLFS17%sprungk2023metropolis
} and gradient-based optimisation methods, an important direction of future research.

\subsection{Related literature and discussion}

In the seminal paper \cite{GHR04} by Gobet et al.,~spectral methods (related to the ones pursued here) were used to obtain minimax-optimal nonparametric estimators in one-dimensional diffusion models. However, it seems challenging to apply their approach to the present multi-dimensional setting, where the elliptic generator defines a genuine PDE.
%instead of an ODE.
Analogous ideas also underpin the parametric estimators built by Kessler and Sørensen \cite{kessler1999estimating} using certain spectral martingale estimating functions.
%Further of note are the 
We also mention the works by Aït-Sahalia \cite{ait2002maximum,ait2008closed} which (in a parametric setting) derive closed-form likelihood approximations via Hermite polynomials along with resulting approximate maximum likelihood estimators. In contrast to our work, calculations of gradients are not considered there; moreover, the expansions on the eigenbasis of the self-adjoint generator of \eqref{Eq:SDE} considered here lead to rapidly (i.e.~exponentially) decaying remainders terms.  %, while also enabling the use of efficient numerical PDE solvers.

Let us briefly discuss future directions of research which may build on the present work. A first important avenue would be the extension of the developed methodology beyond the divergence form diffusion model \eqref{Eq:SDE}. Natural generalisations encompass anisotropic diffusions with matrix-valued conductivities, models for diffusion in a `potential energy field' (see Section \ref{Sec:Discuss}) and diffusions on $\R^d$, which would necessitate extending our spectral and PDE arguments to unbounded domains. %, e.g.~building on \cite{PV01}.
%for further discussion. 
Secondly, in practical applications, it is likely that measurements may only be available under (e.g.~Gaussian) observational noise. % (e.g.~due to Gaussian measurement errors)
This gives rise to a hidden Markov model (HMM) as described for example in \cite[Chapter 11.5]{Sarkka19}, where the likelihood structure is characterised by convolutions of $p_{D,f}(\cdot,\cdot)$ with the noise density. In this scenario,
 our methods for evaluating the transition densities and their gradients, when combined with smoothing and filtering techniques from the HMM literature, may still be used %as the cornerstone
 to implement likelihood-based inference for joint state and parameter estimation%, combined with smoothing and filtering techniques from the HMM literature
 %with our efficient routines for evaluating the transition densities and their gradients
 ; see e.g.~\cite{S02,%Cappe05,
 mbalawata2013parameter,Foulon24}. Relatedly, we mention recent methodological work which uses tools from computational graph completion in the context of SDEs \cite{DARCYetal}.

Our calculations, combined with the results in \cite{nickl2024inference}, may also pave the way to proving `gradient stability' properties in the sense of \cite{nickl2022polynomial}; further see \cite{bohr2021log,altmeyer2022polynomial, bandeira2023free}, and \cite[Chapter 3]{nickl2023bayesian}. Using the program put forth in \cite{nickl2022polynomial}, a rigorous investigation of the complexity of the employed sampling and optimisation algorithms can then be carried out, with the goal of deriving bounds for the computational cost that scale polynomially with respect to the discretisation dimension and the sample size. Further discussion can be found in Section \ref{Sec:GradStab}. 

Another interesting question concerns the relationship between the observation time lag $D>0$, the `numerical stability' of the proposed methodology, and the `statistical information' contained in the sample. While, algorithmically, low-frequency samples imply better spectral approximations of the likelihood and its gradient, higher sampling frequencies allow to capture finer characteristics which may facilitate statistical convergence. In particular, recent work has shown that nonparametric Bayesian procedures based on Gaussian priors can achieve optimal statistical convergence rates in the model \eqref{Eq:SDE} with `high-frequency' observations (where $D= D_n\approx n^{-\gamma}\to0$ for suitable $\gamma>0$) \cite{hoffmann2022bayesian} and with continuous-time observations \cite{van2006convergence,ns21,van2016gaussian,NR20,giordano2022nonparametric}. In the high-frequency regime, small-time Gaussian heat kernel asymptotics \cite{SC10} may provide good numerical likelihood approximations, related to the (tractable) continuous-time likelihood provided by Girsanov's theorem. Understanding the more detailed `phase transitions' between the different sampling frequencies will inform which methods to employ in practice, see Section \ref{sec:gauss-appr} for a detailed discussion.

%\red{For instance, with} continuous data, likelihood evaluations become tractable via Girsanov's theorem. \todo{This paragraph is now a little bit redundant perhaps?}
%Understanding the more detailed `phase transitions' between the different regimes will inform which methods to employ in practice. Recent work has shown that nonparametric Bayesian procedures based on Gaussian priors can achieve optimal statistical convergence rates in the model \eqref{Eq:SDE} with `high-frequency' observations (where $D= D_n\approx n^{-\gamma}\to0$ for some $\gamma>0$) \cite{hoffmann2022bayesian} and with continuous-time observations \cite{van2006convergence,ns21,van2016gaussian,NR20,giordano2022nonparametric}. 

We also mention the highly successful diffusion generative models \cite{song21} which are based on estimating the drift of a time-reversed SDE `forward noising process' (the \textit{score function}); see Section \ref{sec:generativemodels} for more detailed discussion of the connections to the present setting.

\section{Likelihood and gradient computation via PDEs}
\label{Sec:PDEAppraoch}

Throughout, let $\Ocal \subset \R^d$, $d\in\N$, be a non-empty, open, bounded and convex set with smooth boundary $\partial\Ocal$. It is well-known that for any twice continuously differentiable and strictly positive $f\in C^2(\bar\Ocal)$, $\inf_{x\in\Ocal}f(x) >0$, and any given starting point $X_0=x_0\in\Ocal$ the SDE \eqref{Eq:SDE} has a unique path-wise solution $(X_t, \ t\ge0)$, constituting a continuous-time Markov diffusion process reflected at the boundary (since $f$ and $\nabla f$ are Lipschitz); see \cite{Tanaka79}. In view of these regularity assumptions, for some $\fmin>0$ we maintain
\begin{equation}
\label{Eq:ParamSpace}
    \Fcal := \Big\{ f \in C^2(\bar \Ocal): \inf_{x\in\mathcal O} f(x) \ge \fmin \Big\}
\end{equation}
as the parameter space. Recall the low-frequency observations $X^{(n)}$ from \eqref{Eq:LFData} with measurement distance $D>0$, which we shall keep fixed throughout.

%
%
%

%%%%%%%%%%%%%%%%%%%%%%%%%%%%%%%%%%%%%%%%%%%%%%
\subsection{Parabolic PDE characterisations}

The Markov property of $(X_t,~t\ge 0)$ implies that the likelihood $L_n(f)$ of any $f\in\Fcal$ 
%given by the joint probability density function of $X^{(n)}$,
factorises as a product of the (symmetric) transition densities $p_{t,f}(x,y)=p_{t,f}(y,x), \ t\ge0, \ x,y\in\Ocal$; cf.~\eqref{Eq:Likelihood}. These characterise the conditional laws
\begin{equation}
\label{Eq:TRDFs}
    \Pr(X_{s+t}\in A|X_s=x) = \int_A p_{t,f}(x,y)dy,
    \qquad A\subseteq \Ocal\ \textnormal{measurable}, 
    \qquad s\ge 0,
\end{equation}
and more generally the transition operator
\begin{equation}
\label{Eq:TrOp}
    P_{t,f}[u](x):=\Enorm[u(X_{s+t})|X_s=x]
    =\int_\Ocal p_{t,f}(x,y)u(y)dy, \qquad s\ge0,
\end{equation}
acting on square-integrable test functions $u\in L^2(\Ocal)$. The semigroup $(P_{t,f},~t\ge 0)$ is known to play the role of the `solution operator' for the heat equation \eqref{Eq:HeatEq}; thus the transition densities $p_{t,f}$ also constitute the fundamental solution to \eqref{Eq:HeatEq}. Informally, this means that for $y\in\mathcal O$ fixed, the map $(t,x)\mapsto p_{t,f}(y,x)$ solves \eqref{Eq:HeatEq} with Dirac initial condition,
\begin{equation}
\label{Eq:FundamentalPDE}
\begin{cases}
(\partial_t  - \mathcal L_f)u = 0, & \textnormal{on}\  (0,\infty)\times \Ocal,\\
\partial_\nu u = 0, 
& \textnormal{on}\  (0,\infty)\times \partial\Ocal,\\
u(0,\cdot)= \delta_y(\cdot), & \textnormal{on}\ \Ocal.
\end{cases}
\end{equation}
Here, we denoted by $\mathcal L_f$ the elliptic divergence form operator
\[ \mathcal L_f u = \nabla 
\cdot(f\nabla u)=\sum_{l=1}^d \partial_{x_l}(f\partial_{x_l} v),
\]
a notation that we will use throughout. The operator $\mathcal L_f$, with domain given by the set of functions in the Sobolev space  $H^2(\mathcal O)$ with zero Neumann boundary conditions,
\[ 
    H^2_N(\mathcal O):=\big\{ u\in H^2(\mathcal O):~\partial_\nu u =0 ~\text{on}~\partial\mathcal O \big\}, 
\]
constitutes the infinitesimal generator of the process \eqref{Eq:SDE}. While the transition density functions are not available in closed form, their characterisation through \eqref{Eq:FundamentalPDE} implies a convenient spectral expansion in terms of the eigenpairs of the generator $\Lcal_f$, cf.~\eqref{Eq:FundSol}, which we will use below for evaluating the likelihood $L_n(f)$.

A more intricate question is whether the gradient of the likelihood function $L_n$ also satisfies a PDE characterisation which can be exploited for computational purposes. The key challenge is thus to understand the perturbations of the nonlinear map $f\mapsto p_{t,f}$, which turns out to provide insight into the preceding question -- this is the content of Theorem \ref{first-der}. To understand the intuition behind the theorem, let us fix some perturbation $h\in C^2(\bar{\mathcal O})$ such that $f+h\in \mathcal F$. Then, subtracting the PDE (\ref{Eq:FundamentalPDE}) for $f+h$ and for $f$ yields immediately that the difference $w(t,x):=p_{t,f+h}(y,x)-p_{t,f}(y,x)$ solves (again, informally)
\begin{equation*}
%\label{Eq:FundamentalSol}
\begin{cases}
    (\partial_t - \mathcal L_f)w (t,x)  = \nabla \cdot (h\nabla p_{t,f+h}(y,\cdot))(x), 
    & \textnormal{for}\ (t,x)\in (0,\infty)\times \Ocal,\\
    \partial_\nu w (t,x)= 0, 
    & \textnormal{for}\  (t,x)\in (0,\infty)\times \partial\Ocal,\\
    w(0,x)= 0,& \textnormal{for}\  x\in \Ocal,
\end{cases}
\end{equation*}
which is another instance of the heat equation, now with an inhomogeneity and with zero initial conditions. A natural candidate for the linearisation (in $h$) of the right hand side is $\nabla \cdot (h\nabla p_{t,f+h}(y, \cdot)) \approx \nabla \cdot (h\nabla p_{t,f}(y, \cdot))$, and thus $(\partial_t-\mathcal L_f)^{-1} [\nabla \cdot (h\nabla p_{t,f}(y, \cdot))]$ in turn provides a natural candidate for the linearisation of the transition densities. Here, we have written $(\partial_t-\mathcal L_f)^{-1}$ to informally denote the linear `solution operator' to an inhomogeneous heat equation with zero initial condition, which under suitable regularity conditions is given by the variation-of-constants formula $(\partial_t-\mathcal L_f)^{-1} [\nabla \cdot (h\nabla p_{t,f}(y, \cdot))]= \int_0^t P_{t-s,f}[\nabla \cdot (h\nabla p_{s,f}(y, \cdot))]ds$ -- see e.g.~Chapter 4 of \cite{lunardi}.

Making the above argument rigorous is technically delicate due to the singularity of $\delta_y(\cdot)$ and of the source term $\nabla \cdot (h\nabla p_{t,f})$ for $t\to 0$, which makes the standard parabolic regularity theory (e.g.~from \cite{lunardi}) not directly applicable. Thus, one needs to clarify in which sense the above PDEs hold, and whether existence and uniqueness can be guaranteed suitably for $(\partial_t-\mathcal L_f)^{-1}$ to be well-specified. Generalising a regularisation technique developed in \cite{W19} (in a related one-dimensional model), we accomplish this in the ensuing theorem for dimensions $d\le 3$, proving a variation-of-constants representation for the linearisation of $f\mapsto p_{t,f}$. For $x,y\in\Ocal$ and $D>0$ fixed, define the operator
\[ 
    \Phi(f) \equiv \Phi_{D,x,y}(f) := p_{D,f}(x,y), 
    \qquad f\in\Fcal,
\]
where $\Fcal$ is given by \eqref{Eq:ParamSpace}.
Note that $\Phi$ depends nonlinearly on $f$.%, cf.~eq.~\eqref{Eq:FundSol}. 

\begin{theorem}\label{first-der}
Suppose that $d\le 3$, that $D>0$ and fix any $x,y\in \mathcal O$. %and fix any $x,y\in \mathcal O$. For the parameter space $\mathcal F\subseteq C^2(\bar{\mathcal O})$ from (\ref{Eq:ParamSpace}), denote the transition density map by
%\[ \Phi:\mathcal F\to [0,\infty),~~~~~~ \Phi[f]=p_{D,f}(x,y).\] 
Then, the Fr\'echet derivate of $\Phi$ at $f\in\mathcal F$ is given by the following linear operator
\begin{equation}
\label{Eq:FrechDeriv}
   D\Phi_f: C^2(\bar{\mathcal O})\to \R, ~~~~~~ D\Phi_f [h] := \int_0^{D} P_{D-s,f}
    \big[ \nabla\cdot(h\nabla p_{s,f}(x,\cdot))\big](y) ds.
\end{equation}
More specifically, for any $R>0$ and $\kappa>0$, there exist $\zeta>0$ and $C>0$ such that for any $h\in C^2(\bar{\mathcal O})$ with $f+h \in \mathcal F$ and $\max \big\{\|f\|_{C^{1+\kappa}},\|f+h\|_{C^{1+\kappa}}\big\} \le R$,
\begin{equation}
\label{Eq:FrechEstimate}
    \frac{\big|\Phi(f+h)-\Phi(f)-D\Phi_f[h]\big|}{\|h\|_{C^1}}\le C \|h\|_{C^1}^{\zeta}=o(\|h\|_{C^1}). 
\end{equation}
Here, $C\equiv C(\mathcal O, d, \fmin, \kappa, R,D)$ can be chosen independently of $x,y\in \mathcal O$ and $f$, $h$ as above.
\end{theorem}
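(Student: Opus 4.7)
The plan is to identify the Frechét derivative by formally linearising the PDE satisfied by the difference of fundamental solutions and then to quantify the resulting remainder via a careful regularisation near $t=0$. Concretely, I would set $w_h(t,\cdot) := p_{t,f+h}(y,\cdot) - p_{t,f}(y,\cdot)$ and, by subtracting the PDEs \eqref{Eq:FundamentalPDE} for $f+h$ and $f$, observe that $w_h$ solves
\[ (\partial_t - \mathcal L_f) w_h(t,\cdot) = \nabla \cdot \big(h\, \nabla p_{t,f+h}(y,\cdot)\big), \qquad w_h(0,\cdot) = 0, \]
with zero Neumann boundary conditions. Applying (informally) the variation-of-constants formula from \cite[Chapter 4]{lunardi} and evaluating at $x$ yields
\[ \Phi(f+h) - \Phi(f) = \int_0^D P_{D-s,f}\big[\nabla\cdot(h\, \nabla p_{s,f+h}(y,\cdot))\big](x)\, ds, \]
so the target expression $D\Phi_f[h]$ is precisely what arises after replacing $p_{s,f+h}$ by $p_{s,f}$ in the integrand; the remainder is then controlled by the difference $\nabla p_{s,f+h} - \nabla p_{s,f}$, which one expects to be of size $O(\|h\|_{C^1})$ by a further application of the same perturbation argument.

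The chief obstacle is that this outline is purely formal: both the initial datum $\delta_y$ and the source term $\nabla\cdot(h\nabla p_{s,f})$ are singular as $s\downarrow 0$ (with $|\nabla p_{s,f}(y,\cdot)|$ blowing up at rate $s^{-(d+1)/2}$ by standard Gaussian heat-kernel bounds), so the variation-of-constants formula and the parabolic regularity of \cite{lunardi} do not directly apply. To bypass this I would follow the regularisation scheme of \cite{W19}: for a small parameter $\epsilon = \epsilon(\|h\|_{C^1}) > 0$ to be chosen, invoke the Chapman--Kolmogorov identity
\[ p_{D,f}(x,y) = \int_{\mathcal O} p_{\epsilon,f}(y,z)\, p_{D-\epsilon,f}(z,x)\, dz, \]
and the analogue for $f+h$. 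For $s\ge\epsilon$ the initial datum $p_{\epsilon,f}(y,\cdot)$ is smooth and strictly positive, so the perturbation analysis for the factor $p_{D-\epsilon,\cdot}(\cdot,x)$ can be carried out within the classical parabolic framework and yields a quadratic remainder $O(\|h\|_{C^1}^2\,\epsilon^{-\beta})$ for some $\beta>0$.

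The remaining singular contribution from the short-time window $[0,\epsilon]$ would be estimated directly using uniform Gaussian upper bounds on $p_{s,f}$ and $\nabla p_{s,f}$ valid for $f\in\Fcal$ with $\|f\|_{C^{1+\kappa}}\le R$, producing a residual of size $O(\epsilon^{\gamma})$ for some $\gamma>0$. Balancing the two by optimising $\epsilon = \|h\|_{C^1}^{\alpha}$ for a suitable $\alpha\in(0,1)$ then gives a bound of the shape $O(\|h\|_{C^1}^{1+\zeta})$ for some $\zeta>0$ depending on $d, \kappa, \beta, \gamma$, which upon division by $\|h\|_{C^1}$ produces \eqref{Eq:FrechEstimate}. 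The dimension restriction $d\le 3$ enters through the Sobolev embedding $H^2(\mathcal O) \hookrightarrow C(\bar{\mathcal O})$, which converts the natural $L^2$-based elliptic/parabolic bounds into the pointwise estimates needed to control $\Phi(f+h) - \Phi(f)$ and the transition kernels at the fixed points $x,y$.

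Uniformity of the constant $C$ in $x, y\in\mathcal O$ and in $f,h$ with $\|f\|_{C^{1+\kappa}},\|f+h\|_{C^{1+\kappa}}\le R$ would be obtained by tracking every intermediate estimate through the heat-kernel Gaussian bounds, which depend only on $\mathcal O, d, \fmin, \kappa, R, D$. The hardest step I anticipate is the quantitative control of the Duhamel iteration near $s=0$: making the formal linearisation rigorous requires combining the singular short-time behaviour of $\nabla p_{s,f}$ with a smoothing gain from $P_{D-s,f}$ of order at most $(D-s)^{-1/2}$, and showing that the resulting integral over $[0,D]$ is finite and polynomially controlled in $\|h\|_{C^1}$. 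Once this delicate interplay is established, the Frechét differentiability and the explicit form of $D\Phi_f$ follow at once from the variation-of-constants representation.
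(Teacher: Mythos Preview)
Your high-level strategy---formal linearisation via Duhamel, regularisation near $t=0$, and balancing the regularisation parameter against $\|h\|_{C^1}$---is correct and matches the paper. However, the paper does not regularise via Chapman--Kolmogorov with the $f$-dependent kernel $p_{\epsilon,f}$; it mollifies by the \emph{$f$-independent} reflected-Brownian-motion kernel $\varphi_\delta$, setting $p^\delta_{t,f}(x,y):=P_{\delta,0}[p_{t,f}(x,\cdot)](y)=P_{t,f}[\varphi_\delta(\cdot,y)](x)$, so that $u^\delta_f$ and $u^\delta_{f+h}$ share the same initial condition $\varphi_\delta(\cdot,y)$. Their difference then has zero initial data and Lunardi's optimal regularity in the weighted spaces $C^\alpha_{\alpha+\mu}((0,T];H^2_N(\Ocal))$ applies cleanly, yielding $\|R^\delta_1[h]\|\lesssim \|h\|_{C^1}^2\delta^{-\gamma}$. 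Your Chapman--Kolmogorov split instead forces you to control the short-time difference $p_{\epsilon,f+h}-p_{\epsilon,f}$, which is precisely the singular object you are trying to regularise away; this reintroduces the difficulty rather than removing it.

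More seriously, your heuristic for the singular Duhamel integral does not close: combining a pointwise gradient bound $|\nabla p_{s,f}|\lesssim s^{-(d+1)/2}$ with a semigroup smoothing gain of order $(D-s)^{-1/2}$ produces an integrand behaving like $s^{-(d+1)/2}$ near $s=0$, which is not integrable for any $d\ge 1$. The paper works instead in $L^2$-based norms: from $\|p_{s,f}(\cdot,y)\|_{L^2}\lesssim s^{-d/4}$ and $\|p_{s,f}(\cdot,y)\|_{H^2}\lesssim s^{-(d/2+2)}$ it interpolates to an exponent $\gamma(\alpha)=d/4+\alpha(2+d/4)$ and chooses $\alpha>0$ small enough that $\gamma<1$, which is possible precisely when $d\le 3$. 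Pointwise control at the fixed $x,y$ is then recovered a posteriori via the Sobolev embedding $H^2(\Ocal)\hookrightarrow C(\bar\Ocal)$ together with a duality estimate $\|P_{t,f}\varphi\|_{L^\infty}\lesssim \|\varphi\|_{(H^2_N)^*}$ for $t$ bounded away from zero. This $L^2$-interpolation mechanism---not pointwise Gaussian estimates---is what yields both the integrability of the singular integral and the dimension restriction $d\le 3$.
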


%\begin{theorem}\label{first-der}
%    Suppose $d\le 3$ and fix $\eta>0$ a small constant. Moreover, let us fix $x,y\in \mathcal O$ and denote the transition probabilities by $\Phi(f)=p_{f,D}(x,y)$, for any $f\in C^2(\bar{\mathcal O})$ with $f\ge \fmin >0$. Given any such $f,x,y$, we further define the linear operator acting on functions $h\in C^2(\mathcal O)$,
%    \[D\Phi_f [h] := \int_0^{D} P_{D-s,f}\big[ \nabla\cdot(h\nabla p_{s,f}(x,\cdot))\big](y)  ds. \] 
%    Then, for any $h\in C^2(\bar{\mathcal O})$ with $f+h>\fmin $, there exists $C(\mathcal O, d, \alpha, \fmin )$ and some $\omega <1$, $\zeta>1$ such that
%    \[ \Big|\Phi(f+h)-\Phi(f)-D\Phi_f[h]\Big|\le C \max\big\{1, \|f\|_{C^{1+\eta}}^\omega,\|f+h\|_{C^{1+\eta}}^\omega\big\}\|h\|_{C^1}^{\zeta}. \]
%    These constants can be chosen independently of $x,y\in \mathcal O$ and $f,h$ as above.
%\end{theorem}

%

The preceding theorem states that the map $\Phi$ is Fréchet differentiable with respect to $\|\cdot\|_{C^1}$, with derivative at $f\in\mathcal F$ identified by the linear operator \eqref{Eq:FrechDeriv}; the notation $D\Phi_f[\cdot]$ refers to the fact that $D\Phi_f[\cdot]$ is the (unique) bounded linear operator approximating $\Phi\equiv \Phi_{D,x,y}$ locally at $f$. In particular, this is implied by the remainder estimate \eqref{Eq:FrechEstimate}, which holds uniformly for $f$ and $h$ in balls of the H\"older space $C^{1+\kappa}(\Ocal)$. We refer to \cite[Chapter 5]{E10} for the definition of the norms $\|\cdot\|_{C^1}$ and $\|\cdot\|_{C^{1+\kappa}}$. The proof of the result can be found in Section \ref{sec-gradient-pf} of the Supplement \cite{giordanowangsupplement}.
Note that our derivative is obtained `pointwise' in $x,y\in \mathcal O$, thus rigorously providing a gradient formula for $L_n(\theta)$ conditional on any data $X^{(n)}$ rather than just in `quadratic mean', a weaker regularity condition in terms of which several key results from asymptotic frequentist statistical theory  are formulated \cite{V98}. Differentiability in quadratic mean is, in particular, implied by Theorem \ref{first-der}.
%In the proof, we also devote a particular attention to deriving the needed regularity estimates in the H\"older scale (as opposed to the more commonly used Sobolev one), which is instrumental in identifying the Frechét derivative \eqref{Eq:FrechDeriv} with respect to `uniform norms'.
%In turn, this gives us a principled basis to approach the construction of gradient-based statistical algorithms.
The condition $d\le 3$ is crucially required in several places in the proofs e.g.~to relate pointwise with $L^2$-type norms via Sobolev embeddings $\|\cdot\|_\infty\lesssim \|\cdot\|_{H^2}$.

\begin{comment}
\begin{remark}[H\"older regularity]
In fact, one can show by the same proof techniques that we employ for Theorem \ref{first-der}, that the gradient $D\Phi_f$ is H\"older continuous (with respect to the operator norm), that is, for some exponent $\zeta>0$ and any $R>0$ there exists some constant $C>0$ such that for all $f,g,h$ with $f,f+g\in \mathcal F,~h\in C^{1+\eta}$ and $\max \{\|f\|_{C^{1+\zeta}},\|g\|_{C^{1+\zeta}},\|h\|_{C^{1+\zeta}}\big\}\le R $,
\[  \big|D\Phi_{f}[h] - D\Phi_{f+g}[h]\big| \le C \|g\|_{C^1}^\zeta \|h\|_{C^1}.  \]
Such regularity statements for the gradient are essential for understanding the error incurred by Euler discretisation in our algorithms below, and thus may be important for future work. See Theorem \ref{thm:holder} below for the precise statement and proof.
\end{remark}
\end{comment}

In fact, by the same proof techniques that we employ for Theorem \ref{first-der}, one can show that the Frechét derivative $D\Phi_f$ is H\"older continuous (with respect to the operator norm). Such regularity statements for gradients are essential for understanding the discretisation error incurred by the algorithms constructed below, and thus may be important for future work. The proof is presented in Section \ref{Sec:ProofHolder}.

\begin{theorem}\label{thm:holder} Assume the setting of Theorem \ref{first-der} and let $R>0,~\kappa \in (0,1)$. Then, there is some $\zeta\in (0,1)$ (independent of $R,\kappa$) and some $C>0$ such that for all $f,g,h$ with $f,f+g\in \mathcal F$, $h\in C^{1+\kappa}(\Ocal)$ as well as $\max \{\|f\|_{C^{1+\kappa}},\|g\|_{C^{1+\kappa}},\|h\|_{C^{1+\kappa}}\big\}\le R $,
\[  \big|D\Phi_{f+g}[h] - D\Phi_{f}[h]\big| \le C \|h\|_{C^1}\|g\|_{C^1}^\zeta .  \]
\end{theorem}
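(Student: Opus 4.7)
The plan is to reuse the variation-of-constants framework underpinning Theorem \ref{first-der}, but now applied to the \emph{difference} of two linearisations rather than to the remainder in linearisation. Starting from \eqref{Eq:FrechDeriv} and inserting $\pm P_{D-s,f}[\nabla\cdot(h\nabla p_{s,f+g}(x,\cdot))](y)$ inside the integral, I decompose
\[ D\Phi_{f+g}[h] - D\Phi_f[h] = I_1 + I_2, \]
where
\begin{align*}
I_1 &:= \int_0^{D} P_{D-s,f}\big[\nabla\cdot\big(h\,\nabla(p_{s,f+g}(x,\cdot) - p_{s,f}(x,\cdot))\big)\big](y)\, ds, \\
I_2 &:= \int_0^{D} \big(P_{D-s,f+g} - P_{D-s,f}\big)\big[\nabla\cdot(h\,\nabla p_{s,f+g}(x,\cdot))\big](y)\, ds.
\end{align*}
The factor $h$ sits linearly in both integrands, so the $\|h\|_{C^1}$-dependence of the final bound will be linear throughout; the $\|g\|_{C^1}^\zeta$-factor must come from the first-order perturbation of the data $p_{s,f+g}-p_{s,f}$ (in $I_1$) and from the first-order perturbation of the semigroup $P_{D-s,f+g}-P_{D-s,f}$ (in $I_2$).

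For $I_1$, I would control $w_s := p_{s,f+g}(x,\cdot)-p_{s,f}(x,\cdot)$ by noting that it solves the inhomogeneous Cauchy problem $(\partial_t - \mathcal L_f)w = \nabla\cdot(g\,\nabla p_{\cdot,f+g}(x,\cdot))$ with $w(0,\cdot)=0$, which by a further application of the variation-of-constants formula (exactly as in the proof of Theorem \ref{first-der}) yields the representation $w_s = \int_0^s P_{s-r,f}[\nabla\cdot(g\nabla p_{r,f+g}(x,\cdot))]\, dr$. Parabolic smoothing estimates uniform on $C^{1+\kappa}$-balls of radius $R$ then bound the relevant Sobolev norm of $w_s$ by $\|g\|_{C^1}$ times a singular weight in $s$. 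For $I_2$ I would use the analogous Duhamel identity for the semigroup perturbation,
\[ (P_{D-s,f+g} - P_{D-s,f})[\psi] = \int_0^{D-s} P_{D-s-r,f+g}\big[\nabla\cdot(g\,\nabla P_{r,f}\psi)\big]\, dr, \]
applied to $\psi=\nabla\cdot(h\nabla p_{s,f+g}(x,\cdot))$, thereby isolating the $\|g\|_{C^1}$-factor through the inner gradient.

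The analytic ingredients are the same as in Theorem \ref{first-der}: the Sobolev embedding $\|\cdot\|_\infty \lesssim \|\cdot\|_{H^2}$ valid in $d\le 3$, parabolic smoothing of the type $\|P_{t,f}\|_{L^2\to H^2}\lesssim t^{-1}$, pointwise heat-kernel upper and gradient bounds, and the uniform ellipticity of $\mathcal L_f, \mathcal L_{f+g}$ on the $C^{1+\kappa}$-ball of radius $R$. After unpacking the double time integrals, each term is manifestly of the form $\|h\|_{C^1}\|g\|_{C^1}$ multiplied by a product of negative powers of $s$, $r$, $D-s-r$, which is \emph{not} integrable if one insists on retaining the full linear-in-$\|g\|_{C^1}$ improvement up to the singular endpoints.

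The main obstacle is precisely this cascade of singularities at $s\downarrow 0$, $r\downarrow 0$ and $r\uparrow D-s$, where the Dirac-like initial condition makes pointwise $C^1$-type norms of $p_{s,f}(x,\cdot)$ blow up as a negative power of time. I would deal with this via the regularisation step of Wang \cite{W19} already exploited in Theorem \ref{first-der}: split each time integral at a threshold $\tau>0$, bound the contribution from the ``bad'' region $[0,\tau]\cup[D-\tau,D]$ crudely using $L^1$-contractivity and total-mass conservation of the semigroups (which sacrifices the $\|g\|_{C^1}$-improvement on this piece but yields a contribution of order $\tau^\alpha\|h\|_{C^1}$ for some $\alpha>0$), and apply the full singular estimates on the interior, producing a contribution of order $\tau^{-\beta}\|h\|_{C^1}\|g\|_{C^1}$. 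Optimising $\tau$ as a positive power of $\|g\|_{C^1}$ balances the two pieces and yields the Hölder exponent $\zeta = \alpha/(\alpha+\beta)\in(0,1)$, with the constant $C$ depending on $\mathcal O, d, \fmin, \kappa, R, D$ as in the statement. Uniformity in $x,y\in\bar{\mathcal O}$ is inherited, as in Theorem \ref{first-der}, from the corresponding uniform heat-kernel bounds.
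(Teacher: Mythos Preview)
Your decomposition $D\Phi_{f+g}[h]-D\Phi_f[h]=I_1+I_2$ and the subsequent balancing argument are sound in outline, but the route differs from the paper's. The paper does \emph{not} work directly with the singular expressions \eqref{Eq:FrechDeriv}; instead it reuses the $\delta$-regularisation of the initial condition already built for Theorem~\ref{first-der}. Concretely, it introduces the regularised derivatives $D\Phi_f^\delta[h]$ (obtained by replacing $p_{s,f}(x,\cdot)$ by $P_{s,f}[\varphi_\delta(\cdot,y)]$) and splits
\[
D\Phi_{f+g}[h]-D\Phi_f[h] = \big(D\Phi_{f+g}[h]-D\Phi_{f+g}^\delta[h]\big)+\big(D\Phi_f^\delta[h]-D\Phi_f[h]\big)+\big(D\Phi_{f+g}^\delta[h]-D\Phi_f^\delta[h]\big).
\]
The first two terms are $O(\|h\|_{C^1}\delta^\eta)$ by the estimates for term~$IV$ in the proof of Theorem~\ref{first-der}. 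For the third term the paper observes that $w(t):=(\partial_t-\mathcal L_{f+g})^{-1}[\mathcal L_h u^\delta_{f+g}](t)-(\partial_t-\mathcal L_f)^{-1}[\mathcal L_h u^\delta_f](t)$ satisfies the PDE
\[
(\partial_t-\mathcal L_f)w=\mathcal L_h\big(u^\delta_{f+g}-u^\delta_f\big)+\mathcal L_g\,(\partial_t-\mathcal L_{f+g})^{-1}[\mathcal L_h u^\delta_{f+g}],
\]
and bounds $\|w(D)\|_{L^\infty}$ via the abstract parabolic regularity in the weighted H\"older spaces $C^\alpha_{\alpha+\mu}((0,T];H^2_N)$ together with the lemmas already proved (in particular the bound $\|u^\delta_{f+g}-u^\delta_f\|_{C^\alpha_{\alpha+\mu}((0,T];H^2_N)}\lesssim \|g\|_{C^1}\delta^{-\gamma}$). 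This gives $O(\|h\|_{C^1}\|g\|_{C^1}\delta^{-\gamma})$, and choosing $\delta=\|g\|_{C^1}$ produces $\zeta=\min\{\eta,1-\gamma\}$.

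The two right-hand-side terms in the paper's PDE for $w$ correspond morally to your $I_1$ and $I_2$, so the underlying analytic content is the same. The difference is the regularisation mechanism: the paper smooths the \emph{initial condition} via $\varphi_\delta$ and tracks the blow-up through the weighted-space machinery of Lunardi, whereas you propose to split the \emph{time integral} at $\tau$ and estimate the near-endpoint contributions crudely. The paper's route has the advantage that all the required estimates are already in place from the proof of Theorem~\ref{first-der}; your time-splitting would work but requires separate control of the unregularised $\|p_{s,f+g}(x,\cdot)-p_{s,f}(x,\cdot)\|_{H^2}$ for $s\ge\tau$ and of the semigroup perturbation on the good region, which is feasible but not as cleanly modular.
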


%
%
%

%%%%%%%%%%%%%%%%%%%%%%%%%%%%%%%%%%%%%%%%%%%%%%
\subsection{Reduction to elliptic eigenvalue problems}
By the divergence theorem, e.g.~\cite[p.~171]{davies1995spectral}, if $v,w\in H^2_N(\mathcal O)$ then
$$
    \langle \Lcal_fv,w\rangle_{L^2}
    =\int_\Ocal \nabla\cdot(f\nabla v)(x)w(x)dx
    =-\int_\Ocal f(x)\nabla v(x). \nabla w(x)dx
    = \langle v,\Lcal_fw\rangle_{L^2},
$$
which shows that $\Lcal_f$ is self-adjoint with respect to the inner product of $L^2(\Ocal)$. By a suitable application of the spectral theorem (e.g.~\cite[p.~582]{TI}), we  deduce the existence of an orthonormal system of eigenfunctions $(e_{f,j},\ j\ge 0)\subset L^2(\Ocal)$ and of associated (negative) eigenvalues $(\lambda_{f,j},\ j\ge 0)\subset[0,\infty)$ such that
\begin{equation}
\label{Eq:EigProb}
\begin{cases}
\Lcal_f e_{f,j}+\lambda_{f,j}e_{f,j}=0, &  \textnormal{on}\  \Ocal,\\
\partial_\nu e_{f,j}=0, &\textnormal{on}\  \partial\Ocal,
\end{cases}
\qquad j \ge 0.
\end{equation}
We will take throughout the increasing ordering $\lambda_{f,j}\le \lambda_{f,j'}$, $j\le j'$. Then it holds that $e_{f,0}= \vol(\Ocal)^{-1}$ is constant with corresponding eigenvalue $\lambda_{f,0}=0$, independently of $f$. For notational convenience, we shall take $\vol(\Ocal)=1$, so that $e_{f,0} =1$.  Also, by ellipticity, the first non-zero eigenvalue satisfies the `spectral gap' estimate $\lambda_{f,1}\ge c$ for some constant $c>0$ only depending on $\Ocal$ and $\fmin$. The eigenvalues diverge following Weyl's asymptotics $\lambda_{f,j}=O(j^{2/d})$ as $j\to\infty$, with multiplicative constants only depending on $\Ocal$, $\fmin$ and $\|f\|_{L^\infty}$. These facts follow similarly to the arguments for the Neumann-Laplacian (here corresponding to the case $f=1$) developed in \cite[p.~403f]{TI}, in view of the boundedness and the boundedness away from zero of $f$. For details, see \cite[Section 3]{nickl2024inference}.

Using this spectral analysis of the generator, we can represent the action of the transition operator $P_{t,f}[v]$ from \eqref{Eq:TrOp} on any `initial condition' $v\in L^2(\Ocal)$ by
%the solution of the heat equation \eqref{Eq:HeatEq} with any initial condition $u(0,x)=v(x)$, $x\in\Ocal$, for $v\in L^2(\Ocal)$, given by the action  of the , can be characterised as
\begin{equation}
\label{Eq:SpectTrOp}
    %u(t,x)=
    P_{t,f}[v](x)=
    \langle v,1\rangle_{L^2}
    +\sum_{j=1}^\infty
    e^{-\lambda_{f,j}t}\langle v,e_{f,j}\rangle_{L^2}e_{f,j}(x),
    \qquad t\ge 0,\qquad x\in\Ocal.
\end{equation}
Accordingly, the transition densities \eqref{Eq:TRDFs}, which form the integral kernels of $P_{t,f}$, satisfy%(and thus, by the fundamental solutions of \eqref{Eq:HeatEq}),
\begin{equation}
\label{Eq:FundSol}
    p_{t,f}(x,y)=
    1
    +\sum_{j=1}^\infty
    e^{-\lambda_{f,j}t}e_{f,j}(x)e_{f,j}(y),
    \qquad t\ge 0,\qquad x,y\in\Ocal.
\end{equation}
We conclude that for any $f\in \mathcal F$, if we have numerical access to the eigenpairs $(e_{f,j},\lambda_{f,j})$, the likelihood $L_n(f)$ may be evaluated using the spectral formula
\begin{equation}
\label{Eq:SpectrLikelihood}
    L_n(f)=\prod_{i=1}^n\Bigg[ 1
    +\sum_{j=1}^\infty
    e^{-\lambda_{f,j}D}e_{f,j}(X_{(i-1)D})e_{f,j}(X_{iD})
    \Bigg],
    \qquad f\in\Fcal.
\end{equation}

Upon closer inspection, we can also derive a spectral representation of the Frechét derivatives $D\Phi_f$ from Theorem \ref{first-der}. Indeed, since 
%Computationally, we shall handle the Frechét derivative \eqref{Eq:FrechDeriv} with similar tools to those underpinning the routine \eqref{Eq:NumLikelihood} for the evaluation of the likelihood,
the transition density functions $p_{s,f}$ and the transition operators $P_{D-s,f}$ can be expanded with respect to the same eigenpairs $\{(e_{f,j},\lambda_{f,j}), \ j\ge 0\}$ of $\Lcal_f$, we obtain a convenient double series expansion of the integrand $P_{D-s,f}
\big[ \nabla\cdot(h\nabla p_{s,f}(x,\cdot))\big](y)$ in \eqref{Eq:FrechDeriv}. This further allows to separate the spatial and time dependency, leading to a closed form expression for the integration in time, which avoids potential numerical instability caused by the singular behaviour of the integrand for $s\to0$. %which allows to concisely capture the dependence on the variable of integration $s$ and to derive a closed form expression for the resulting integral.
In summary, the following spectral characterisation of the linear operator $D\Phi_f$ is obtained; see Section \ref{Sec:ProofSpectrGrad} for the proof.

\begin{cor}\label{Cor:SpectrGrad}
For any $f\in C^2(\bar{\mathcal O})$ satisfying $\inf_{x\in\Ocal}f(x)\ge \fmin>0$, let $(e_{f,j},\ j\ge 0)\subset L^2(\Ocal)$ be the orthonormal system of the eigenfunctions of the elliptic differential operator in divergence form $\Lcal_f[\cdot]=\nabla\cdot(f\nabla[\cdot])$, with associated eigenvalues $(\lambda_{f,j},\ j\ge 0)\subset[0,\infty)$, solving \eqref{Eq:EigProb}. Then, under the assumptions of Theorem \ref{first-der},
\begin{equation}
\label{Eq:SpectrDeriv}
\begin{split}
    D\Phi_f[h]
    =&
    \sum_{j,j'=1}^\infty %e^{- \lambda_{f,j}D}
    C_{f,j,j'}
    \langle h , \nabla e_{f,j}\cdot\nabla e_{f,j'}\rangle_{L^2} e_{f,j'}(x)e_{f,j}(y),\\
    C_{f,j,j'}:=&
    \begin{cases}
    -D e^{- \lambda_{f,j}D}, & \lambda_{f,j}=\lambda_{f,j'}\\
    (e^{-\lambda_{f,j}D}-e^{- \lambda_{f,j'}D})/(\lambda_{f,j} - \lambda_{f,j'}), & \textnormal{otherwise}.
    \end{cases}
\end{split}
\end{equation}
\end{cor}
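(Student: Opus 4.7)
The plan is to insert the spectral expansion \eqref{Eq:FundSol} of $p_{s,f}(x,\cdot)$ into the variation-of-constants formula \eqref{Eq:FrechDeriv} from Theorem \ref{first-der}, and then use the spectral representation \eqref{Eq:SpectTrOp} of the transition operator $P_{D-s,f}$. Formally, differentiating \eqref{Eq:FundSol} in the second variable gives
$$\nabla_z p_{s,f}(x,z) = \sum_{j=1}^\infty e^{-\lambda_{f,j}s} e_{f,j}(x) \nabla e_{f,j}(z),$$
so that $\nabla\cdot (h\nabla p_{s,f}(x,\cdot))(z) = \sum_{j\ge 1} e^{-\lambda_{f,j}s} e_{f,j}(x) \nabla\cdot(h\nabla e_{f,j})(z)$. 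Applying $P_{D-s,f}$ term-by-term and using \eqref{Eq:SpectTrOp}, one is led (formally) to the double expansion
$$P_{D-s,f}\big[\nabla\cdot(h\nabla p_{s,f}(x,\cdot))\big](y) = \sum_{j,j'\ge 1} e^{-\lambda_{f,j}s} e^{-\lambda_{f,j'}(D-s)} e_{f,j}(x)\langle \nabla\cdot(h\nabla e_{f,j}), e_{f,j'}\rangle_{L^2} e_{f,j'}(y),$$
where the $j'=0$ term vanishes since $\int_\Ocal \nabla\cdot(h\nabla e_{f,j})\,dz = \int_{\partial\Ocal} h\partial_\nu e_{f,j}\,d\sigma = 0$ by the Neumann boundary condition on $e_{f,j}$.

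Next I would apply the divergence theorem (boundary contributions vanishing for the same reason) to rewrite
$$\langle \nabla\cdot(h\nabla e_{f,j}), e_{f,j'}\rangle_{L^2} = -\langle h\nabla e_{f,j},\nabla e_{f,j'}\rangle_{L^2} = -\langle h, \nabla e_{f,j}\cdot\nabla e_{f,j'}\rangle_{L^2}.$$
Integrating the display over $s\in[0,D]$ and swapping sum and integral then produces the claimed series, with the coefficient $C_{f,j,j'}$ coming from the elementary evaluation $\int_0^D e^{-\lambda_{f,j}s}e^{-\lambda_{f,j'}(D-s)}ds = De^{-\lambda_{f,j}D}$ when $\lambda_{f,j}=\lambda_{f,j'}$, and $\int_0^D e^{-\lambda_{f,j}s}e^{-\lambda_{f,j'}(D-s)}ds = (e^{-\lambda_{f,j'}D}-e^{-\lambda_{f,j}D})/(\lambda_{f,j}-\lambda_{f,j'})$ otherwise; taking the minus sign from the integration by parts into account reproduces exactly the formula for $C_{f,j,j'}$ in \eqref{Eq:SpectrDeriv}.

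The main obstacle, and really the only non-routine part, is justifying the interchange of the two infinite sums and the $s$-integral, together with the interchange of the divergence and the $j$-sum. The danger is at $s\to 0^+$, where $p_{s,f}(x,\cdot)$ degenerates to $\delta_x$ and the expansion in \eqref{Eq:FundSol} does not converge in strong pointwise or $C^2$ sense. I would handle this by first truncating the $j$-sum at level $N$: for the truncated density $p^N_{s,f}(x,z):= 1+\sum_{j=1}^N e^{-\lambda_{f,j}s}e_{f,j}(x)e_{f,j}(z)$, which is smooth and bounded in $s\in[0,D]$, all manipulations above are rigorously justified by standard calculus and Fubini, yielding the identity with the $(j,j')$-sum replaced by its $N\times N$ truncation. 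To pass to the limit $N\to\infty$, I would then establish absolute convergence of the full double series under the standing assumptions. Using the elementary bound $|C_{f,j,j'}|\le D\,e^{-\min(\lambda_{f,j},\lambda_{f,j'})D}$, the energy identity $\|\nabla e_{f,j}\|_{L^2}^2 = \lambda_{f,j}$, Cauchy--Schwarz to estimate $|\langle h, \nabla e_{f,j}\cdot\nabla e_{f,j'}\rangle_{L^2}|\le \|h\|_{L^\infty}\sqrt{\lambda_{f,j}\lambda_{f,j'}}$, and Sobolev/elliptic regularity bounds of the form $\|e_{f,j}\|_{L^\infty}\lesssim \lambda_{f,j}^{\alpha}$ (available for $d\le 3$ as used throughout Section \ref{Sec:PDEAppraoch}), the double series is dominated by $\sum_{j,j'}\lambda_{f,j}^{1/2+\alpha}\lambda_{f,j'}^{1/2+\alpha}e^{-\min(\lambda_{f,j},\lambda_{f,j'})D}$, which converges thanks to Weyl's asymptotics $\lambda_{f,j}\asymp j^{2/d}$ and the exponential factor with $D>0$ fixed. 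This absolute convergence simultaneously justifies the Fubini step and the $N\to\infty$ passage, completing the identification of $D\Phi_f[h]$ with the series in \eqref{Eq:SpectrDeriv}.
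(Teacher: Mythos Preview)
Your proposal is correct and follows essentially the same route as the paper's proof: spectrally expand $p_{s,f}$ via \eqref{Eq:FundSol}, apply $P_{D-s,f}$ via \eqref{Eq:SpectTrOp}, integrate by parts using the Neumann boundary condition to obtain $\langle \nabla\cdot(h\nabla e_{f,j}),e_{f,j'}\rangle_{L^2}=-\langle h,\nabla e_{f,j}\cdot\nabla e_{f,j'}\rangle_{L^2}$, and evaluate the elementary $s$-integral. The only cosmetic difference is that the paper expands $P_{D-s,f}$ first and $p_{s,f}$ second (so your indices $j,j'$ are swapped relative to the paper's), and the paper writes out the integration-by-parts step via Green's first identity applied to $\langle h\Delta e_{f,j'},e_{f,j}\rangle_{L^2}$ rather than directly to the divergence. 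Your truncation-and-absolute-convergence argument for justifying the interchanges is actually more careful than the paper, which presents the computation formally without addressing this point.
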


%%%%%%%%%%%%%%%%%%%%%%%%%%%%%%%%%%%%%%%%%%%%%%
\subsection{Numerical PDE methods}

%

%\begin{figure}
%\includegraphics[width=4.25cm,height=3.25cm]{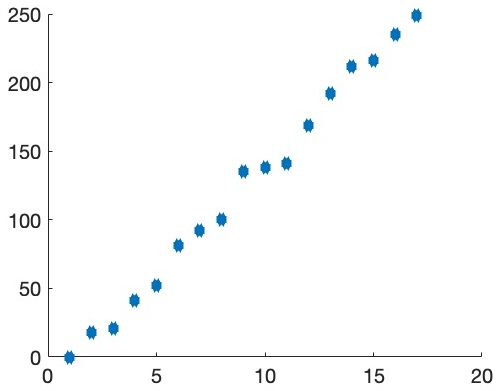}
%\includegraphics[width=4.7cm]{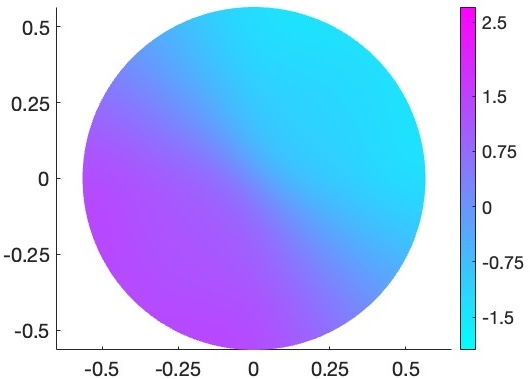}
%\includegraphics[width=4.7cm]{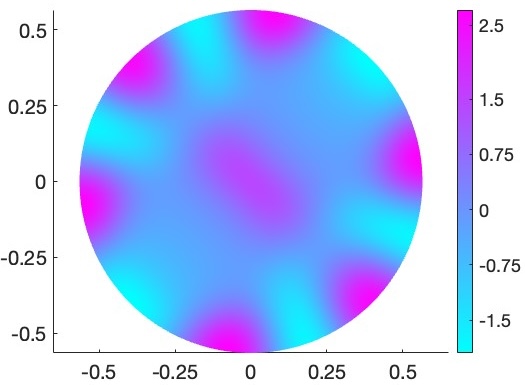}
%\caption{\matteo{Left: numerical approximation of the eigenvalues $\lambda_{f,j}$ associated to the conductivity function displayed in Figure \ref{Fig:StatProbl} (left). The eigenvalues exhibit a linear growth as expected from Weyl's asymptotics in two-dimensional domains. Centre and right, respectively: numerical approximations of the first and  $J^{\textnormal{th}}$ non-constant eigenfunctions $e_{f,1}$ and $e_{f,J}$, with $J=17$.}}
%\label{Fig:Eigenpairs}
%\end{figure}

While the eigenpairs $(e_{f,j},\lambda_{f,j})$ are generally not available in closed form, the elliptic eigenvalue problem \eqref{Eq:EigProb} has been widely investigated in the literature on numerical techniques for PDEs, with foundational work by Vainikko \cite{V64} and later landmark contributions in \cite{bramble1973rate,chatelin1973convergence,descloux1978spectral,babuvska1989finite,knyazev2006new} among the others. We further refer to the monograph \cite{babuvska1991eigenvalue} and to the recent survey article \cite{boffi2010finite} for overviews. Specifically, the problem can be tackled with efficient and reliable Galerkin methods (e.g., of finite element type), returning approximations $\{(e_{f,j}^{(\varepsilon)},\lambda_{f,j}^{(\varepsilon)}), \ 1 \le j \le J \}$ of the first $J\in\N$ non-constant eigenfunctions. The superscript $(\varepsilon)$ is used as a proxy for the parameter $\varepsilon>0$ governing the reconstruction quality of the employed numerical method, in the sense that smaller values of $\varepsilon$ yield smaller approximations errors, with convergence when $\varepsilon\to0$ (cf.~Remark \ref{Rem:ApprErrors} below). For example, in standard finite element methods based on piece-wise polynomial functions defined over a triangular mesh covering the domain, $\varepsilon$ is typically chosen to be an upper bound for the side length of the mesh elements.

Based on such numerical techniques, the computation of the transition density functions and the likelihood via the spectral characterisations \eqref{Eq:FundSol} and \eqref{Eq:SpectrLikelihood} respectively can be concretely performed by replacing the eigenpairs with their approximations, and by truncating the series at level $J$, resulting in the simple routines
\begin{equation}
\label{Eq:NumTDF}
    p_{t,f}^{(\varepsilon)}(x,y):=1
    +\sum_{j=1}^J
    e^{-\lambda_{f,j}^{(\varepsilon)} t}e_{f,j}^{(\varepsilon)}(x)e^{(\varepsilon)}_{f,j}(y),
    \qquad t\ge0,
    \qquad x,y\in\Ocal,
\end{equation}
\begin{equation}
\label{Eq:NumLikelihood}
    L_n^{(\varepsilon)}(f):=\prod_{i=1}^np_{D,f}^{(\varepsilon)}(X_{(i-1)D},X_{iD}),
    \qquad f\in\Fcal.
\end{equation}
%The latter can then be used to efficiently implement a large class of likelihood-based statistical methods for inference on the conductivity function $f$, including Bayesian procedures with MCMC algorithms of Metropolis-Hastings type for posterior sampling, wherein at each iteration, for the computation of the acceptance probabilities, a single likelihood evaluation is required.
This is the likelihood approximation that we will employ in Section \ref{Sec:pCN} for the implementation of posterior sampling via the pCN method \cite{CRSW13}.

Turning to the gradient, for any fixed direction $h$, the Frechét derivative $D\Phi_f[h]$ can be efficiently computed according to the formula from Corollary \ref{Cor:SpectrGrad}, analogously truncating the double series at some level $J\in\N$, and replacing the eigenpairs with their numerical approximations.  %to obtain approximations $\{(e_{f,j}^{(\varepsilon)},\lambda_{f,j}^{(\varepsilon)}), \ j=1,\dots,J\}$ of the eigenpairs via numerical PDE techniques, as described in Section \ref{Sec:LikelihoodComp}.
For a stable computation of the two internal series, since for eigenvalues with multiplicities finite element methods generally return distinct approximations that differ by small amounts, the conditions $\lambda_{f,j'} = \lambda_{f,j}$ and $\lambda_{f,j'} \neq \lambda_{f,j}$ should be replaced by the requirements 
$|\lambda^{(\varepsilon)}_{f,j'} - \lambda^{(\varepsilon)}_{f,j}|\le \Tr$ and $|\lambda^{(\varepsilon)}_{f,j'} - \lambda^{(\varepsilon)}_{f,j}|> \Tr$, respectively, for a sufficiently small threshold $\Tr>0$ to be specified by the user. This results in the derivatives evaluation routine
\begin{equation}
\begin{split}
\label{Eq:NumDeriv}
    D\Phi_f^{(\varepsilon)}[h]
    =&
    \sum_{j,j'=1}^J C^{(\varepsilon)}_{f,j,j'}
     \langle h , \nabla e_{f,j}^{(\varepsilon)}\cdot\nabla e_{f,j'}^{(\varepsilon)}\rangle_{L^2} e_{f,j'}^{(\varepsilon)}(x)e_{f,j}^{(\varepsilon)}(y),\\
    C^{(\varepsilon)}_{f,j,j'}:=&
    \begin{cases}
    -De^{- \lambda_{f,j}^{(\varepsilon)}D}, & |\lambda^{(\varepsilon)}_{f,j'} - \lambda^{(\varepsilon)}_{f,j}|\le \Tr\\
(e^{-\lambda^{(\varepsilon)}_{f,j}D}-e^{-\lambda^{(\varepsilon)}_{f,j'}D} )/(\lambda^{(\varepsilon)}_{f,j} - \lambda^{(\varepsilon)}_{f,j'}), & \textnormal{otherwise},
    \end{cases}
\end{split}
\end{equation}
which will serve as a basis for the implementation of gradient-based statistical algorithms. In particular, upon discretising the parameter space $\Fcal$, the log-likelihood gradient can be derived from an application of the chain rule and the above derivative formulae, wherein the directions are identified by the `coordinates' in the chosen discretisation scheme -- see Section \ref{Sec:GradAlgo} below for details.% which will discuss and illustrate the implementation of the unadjusted Langevin algorithm for sampling from the posterior distributions arising from certain truncated Gaussian series priors, and of the gradient descent method for computing the associated MAP estimators.

\begin{remark}[Numerical approximation errors]\label{Rem:ApprErrors} The numerical routines \eqref{Eq:NumLikelihood} and \eqref{Eq:NumDeriv} entails two sources of approximation errors, arising, respectively, from the numerical solution of the elliptic eigenvalue problem \eqref{Eq:EigProb} and the truncation of the series appearing in \eqref{Eq:SpectrLikelihood} and \eqref{Eq:SpectrDeriv}. For the latter, explicit error bounds readily follow from Weyl's asymptotics, the available estimates for the norm of the eigenfunctions, and since $D>0$ is fixed. For instance, by Corollary 1 in \cite{nickl2024inference}, provided that $f$ lies in a Sobolev space $H^s(\Ocal)$ of sufficient smoothness $s > d$, the $j^{\textnormal{th}}$ series term of the numerical likelihood formula \eqref{Eq:NumLikelihood} satisfies, for arbitrarily small $\eta>0$ and for constants $c_1,c_2>0$ only depending on $\Ocal,d,\fmin$, and $\|f\|_{H^s}$,
$$
    e^{-\lambda_{f,j}D}e_{f,j}(X_{(i-1)D})e_{f,j}(X_{iD})
    \le c_1 e^{-c_2 D j^{2/d}}j^{1+\eta}
$$
for all $j$ large enough and all $i=1,\dots,n$, whereupon the tails of the series in \eqref{Eq:SpectrLikelihood} are seen to decay exponentially. Thus, depending on the application at hand and the magnitude of the time lag $D$ between consecutive observations, a relatively low truncation level $J$ in \eqref{Eq:NumLikelihood} may be expected to yield the desired accuracy level, cf.~Section \ref{sec:Num}.

Concerning the numerical solution of the elliptic eigenvalue problem \eqref{Eq:EigProb}, there is a wide literature developing error analyses for a variety of finite element methods; see \cite{boffi2010finite} and references therein. Among these, well-known results for the widespread approach based on piece-wise polynomial approximating functions over a triangulation of the domain, combined with the norm estimates in Corollary 1 in \cite{nickl2024inference}, assuming again that $f\in H^s(\Ocal)$ for some $s>d$, yield the error bound for the eigenvalues
\begin{equation}
\label{Eq:ErrBound}
    |\lambda_{f,j} - \lambda_{f,j}^{(\varepsilon)}|
    \le c_3 j^{c_4} \varepsilon^{c_5},
    \qquad j=1,\dots, J,
\end{equation}
with constants $c_3,c_4,c_5>0$ only depending on $\Ocal,d,\fmin$, and $\|f\|_{H^s}$, and where $\varepsilon$ is the (user-specified) maximal side length for the elements in the triangular mesh, cf.~\cite[Section 10.3]{babuvska1989finite}. Analogous bounds also holds for the approximation errors in Sobolev norms of the eigenfunctions. Note that the estimate \eqref{Eq:ErrBound} deteriorates as the index $j$ grows, due to the more pronounced oscillatory behaviour that the eigenfunctions tend to exhibit at higher frequencies. However, as observed earlier, in the presence of a fixed time lag $D$, only a small number $J$ of eigenpairs are generally needed to approximate the series in \eqref{Eq:SpectrLikelihood} and \eqref{Eq:NumDeriv} with high accuracy, so that we may expect the overall error resulting from the numerical solution of the eigenvalue problem \eqref{Eq:EigProb} by finite element methods to be small even for a relatively coarse triangular mesh. 
\end{remark}

\begin{remark}[Computational cost]\label{Rem:CompCost}
The numerical approximation of the eigenpairs 
%underpinning the proposed computational approach 
can be performed via off-the-shelf PDE solvers implemented in many mathematical and statistical software. Since, in light of Remark \ref{Rem:ApprErrors}, only a small number of eigenpairs is needed in practice, this is generally computationally inexpensive, at least
%for applications 
in low dimensional domains (including the `physical' cases $d=1,2,3$). 
%\matteo{For reference, the approximation of the first $J=17$ eigenpairs associated to the conductivity function $f$ displayed in Figure \ref{Fig:StatProbl} (left) required around $.1$ seconds on a MacBook Pro with M1 processor, using the finite element method implemented in MATLAB R2023a Partial Differential Equation Toolbox, based on a discretisation of the domain with an unstructured triangular mesh comprising 1981 nodes}.
For reference, in the numerical experiments presented in Section \ref{sec:Num}, the typical computation time for this operation was of the order of $.1$ seconds on a MacBook Pro with M1 processor, using the finite element method implemented in MATLAB R2023a Partial Differential Equation Toolbox, based on a discretisation of the domain with an unstructured triangular mesh comprising 1981 nodes.
Since the routines \eqref{Eq:NumLikelihood} and \eqref{Eq:NumDeriv} only require a single numerical solution of the eigenvalue problem \eqref{Eq:EigProb} (along with elementary operations),
%, and only further involves exponentiation, products and sums,
we then obtain an overall comparable computational cost, with no additional bottlenecks. In fact, for the numerical likelihood formula \eqref{Eq:NumLikelihood}, since handling a larger number of observations only implies a linear growth in the number of product terms, our proposed approach is scalable with respect to the sample size.
\end{remark}

%
%
%
%
%

%%%%%%%%%%%%%%%%%%%%%%%%%%%%%%%%%%%%%%%%%%%%%%
\section{Applications to statistical algorithms}\label{Sec:Algo}

We now turn to the problem of estimating the conductivity function $f$ from the low-frequency diffusion data $X^{(n)}$. Leveraging the novel approach developed in Section \ref{Sec:PDEAppraoch}, we gain direct access to the large algorithmic toolbox of likelihood-based nonparametric statistical inference, overcoming the need of computationally expensive data-augmentation techniques \cite{roberts2001inference,beskos2006exact,papaspiliopoulos2013data,van2017bayesian,schauer2017guided}. For illustration, we consider the following methods, within a Bayesian model with Gaussian priors:
\begin{itemize}
\item Gradient-free Metropolis-Hastings MCMC algorithms for posterior sampling;
\item Gradient-based posterior sampling methods of Langevin type;
\item Gradient-based optimisation techniques for the computation of the MAP estimates (i.e.~penalised MLE).
\end{itemize}

%
%
%

%%%%%%%%%%%%%%%%%%%%%%%%%%%%%%%%%%%%%%%%%%%%%%
\subsection{A nonparametric Bayesian approach with Gaussian priors}\label{Sec:GradAlgo}

We shall focus on nonparametric Bayesian procedures with prior distributions arising from Gaussian processes. These are among the most universally used priors on function spaces in applications, e.g.~\cite{Rasmussen2006gaussian,S10,nickl2023bayesian}, \cite[Chapter 7]{GN16} and \cite[Chapter 11]{GV17}, and in the estimation problem at hand they have recently been shown by Nickl \cite{nickl2024inference} to lead to `asymptotically consistent' posteriors that concentrate around the ground truth as the sample size increases. For concreteness, let us follow in this section the prior construction of \cite{nickl2024inference}; more general classes will be considered in Section \ref{App:MoreNum} of the Supplement \cite{giordanowangsupplement}. 

%%%%%%%%%%%%%%%%%%%%%%%%%%%%%%%%%%%%%%%%%%%%%%
\subsubsection{Parametrisation}

In order to incorporate the point-wise lower bound required in the definition \eqref{Eq:ParamSpace} of the parameter space $\Fcal$, we model any $f\in\Fcal$ as 
\begin{equation}
\label{Eq:Reparam}
    f(x)=(\phi\circ F)(x)\equiv\phi(F(x)),\qquad x\in\Ocal,
\end{equation}
for some real-valued $F\in C^2(\Ocal)$ and for some smooth and strictly increasing link function $\phi:\R\to[\fmin,\infty)$ (e.g.~the standard choice $\phi(\cdot)=\fmin+\exp(\cdot)$). Under such bijective reparametrisation, we regard $F$ as the unknown functional parameter to be estimated and discretise it by
\begin{equation}
\label{Eq:DiscretisationScheme}
    F \equiv F_\theta := \theta_0+\sum_{k=1}^K \theta_k \eta_k,
    \qquad K\in\N, \qquad
    \theta_0,\dots,\theta_K\in\R,
\end{equation}
where $\eta_k\in C^2(\Ocal)$, $k\in\N$, is some collection of `basis functions'. Correspondingly, we write $f_\theta:=\phi\circ F_\theta$. We will focus below on the case where $\eta_k:=e_{1,k}$ are the (smooth) non-constant eigenfunctions of the standard Neumann-Laplacian with associated (strictly positive) eigenvalues $\lambda_k:=\lambda_{1,k}$, solving \eqref{Eq:EigProb} with $f=1$, cf.~\cite[p.~403f]{TI}. We remark that other bases could be used as well; see Section \ref{App:MoreNum} of the Supplement \cite{giordanowangsupplement} for extensions to stationary Gaussian process priors via piecewise linear basis functions.

%

%%%%%%%%%%%%%%%%%%%%%%%%%%%%%%%%%%%%%%%%%%%%%%
\subsubsection{Prior and posterior}
We assign to $F$ in \eqref{Eq:Reparam} a truncated Gaussian series prior by endowing the vector of Fourier coefficients $\theta:=(\theta_0,\dots,\theta_K)\in\R^{K+1}$ in \eqref{Eq:DiscretisationScheme} with the diagonal multivariate Gaussian prior
\begin{equation}
\label{Eq:TruncatedGP}
    \theta\sim N(0,\sigma^2\Lambda_\alpha),
    \qquad \Lambda_\alpha:=\textnormal{diag}(1,\lambda_1^{-\alpha},
    \dots,\lambda_K^{-\alpha})\in\R^{K+1,K+1},
    \qquad \alpha,\sigma^2>0.
\end{equation}
In the following we will, in slight abuse of notation, interchangeably write $\Pi(\cdot)$ for the prior \eqref{Eq:TruncatedGP} on $\theta$ as well as for the resulting push-forward priors on $F_\theta$ and $f_\theta$. By Bayes' formula (e.g.,~\cite[p.~7]{GV17}), 
%provided that transition density functions $p_{D,f_\theta}(x,y)$ in \eqref{Eq:TRDFs} give rise to measurable maps with respect to $\theta\in\R^{K+1}$ and $x,y\in\Ocal$, 
the posterior distribution $\Pi(\cdot|X^{(n)})$ of $\theta|X^{(n)}$ has probability density function (with respect to the Lebesgue measure of $\R^{K+1}$)
\begin{equation}
\label{Eq:PostPDF}
    \pi(\theta|X^{(n)})\propto \exp\left(\ell_n(f_\theta)-\frac{1}{2}\theta^T\Lambda_\alpha^{-1}\theta\right),
    \qquad \theta\in\R^{K+1},
\end{equation}
where $\ell_n(f_\theta)$ is the log-likelihood of $f_\theta=\phi\circ F_\theta$, that is
\begin{equation}
\label{Eq:Loglik}
    \ell_n(f_\theta):=\log(L_n(f_\theta))
    =\sum_{i=1}^n\log p_{D,f_\theta}(X_{(i-1)D},X_{iD}).
\end{equation}

%

%The aforementioned characterisation of the transition density functions as the fundamental solutions of the heat equation \eqref{Eq:HeatEq} then places the inferential problem at hand in the context of the vast literature on nonlinear Bayesian inversion in PDE models, e.g.~\cite{S10,reich2015probabilistic,DS16,AMOS19,nickl2023bayesian}, whose algorithmic toolbox we can here access by means of the computational approaches to the likelihood established in Section \ref{Sec:LikelihoodComp}.

%

\begin{remark}[$\alpha$-regular Gaussian priors]\label{Rem:RegGPPrior}
For fixed $K\in\N$, the prior \eqref{Eq:TruncatedGP} induces a multivariate Gaussian distribution on the $(K+1)$-dimensional linear space spanned by the basis functions $\{1,\eta_1,\dots,\eta_K\}$. When the latter are taken to be the Neumann-Laplacian eigenfunctions, the prior converges towards an infinite-dimensional `$\alpha$-regular' Gaussian probability measure with RKHS included into the Sobolev space $H^\alpha(\Ocal)$ as $K\to\infty$, e.g.~arguing as in the proof of Lemma 2.3 in \cite{giordano2022nonparametric}, using Proposition 2 in \cite{nickl2024inference} and the results in Section 11.4.5 of \cite{GV17}. In fact, Theorem 10 in \cite{nickl2024inference} gives a precise growth condition on $K$ as a power of the sample size that leads, under certain additional regularity conditions, to rates of contraction for the associated posterior distribution.
\end{remark}

%

%%%%%%%%%%%%%%%%%%%%%%%%%%%%%%%%%%%%%%%%%%%%%%
\subsubsection{Gradients of log-posterior densities}

%In alternative to the pCN algorithm employed in the previous section, the characterisation of the Fréchet derivatives of the transition densities provided in Theorem \ref{first-der} gives access, through the computational approach developed in Section \ref{Sec:GradComp}, to a vast class of gradient-based methods, wherein we may use the resulting geometric information about the likelihood surface to direct the steps of an iterative scheme, allowing a potentially significant improvement in the efficiency with which the  parameter space is explored. In this section, we shall pursue such ideas in the same Bayesian model introduced in Section \ref{Sec:pCN}, based on the discretisation scheme \eqref{Eq:DiscretisationScheme} and the diagonal multivariate Gaussian prior \eqref{Eq:TruncatedGP} for the vector of Fourier coefficients $\theta\in\R^{K+1}$. 

A careful application of Theorem \ref{first-der} and of the chain rule to the maps $\theta \mapsto \log p_{D,f_\theta}(X_{(i-1)D},$ $X_{iD})$, $i=1,\dots,n$, implies the differentiability of the posterior density \eqref{Eq:PostPDF} (and of its logarithm). We identify the resulting formula for the gradient in the next proposition.

\begin{prop}\label{Prop:GradLogPostPDF}
For fixed $K\in\N$ and a given smooth and strictly increasing function $\phi:\R\to [\fmin,\infty)$, let the posterior probability density function $\pi(\cdot|X^{(n)})$ be as in \eqref{Eq:PostPDF}. Then, $\log \pi(\cdot|X^{(n)})$ is continuously differentiable on $\R^{K+1}$. In particular, $\nabla \log \pi(\cdot|X^{(n)})=(\partial_{\theta_k}\log \pi(\cdot|X^{(n)}), \ k=0,\dots,K)$, where
\begin{equation}
\label{Eq:MargDeriv} 
    \partial_{\theta_k} \log \pi(\theta|X^{(n)})
    =\sum_{i=1}^{n}\partial_{\theta_k}\log p_{D,f_{\theta}}(X_{(i-1)D},X_{iD}) - 
    \lambda_k^\alpha\theta_k,
\end{equation}
and where, denoting by $D\Phi_{f_\theta,i}$, $i=1,\dots,n$, the linear operators defined as in \eqref{Eq:FrechDeriv} with $f=f_\theta$, $x=X_{(i-1)D}$ and $y=X_{iD}$,
$$
    \partial_{\theta_k}\log p_{D,f_{\theta}}(X_{(i-1)D},X_{iD})
    =\frac{ D\Phi_{f_\theta,i} \big[ (\phi' \circ F_\theta) \eta_k\big] }{p_{D,f_\theta}(X_{(i-1)D},X_{iD})}.
$$
\end{prop}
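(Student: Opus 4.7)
The plan is to reduce the claim to a chain-rule computation built on top of Theorem \ref{first-der}, separating the prior and likelihood contributions to $\log \pi(\cdot|X^{(n)})$. Starting from \eqref{Eq:PostPDF}, I would write
$$
    \log \pi(\theta|X^{(n)}) = \ell_n(f_\theta) - \tfrac{1}{2}\theta^T \Lambda_\alpha^{-1}\theta + \textnormal{const},
$$
and handle the quadratic term directly: since $\Lambda_\alpha^{-1} = \textnormal{diag}(1, \lambda_1^\alpha, \dots, \lambda_K^\alpha)$, its $\theta_k$-derivative equals $-\lambda_k^\alpha \theta_k$ (setting $\lambda_0 := 1$), which accounts for the prior penalty term in \eqref{Eq:MargDeriv}. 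The remaining task is to differentiate the data-dependent log-likelihood term $\ell_n(f_\theta) = \sum_{i=1}^n \log \Phi_{D, X_{(i-1)D}, X_{iD}}(f_\theta)$, where $\Phi_{D,x,y}$ is the operator from Theorem \ref{first-der}.

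For each summand I would factor the map $\theta \mapsto \log \Phi_{D, X_{(i-1)D}, X_{iD}}(f_\theta)$ through the composition $\theta \mapsto f_\theta \mapsto \Phi_{D,X_{(i-1)D}, X_{iD}}(f_\theta) \mapsto \log(\cdot)$ and apply the chain rule. The first map is affine in $\theta$ composed with the smooth link $\phi$, hence smooth from $\R^{K+1}$ into $C^2(\bar{\mathcal O})$, with partial derivatives $\partial_{\theta_k} f_\theta = (\phi' \circ F_\theta)\eta_k$ (setting $\eta_0 \equiv 1$). The middle map is Fr\'echet differentiable at $f_\theta$ with derivative $D\Phi_{f_\theta,i}$ by Theorem \ref{first-der}. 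The final logarithm is differentiable provided $p_{D,f_\theta}(X_{(i-1)D}, X_{iD}) > 0$, which follows from the spectral expansion \eqref{Eq:FundSol} together with standard strict positivity results for transition densities of uniformly elliptic reflected diffusions on the connected domain $\Ocal$. Composing these yields the claimed formula for $\partial_{\theta_k}\log p_{D,f_\theta}(X_{(i-1)D}, X_{iD})$.

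The only point requiring a modicum of care is verifying the hypotheses of Theorem \ref{first-der} along the curve $t \mapsto f_{\theta + t e_k}$. Since $\phi$ is smooth and each $\eta_k \in C^2(\bar\Ocal)$, for any bounded neighborhood of $\theta$ the functions $f_{\theta + te_k}$ remain uniformly bounded in $C^{1+\kappa}(\bar\Ocal)$ for any fixed $\kappa \in (0,1)$, and $\phi \ge \fmin$ ensures $f_{\theta + t e_k} \in \mathcal F$. A Taylor expansion gives $f_{\theta + te_k} - f_\theta = t (\phi' \circ F_\theta)\eta_k + r(t)$ with $\|r(t)\|_{C^1} = O(t^2)$, so the remainder estimate \eqref{Eq:FrechEstimate} with $h = f_{\theta + te_k} - f_\theta$ delivers $|\Phi(f_{\theta + te_k}) - \Phi(f_\theta) - t D\Phi_{f_\theta,i}[(\phi' \circ F_\theta)\eta_k]| = o(t)$, yielding the partial derivative. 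Continuity of $\nabla \log \pi(\cdot|X^{(n)})$ in $\theta$ then follows from the H\"older continuity of $f \mapsto D\Phi_f$ furnished by Theorem \ref{thm:holder}, combined with continuity of $\theta \mapsto f_\theta$ in $C^{1+\kappa}(\bar\Ocal)$ and the fact that $\theta \mapsto p_{D,f_\theta}(X_{(i-1)D}, X_{iD})$ is continuous and strictly positive (hence bounded away from zero on compact sets of $\theta$).

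I do not expect any substantial obstacle: Theorem \ref{first-der} has done the heavy analytic lifting, and the proposition amounts to a finite-dimensional chain rule together with the elementary derivative of a quadratic form. The only mildly delicate step is matching regularity classes so that Theorem \ref{first-der}'s hypotheses are satisfied uniformly in a neighborhood of $\theta$, which is handled by choosing any fixed $\kappa \in (0,1)$ and exploiting the smoothness of $\phi$ and the basis $\{\eta_k\}$.
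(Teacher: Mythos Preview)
Your proposal is correct and follows essentially the same approach as the paper: both decompose the map $\theta \mapsto \log p_{D,f_\theta}(x,y)$ through the chain $\theta \mapsto F_\theta \mapsto f_\theta \mapsto \Phi(f_\theta) \mapsto \log(\cdot)$, invoke Theorem~\ref{first-der} for the Fr\'echet derivative of $\Phi$ and Theorem~\ref{thm:holder} for its continuity, and then assemble via the chain rule. Your write-up is in fact slightly more explicit than the paper's in verifying the hypotheses of Theorem~\ref{first-der} along the curve $t\mapsto f_{\theta+te_k}$ and in handling the quadratic prior term separately.
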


In view of the developments in Section \ref{Sec:PDEAppraoch}, the above formula can be implemented % of the above formula can be approached
by replacing the transition densities $p_{D,f_\theta}$ and the Frechét derivatives $D\Phi_{f_\theta,i}$ with their numerical counterparts $p_{D,f_\theta}^{(\varepsilon)}$ and $D\Phi_{f_\theta,i}^{(\varepsilon)}$ defined as after \eqref{Eq:NumTDF} and \eqref{Eq:NumDeriv} respectively. This results in the gradient evaluation routine $\nabla^{(\varepsilon)} \log \pi(\cdot|X^{(n)}):=(\partial^{(\varepsilon)}_{\theta_k}\log \pi(\cdot|X^{(n)}), \ k=0,\dots,K)^T$, where
\begin{equation}
\label{Eq:NumGrad}
    \partial^{(\varepsilon)}_{\theta_k} \log \pi(\theta|X^{(n)})
    :=\sum_{i=1}^{n}\frac{ D \Phi^{(\varepsilon)}_{f_\theta,i}
    [ (\phi' \circ F_\theta) \eta_k] }{p^{(\varepsilon)}_{D,f_\theta}(X_{(i-1)D},X_{iD})} - 
    \lambda_k^\alpha\theta_k.
\end{equation}
For the latter, we note that a single solution via finite element methods of the elliptic eigenvalue problem \eqref{Eq:EigProb} with $f=f_\theta$ is required across all the partial derivatives, whereupon the exponential coefficients $C^{(\varepsilon)}_{f_\theta,j,j'}$ appearing in \eqref{Eq:NumDeriv} can also be calculated before the specification of the directions $h=(\phi' \circ F_\theta) \eta_k$. Thus the numerical gradient formula \eqref{Eq:NumGrad} is only marginally more computationally expensive than the efficient likelihood routine 
\eqref{Eq:NumLikelihood}.

%
%
%

%%%%%%%%%%%%%%%%%%%%%%%%%%%%%%%%%%%%%%%%%%%%%%
\subsection{Posterior inference via the pCN algorithm}\label{Sec:pCN}

To perform inference based on the posterior density \eqref{Eq:PostPDF}, we begin by considering the class of `zeroth-order' Metropolis-Hasting MCMC sampling methods, whose implementation in the present setting is readily enabled by the numerical likelihood formula \eqref{Eq:NumLikelihood}. We focus here on the widespread pCN algorithm, which is commonly employed in function space settings and inverse problems due to its `robustness' properties with respect to the discretisation dimension $K$ of the Gaussian prior field \cite{CRSW13,HSV14}. Given the low-frequency diffusion data \eqref{Eq:LFData} and under the prior construction \eqref{Eq:TruncatedGP}, the algorithm generates a $\R^{K+1}$-valued Markov chain $(\vartheta_m,\ m\ge 0)$ by repeating the following steps, given some initialisation point $\vartheta_0\in\R^{K+1}$,
\begin{enumerate}
	\item Draw a prior sample $\Psi\sim\Pi(\cdot)$ and for a given `stepsize' $ \delta >0$ define the `proposal' $p := \sqrt{1-2 \delta }\vartheta_m + \sqrt{2 \delta }\Psi$.
\item Set
\begin{equation}\label{Eq:AccProb}
	\vartheta_{m+1} :=
\begin{cases}
	p, & \textnormal{with probability}\ \min\Big\{1,\frac{L_n(f_p)}{L_n(f_{\vartheta_m})}\Big\},\\
	\vartheta_m, & \textnormal{otherwise},
	\end{cases}
\end{equation}
where $L_n$ is the likelihood function in \eqref{Eq:Likelihood}.
\end{enumerate}

The first step only involves the straightforward task of drawing a sample $\Psi$ from the diagonal multivariate Gaussian prior $\Pi(\cdot)$. The second step requires the evaluation of the proposal likelihood $L_n(f_p)$, which is achieved through the routine \eqref{Eq:NumLikelihood}. Apart from the latter, all the operations involved within the two steps are elementary; therefore the computational complexity of each pCN iteration is largely driven by the cost of evaluating the numerical likelihood formula \eqref{Eq:NumLikelihood}. One can thus expect generally modest computation times, cf.~Remark \ref{Rem:CompCost}. An excellent scalability with respect to the sample size is also obtained.

The proposals and acceptance probabilities prescribed by the pCN algorithm are of Metropolis-Hastings type, e.g.~Proposition 1.2.2 in \cite{nickl2023bayesian}, so that the generated Markov chain can be shown to be reversible and to have invariant probability measure equal to the posterior distribution $\Pi(\cdot|X^{(n)})$, cf.~\cite{tierney1998note}. The posterior mean estimator $\bar \theta_n := E^\Pi[\theta|X^{(n)}]$ is then numerically evaluated through the Monte Carlo average $\bar \vartheta_M:=(M+1)^{-1}\sum_{m=0}^M\vartheta_m$ for some large $M\in \N$, typically after discarding an initial `burnin' batch of iterates.
%in order to account for the convergence from the initialisation point towards equilibrium. Similarly, the posterior quantiles can be approximated by the empirical quantiles of the MCMC samples.
The corresponding posterior mean %estimator 
$\bar F_n:=E^\Pi[F|X^{(n)}]$ of the %unknown 
functional parameter $F$ is given by
\begin{equation}
\label{Eq:PostMean}
    F_{\bar \vartheta_M}=\bar \vartheta_{M,0}+\sum_{k=1}^K \bar \vartheta_{M,k}\eta_k.
\end{equation}

%
%
%

%%%%%%%%%%%%%%%%%%%%%%%%%%%%%%%%%%%%%%%%%%%%%%
\subsection{Posterior inference via the unadjusted Langevin algorithm}\label{Sec:Langevin}

The results of Section \ref{Sec:PDEAppraoch} can also be used to build a variety of gradient-based iterative methods, which allows the  incorporation of geometric information on the likelihood surface. Here, we shall focus on MCMC algorithms of Langevin type, arising from the discretisation of the SDE
\begin{equation}
\label{Eq:LangevinEq}
    d \vartheta_t = \frac{1}{2}\nabla\log\pi(\vartheta_t|X^{(n)})dt + dB_t,
    \qquad t\ge 0,
    \qquad \vartheta_0\in\R^{K+1},
\end{equation}
where $(B_t,\ t \ge 0)$ is a standard $(K+1)$-dimensional Brownian motion, $\pi(\cdot|X^{(n)})$ is the posterior density from (\ref{Eq:PostPDF}) and $\nabla\log\pi(\cdot|X^{(n)})$ is as after \eqref{Eq:MargDeriv}. The solution $(\vartheta_t, \ t\ge0)$ to \eqref{Eq:LangevinEq} is well-known to have stationary distribution equal to the posterior $\Pi(\cdot|X^{(n)})$, cf.~p.~45-47 in \cite{BGL14}. The standard Unadjusted Langevin Algorithm (ULA) arises from the Euler-Maruyama discretisation of (\ref{Eq:LangevinEq}). For a stepsize $\delta>0$, the ULA generates an $\R^{K+1}$-valued Markov chain $(\vartheta_m,\ m\ge 0)$ via
\begin{equation}
\label{Eq:ULA}
    \vartheta_{m+1}
    =\vartheta_m+\frac{\delta}{2}\nabla\log\pi(\vartheta_m|X^{(n)})
    +\sqrt{\delta} B_m,
    \quad B_m\iid N(0,I_{K+1}),
    \quad \vartheta_0\in\R^{K+1}.
\end{equation}

From \eqref{Eq:ULA}, it is seen that the central operation underlying each step of the ULA is the computation of the gradient of the log-posterior density for the current state of the chain. In the problem at hand, this task can be efficiently (and scalably) performed in concrete via the evaluation routine \eqref{Eq:NumGrad}. 

Using the gradient computations developed here, further MCMC methods such as the popular Metropolis-adjusted Langevin Algorithm (MALA, see \cite[Section 1.4]{RT96} and \cite[Section 4.3]{CRSW13}) can similarly be implemented. %For example, in the popular Metropolis-adjusted Langevin Algorithm the ULA updates \eqref{Eq:ULA} are used to construct the proposals within a Metropolis-Hastings architecture, and are subject to an accept-reject step similar to the one specified for the pCN algorithm in \eqref{Eq:AccProb}, leading to similar dimension-robustness properties.

\subsection{MAP estimation via gradient descent}
\label{Sec:GradDesc}

Lastly, we also consider optimisation techniques for the posterior density $\pi(\cdot|X^{(n)})$ in \eqref{Eq:PostPDF}. The associated MAP estimator is defined as any element
\begin{equation}
\label{Eq:MAP}
    \hat\theta_n\in \underset{\theta\in\R^{K+1}}{\textnormal{argmax}}
    \left\{\log\pi(\theta|X^{(n)})\right\}
    = \underset{\theta\in\R^{K+1}}{\textnormal{argmax}} \left\{
    \ell_n(\theta) - \frac{1}{2}\theta^T\Lambda_\alpha^{-1}\theta
    \right\}.
\end{equation}
Identifying conductivities $f=f_\theta\in\Fcal$ with the corresponding coefficient vectors $\theta\in\R^{K+1}$ through the parametrisation \eqref{Eq:DiscretisationScheme}, the MAP estimator $\hat\theta_n$ can be interpreted as a discretisation of the penalised MLE
$$
    \hat f_n\in \underset{f\in\Fcal}{\textnormal{argmax}}
    \left\{
    \log [L_n(f)] - \|f\|^2_{H^\alpha}
    \right\},
$$
with Sobolev norm penalty corresponding to the limiting RKHS norm for the truncated Gaussian series prior \eqref{Eq:TruncatedGP} when the eigenbasis of the Neumann-Laplacian is used, cf.~Remark \ref{Rem:RegGPPrior}.

The gradient-descent algorithm for solving \eqref{Eq:MAP} is then given by
\begin{equation}
\label{Eq:GradDesc}
    \vartheta_{m+1}
=\vartheta_m+\delta\nabla\log\pi(\vartheta_m|X^{(n)}),
    \qquad \vartheta_0\in\R^{K+1},
    \qquad \delta>0,
\end{equation}
%equipped with 
with a suitable stopping criterion. The gradient evaluations can again be tackled via \eqref{Eq:NumGrad}.
%The MAP estimator $\hat\theta_n$ is then concretely computed through the output $\vartheta_M$ of the $M^{\textnormal{th}}$ step, for some $M\in\N$ typically chosen according to a suitable convergence criterion. The associated MAP estimator $\hat F_n$ of the reparametrised functional coefficient $F$ is approximated by
%$$   F_{\vartheta_M}:=\vartheta_{M,0}+\sum_{k=1}^K\vartheta_{M,k}e_k.
%$$

%

%Similarly to the ULA for posterior sampling described in Section \ref{Sec:Langevin}, the single gradient evaluation required at each iteration of the gradient descent algorithm \eqref{Eq:GradDesc} can be tackled via the efficient routine \eqref{Eq:NumGrad}.

%

\begin{remark}[Multimodality of the posterior]
In view of the nonlinear dependence of the transition densities $p_{D,f}$ on the conductivity function $f$, cf.~\eqref{Eq:FundSol}, the log-likelihood and the log-posterior density are generally non-concave and potentially multimodal. This can indeed be seen in our numerical experiments; see Section \ref{App:MoreNum} of the Supplement \cite{giordanowangsupplement}. Thus, while gradient-based iterative schemes may be able to compute global maximisers under specific conditions and `warm starts' (see e.g.~\cite[Chapter 11]{EHN96}, \cite{nickl2022polynomial}), 
in general one should only expect to reach convergence towards a local optimum. The global convergence behaviour of gradient-based schemes in SDE models is a highly challenging topic for future research.

%finding a global solution to the maximisation problem \eqref{Eq:MAP} may be challenging in practice, and we should regularly only expect to reach convergence towards a local one. 

%

On the other hand, optimisation methods are generally computationally attractive. In our simulation studies, the MAP estimates yielded satisfactory reconstructions while only requiring a small fraction of the number of iterations compared to MCMC; see Section \ref{Sec:GradResults}. In the present setting, the MAP estimate could also be used as a fast-to-compute initialisation point for the posterior samplers described in Sections \ref{Sec:pCN}-\ref{Sec:Langevin} to reduce burnin times.
\end{remark}

%%%%%%%%%%%%%%%%%%%%%%%%%%%%%%%%%%%%%%%%%%%%%%
\section{Numerical experiments}\label{sec:Num}

We tested the proposed computational approach in extensive simulation studies, using the non-parametric Bayesian model introduced in Section \ref{Sec:GradAlgo}.
%Further numerical results are deferred to Section \ref{App:MoreNum} of the Supplement \cite{giordanowangsupplement}.
All the numerical experiments were carried out on a MacBook Pro with M1 processor and 8GB RAM. The required finite element computations for elliptic eigenvalue problems were performed using MATLAB Partial Differential Equation Toolbox (R2023a release), based on a triangulation of the domain $\Ocal$ comprising 1981 nodes (with maximal side length $\varepsilon=.05$).

%
%
%

%%%%%%%%%%%%%%%%%%%%%%%%%%%%%%%%%%%%%%%%%%%%%%
\subsection{Data generation}\label{sec:DataGen}
Throughout, we worked on the unit area disk $\Ocal=\{(x_1,x_2)\in\R^2:x_1^2+x_2^2\le\pi^{-1}\}$ and took the true conductivity function to be $f_0(x_1,x_2)= 1.1+10e^{-(7.25x_1-1.5)^2-(7.25x_2-1.5)^2}+10e^{-(7.25x_1+1.5)^2-(7.25x_2-1.5)^2}$, cf.~Figure \ref{Fig:PostMean} (left). The trajectory of the diffusion process \eqref{Eq:SDE} was simulated via the Euler-Maruyama scheme, generating the sequence of `continuous-time' states $(x_r, \ r\ge 0)\subset\Ocal$ by the iteration
$$
	x_{r+1} = x_r + \nabla f_0(x_r) \delta_t + \sqrt { 2 f_0(x_r)} \delta_t W_r, 
	\qquad W_r\iid N(0,1),
$$
modified to incorporate (elastic) boundary reflections, by reflecting any `proposed' iterate falling outside of $\Ocal$ with respect to the tangent line at the boundary. We set the initial condition $x_0$ at the origin and the time stepsize $ \delta_t = 5\times 10^{-6}$. We repeated the scheme $5\times10^8$ times, giving the time horizon $T = 5\times10^8 \times  \delta_t = 2500$. From the simulated trajectory, discrete observations $X^{(n)}=(X_0,X_D,\dots, X_{nD})$ were sampled with time lags $D=.05$ according to $X_{iD}:=x_{iD/\delta_t}$. Note that $D/\delta_t=10^4\gg 1$, resulting in realistic low-frequency data.

\subsection{Parameterisation and prior specification} Across the experiments, we used the truncated Gaussian series priors from \eqref{Eq:TruncatedGP} with truncation level $K=68$, regularity $\alpha=1$ and variability $\sigma^2=500$. Moreover, we used the parameterisation of conductivities $f$ given by (\ref{Eq:Reparam})-(\ref{Eq:DiscretisationScheme}), with link function $\phi(\cdot) = \fmin +\exp (\cdot)$ and $\fmin=.1$. The $L^2$-norm of the reparameterised ground truth $F_0=\log(f_0-f_{\min})$ is $\|F_0\|_2=.8727$, while the $L^2$-approximation error resulting from `projecting' $F_0$ onto the linear space spanned by the eigenfunction $\{1,\eta_1,\dots,\eta_K\}$ equals $.0848$, leading to a `benchmark' relative error of $9.72\%$.

%%%%%%%%%%%%%%%%%%%%%%%%%%%%%%%%%%%%%%%%%%%%%%
\subsection{Results for the pCN algorithm}\label{Sec:pCNResults}

\begin{figure}
\includegraphics[width=3.5cm]{F0}
\includegraphics[width=3.5cm]{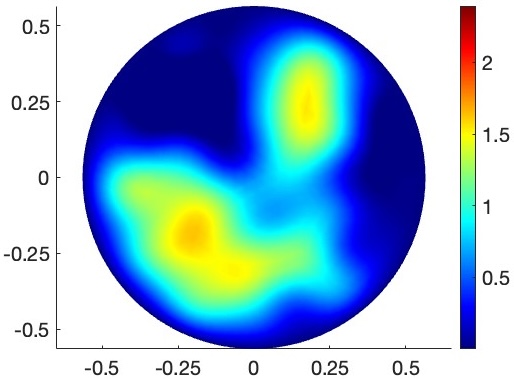}
\includegraphics[width=3.5cm]{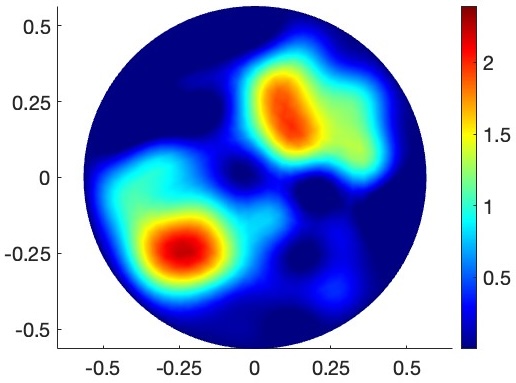}
\includegraphics[width=3.5cm]{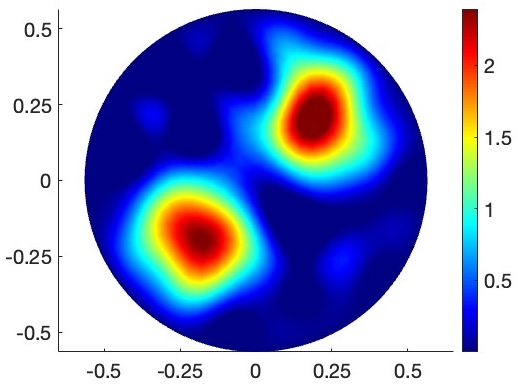}
\caption{Left to right: the (reparametrised) true conductivity function $F_0$, and the posterior mean estimates $\bar F_n$ for $n=500,~%1000,
2500,~%5000,10000,
50000$, obtained via the pCN algorithm. Computation times ranged between 55 and 59 minutes.}
\label{Fig:PostMean}
\end{figure}

\begin{table}
\caption{$L^2$-estimation errors for the posterior mean estimates (obtained via the pCN algorithm)}
\label{Tab:Errs}
\centering
\renewcommand{\arraystretch}{1.2}
\begin{tabular}{ c|c|c|c|c|c|c} 
$n$                              & 500 & 1000 & 2500 & 5000 & 10000 & 50000\\
 \hline
 $\|F_0 - \bar F_n\|_2$ & .4846 & .3953 & .3532 & .3343 & .3188 & .2097 \\
  \hline
 $\|F_0 - \bar F_n\|_2/\|F_0\|_2$ & $55.73\%$ & $45.29\%$ & $40.47\%$ & $38.30\%$ & $36.53\%$ & $24.02\%$
\end{tabular}
\end{table}

\begin{figure}
\includegraphics[width=4.7cm,height=3cm]{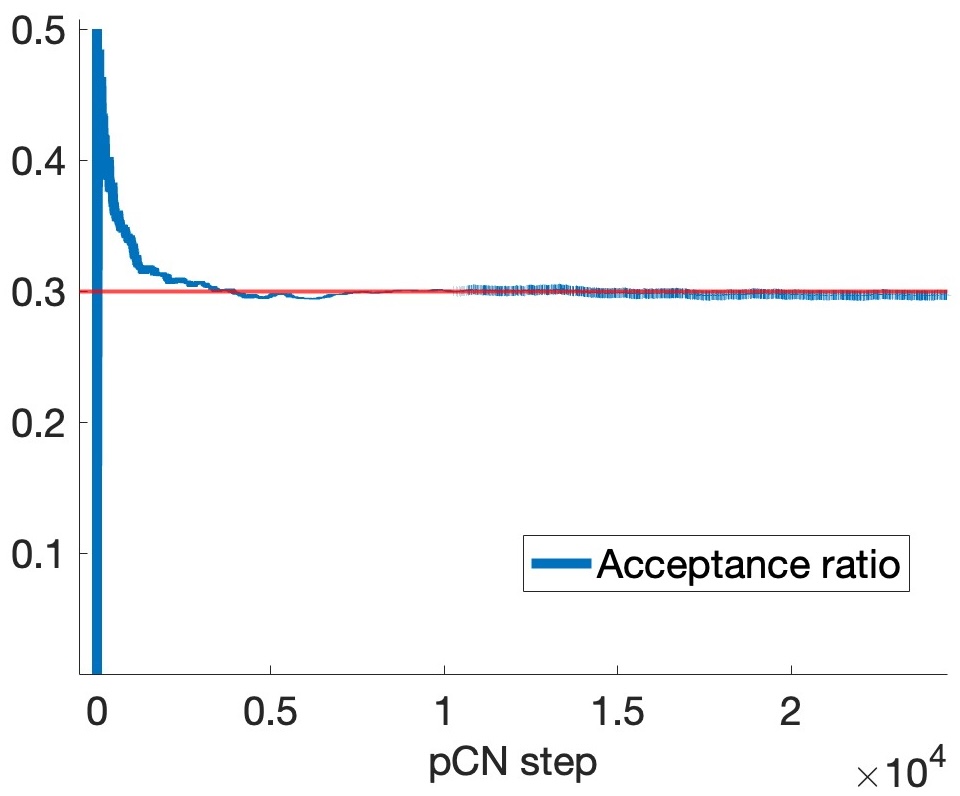}
\includegraphics[width=4.7cm,height=3cm]{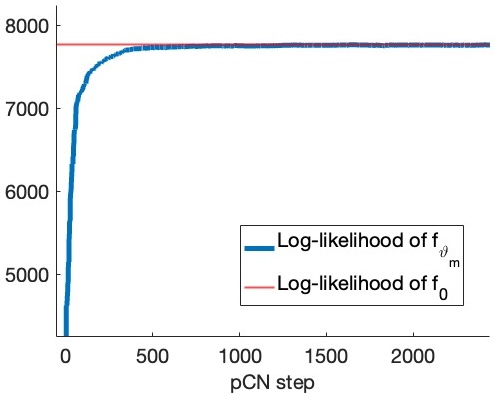}
\includegraphics[width=4.7cm,height=3cm]{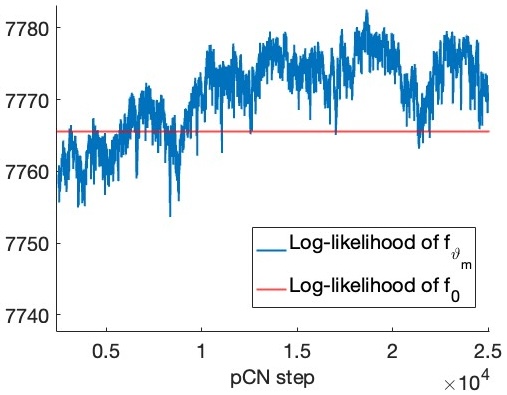}
\caption{Left: the acceptance ratio along the 25000 iterations of the pCN algorithm, for the case $n=50000$. Centre and right, respectively: the log-likelihood $\log(L_n(f_{\vartheta_m}))$ for the first 2500 chain steps, and for the steps from the $2500^{\textnormal{th}}$ to the $25000^{\textnormal{th}}$, again for $n=50000$.}
\label{Fig:pCNPlots}
\end{figure}

The Monte Carlo approximations $F_{\bar \vartheta_M}$ for the posterior mean estimators $\bar F_n$ of $F=\log(f-\fmin)$ are plotted in Figure \ref{Fig:PostMean}, with increasing sample sizes $n=500,2500,50000$.  As expected from the posterior consistency result of \cite{nickl2024inference}, they show a progressively improved reconstruction, see Table \ref{Tab:Errs}.

The stepsize in the pCN algorithm was chosen (depending on the sample size) amongst $\delta\in\{.01,.005,.005,.001$, $.001,.0001\}$ 
to achieve acceptance probabilities of around $30\%$ after the burnin phase, cf.~Figure \ref{Fig:pCNPlots} (left). Each run was initialised at the `cold start' $\vartheta_0=0$ and terminated after $M=25000$ iterations, with 2500 burnin samples. During such burnin phases, the generated chains were observed to effectively move towards regions with high posterior probability; see Figure \ref{Fig:pCNPlots} (centre and right).

The evaluation of the likelihood ratios to compute the acceptance probabilities \eqref{Eq:AccProb} was carried out via the routine \eqref{Eq:NumLikelihood}. The truncation level $J$ for the series in \eqref{Eq:NumLikelihood} was chosen adaptively across the iterations so to include all the approximate eigenvalues $0<\lambda_1^{(\varepsilon)}\le\dots\le\lambda^{(\varepsilon)}_J\le 250$, after which the coefficients in \eqref{Eq:NumLikelihood} satisfy $e^{-D\lambda_j^{(\varepsilon)}}\le e^{-.05\times 250}=3.7267\times 10^{-6}$ for all $j> J$. For sample size $n=50000$ and $M=25000$ MCMC iterates, computation took approximately 56 minutes, with an average runtime of .13 seconds per iteration.

Further details on the results for the pCN algorithm are provided in Appendix \ref{App:MoreNum}, including trace-plots of the marginal posterior distributions of the coefficients $\theta_0,\dots,\theta_K$, as well as additional numerical experiments investigating the role of the initialisation point, other choices of Gaussian priors, and the recovery of different ground truths.

%
%
%

%%%%%%%%%%%%%%%%%%%%%%%%%%%%%%%%%%%%%%%%%%%%%%
\subsection{Results for the ULA}\label{Sec:ULAResults}

\begin{figure}
\includegraphics[width=3.5cm]{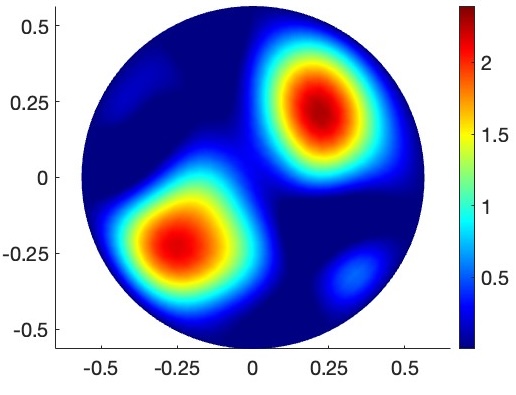}
\includegraphics[width=6cm,height=3cm]{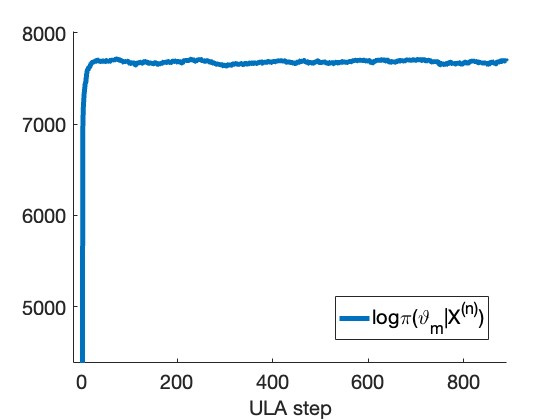}
\caption{Left: the posterior mean estimate $\bar F_n$, for $n=50000$, obtained via the ULA, to be compared to the ground truth $F_0$ shown in Figure \ref{Fig:PostMean} (left). The overall computational time was 90 minutes. Right: the log-posterior density $\log\pi(\vartheta_m|X^{(n)})$ for the first 1000 chain steps.}
\label{Fig:ULAEstim}
\end{figure}

In the same setting as in Section \ref{Sec:pCNResults}, we next consider the ULA, cf.~Section \ref{Sec:Langevin}. Figure \ref{Fig:ULAEstim} (left) shows the ULA posterior mean estimate $\bar F_n$ of the reparametrised conductivity $F$. The $L^2$-estimation error was $.20327$ (relative error $23.28\%$). A total of $M=10000$ iterations were used, with stepsize $\delta=.000025$ and initialisation $\vartheta_0=0$. Compared to the pCN method, through the incorporation of the gradient, the ULA chain was observed to move very efficiently towards the regions of higher posterior probability, see Figure \ref{Fig:ULAEstim}. Accordingly, a shorter burnin of 250 samples was employed. Within each iteration, the gradient evaluation was performed through the routine \eqref{Eq:NumGrad}. The computational parameters for the finite element method were specified exactly as for the pCN method. The runtime was around 90 minutes, with average computation time of .54 seconds per iteration.

%

%
%

%%%%%%%%%%%%%%%%%%%%%%%%%%%%%%%%%%%%%%%%%%%%%%
\subsection{Results for the MAP estimator (via gradient descent)}\label{Sec:GradResults}

We conclude with the optimisation methods from Section \ref{Sec:GradDesc}. The MAP estimate is shown in Figure \ref{Fig:MAPEstim} (left), with associated $L^2$-estimation error equal to $.2622$ (relative error $30\%$). The MAP estimate was computed via gradient descent, initialised at $\vartheta_0=0$ (`cold start') and with stepsize $\delta=.00001$. In total, $M=116$ iterations were necessary to achieve convergence; see Figure \ref{Fig:MAPEstim} (centre). The required gradient evaluations were performed exactly as for the ULA in Section \ref{Sec:ULAResults}, resulting in an overall computation time of around 1 minute.

Interestingly, the obtained MAP estimate appears to be visually close to the posterior mean estimates calculated via the pCN algorithm and the ULA, with comparable estimation error. This is despite the multimodality of the posterior; see Section \ref{App:MoreNum} of the Supplement \cite{giordanowangsupplement} for further illustration and discussion. 

%
%
%
%
%

%

%%%%%%%%%%%%%%%%%%%%%%%%%%%%%%%%%%%%%%%%%%%%%%
\section{Summary and further discussion}\label{Sec:Discuss}

\begin{figure}
\includegraphics[width=3.5cm]{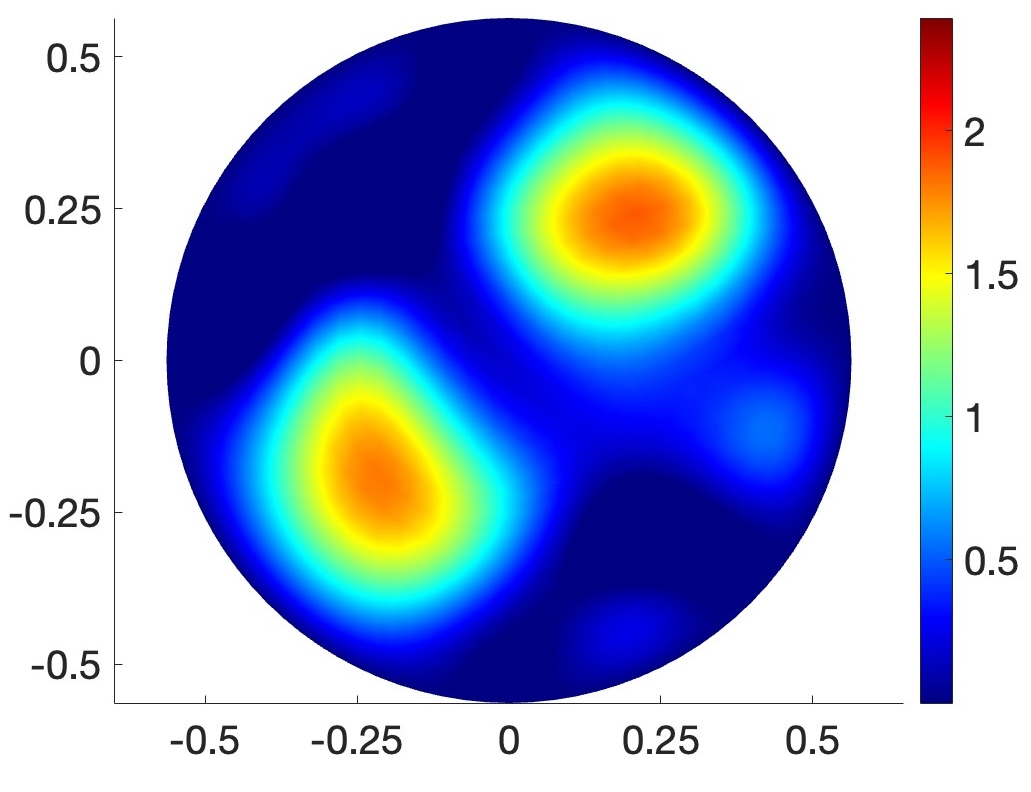}
\includegraphics[width=4.7cm,height=3cm]{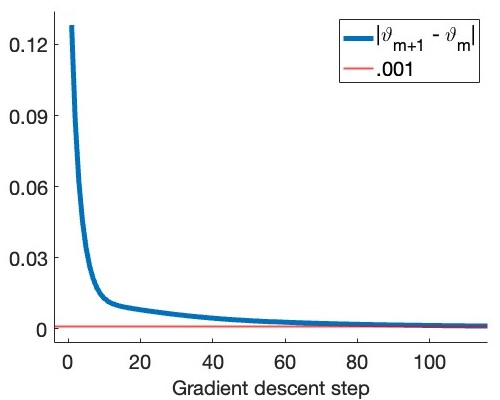}
\includegraphics[width=4.7cm,height=3cm]{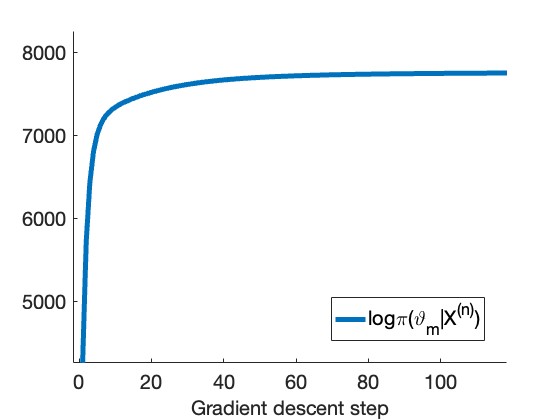}
\caption{Left: the MAP estimate, with $n=50000$, computed by the gradient descent, to be compared to the ground truth $F_0$ shown in Figure \ref{Fig:PostMean} (left). The overall computational time was 1 minute. Centre: the distances $|\vartheta_{m+1} - \vartheta_m|$ between consecutive gradient descent iterates.
Right: the log-posterior density $\log\pi(\vartheta_m|X^{(n)})$ along the gradient descent steps.}
\label{Fig:MAPEstim}
\end{figure}

We have developed novel approaches to statistical inference with discrete observations from a class of stochastic diffusion models. Using the PDE characterisation of the transition density functions as fundamental solutions of certain divergence-form heat equations, we have employed abstract parabolic PDE arguments to derive a novel closed-form expression for gradients of likelihoods. Leveraging spectral theory for the elliptic generators, we have then derived powerful series representations for the likelihood and its gradient. Computationally, we have argued (and demonstrated through simulation studies) that this leads to reliable and efficient numerical routines based on finite element methods for elliptic eigenvalue problems, that can be used %as the basis 
for the construction of gradient-free and gradient-based statistical methods. This has been illustrated through the implementation and empirical investigation of some commonly used MCMC algorithms and optimisation techniques for nonparametric Bayesian inference with Gaussian priors.

While our work provides novel theoretical and methodological contributions to the challenging problem of inference with low-frequency diffusion data, it also raises several interesting open research questions. We conclude by discussing some of these. %that are the object of ongoing work. 

%
%
%
%%%%%%%%%%%%%%%%%%%%%%%%%%%%%%%%%%%%%%%%%%%%%%
\subsection{Extensions to general reversible models with potential energy}
\label{Eq:Potential}

The scope of the present investigation has been focused on diffusions of the form \eqref{Eq:SDE}. Extensions to other models are of primary interest and a natural avenue for future research. Many of the ideas developed here can be generalised to the important case where the diffusing particles are subject to a spatially-varying `potential energy' field $U:\Ocal\to\R$ which exerts a `displacement force' directed towards the local minima of $U$ via the gradient vector field $\nabla U$. This results in a general reversible diffusion process with gradient drift vector field (e.g.~\cite[p.~47]{BGL14}),
\begin{equation}
\label{Eq:GeneralSDE}
    dX_t = f(X_t)\nabla U(X_t)dt + \nabla f(X_t)dt + \sqrt{2f(X_t)}dW_t + \nu(X_t)dL_t, \qquad t\ge 0,
\end{equation}
with associated (Gibbs-type) invariant density and infinitesimal generator
\begin{equation}
\label{Eq:GeneralInvDistr}
\mu(x)\propto e^{-U(x)},\qquad x\in\Ocal;
\qquad
    \Lcal_{f,\mu}\phi:=\frac{1}{\mu}\nabla\cdot(\mu f\nabla \phi).
\end{equation}
The latter is again in divergence form, and is self-adjoint for the $L^2$-inner product with respect to $\mu$ (as opposed to the standard $L^2$-inner product for model \eqref{Eq:SDE}). Conditional inference on the conductivity function $f$ given any $U$ may then be pursued with minimal modifications to the approach developed in Section \ref{Sec:PDEAppraoch}. In fact, we also expect the gradient formulae from Theorem \ref{first-der} and Corollary \ref{Cor:SpectrGrad} to extend to generators $\mathcal L_{f,\mu}$ at the expense of additional technicalities, by adapting our parabolic PDE techniques.

In the realistic scenario of $U$ in \eqref{Eq:GeneralSDE} being also unknown, 
we envision decomposing the problem by obtaining a preliminary estimate $\hat\mu_n$ of the invariant probability density function $\mu$ (for instance via techniques from~\cite{GHR04,NS17, dexheimer2022adaptive,dalalyan2007asymptotic,giordano2022nonparametric}).
Based on $\hat\mu_n$, the statistical analysis on $f$ can then be performed by replacing the generator $\Lcal_{f,\mu}$ in \eqref{Eq:GeneralInvDistr} with a plug-in estimate $\Lcal_{f,\hat\mu_n}$. Alternatively, a joint Bayesian model can be considered by endowing the additional drift component $\nabla U$ in \eqref{Eq:GeneralSDE} with a prior distribution (such as those employed in \cite{giordano2022nonparametric}), which would then require to modify our MCMC algorithms to be of `Gibbs' type. However, these generalizations are beyond the scope of the present work.

%For posterior sampling, this would then require to employ the MCMC methods considered here within a Gibbs-type structure, alternating runs targeting the conditional posterior distribution of $f$ (given $ \nabla U$), to that of $\nabla U$ (given $f$).

%%%%%%%%%%%%%%%%%%%%%%%%%%%%%%%%%%%%%%%%%%%%%%
\subsection{Bounds on the computational complexity via gradient stability}
\label{Sec:GradStab}

Another important %related 
issue concerns the %precise characterisation of the 
computational complexity of the employed posterior sampling and optimisation algorithms, %we develop,
%for statistical inference, 
going beyond the qualitative invariance properties discussed in Sections \ref{Sec:pCN} and \ref{Sec:Langevin}. A recent line of work initiated by Nickl and Wang in \cite{nickl2022polynomial} has shown that polynomial-time bounds on the iteration complexity of MCMC methods may be obtained 
%that bounds on the required number of operations that scale polynomially with respect to the sample size, the dimension of the parameter space and the precision level can be obtained 
under `local convexity' properties for the (negative) log-likelihood near the ground truth, quantified via a stability estimate for the log-likelihood gradient (i.e. a lower bound for its minimal eigenvalue), jointly with regularity properties of the associated Hessian matrix.

In the present setting, this program can indeed be pursued, albeit under a considerable amount of technical work. In particular, one may approach the verification of the key gradient stability condition %can be shown to follow using 
using the characterisation of the Frechét derivatives in \eqref{Eq:FrechDeriv} and \eqref{Eq:SpectrDeriv}. %using certain finer properties of the eigenpairs of the infinitesimal generator. 
Similarly, our `bootstrap' PDE arguments also provide a blueprint for deriving the required higher-order regularity properties for the log-likelihood.% can also approached, by iterating the first-order arguments developed in the proof of Theorem \ref{first-der}.

\subsection{Small-time Gaussian asymptotics}\label{sec:gauss-appr} The heat kernels of the SDE (\ref{Eq:SDE}) are well-known to satisfy small-time heat kernel asymptotics of the form
\begin{equation}\label{eq:gauss-asymptotics}
    \lim_{t\to 0} -4t \log p_{t,f}(x,y) = d_f^2(x,y), 
\end{equation}
where $d^2_f(x,y)$ is an intrinsic Riemannian metric associated to the diffusion coefficient $f$. Such heat kernel asymptotics have been studied extensively in analysis and probability, see, e.g., seminal works by Varadhan \cite{V67}, Aronson \cite{A67} and \cite{N97,SC10}. The last reference provides an excellent survey containing more references and quantitative counterparts of (\ref{eq:gauss-asymptotics}).

For small observation distances $D\approx 0$, these asymptotics naturally give rise to numerical approximation for $p_{D,f}(\cdot,\cdot)$, which in turn yields an approximation of the $\log$-likelihood $\sum_{i=1}^n\log p_{D,f}(X_{(i-1)D},X_{iD})$ that may be interpreted as a Riemann sum counterpart of the continuous-time $\log$-likelihood given by Girsanov's theorem, cf.~\cite[Section 1]{papaspiliopoulos2013data}.
%which could also be = \blue{would this correspond to time-discretising girsanov?}.
Interestingly, while the accuracy of our spectral approximations improves rapidly as $D$ increases (due to the exponential decay of eigenvalues, see Remark \ref{Rem:ApprErrors}), the accuracy of the above Gaussian approximations in general deteriorate as $D$ grows; in particular, a convergent approximation cannot be expected when $D$ is bounded away from $0$. Understanding the precise phase transition and developing data-driven rules to decide between small-time approximations versus our spectral PDE approach are interesting avenues for future research. 

%depending on the statistical data available, 
%\matteo{phase transition, switching point?}
%
%depending on the statistical data available, 
%\matteo{phase transition, switching point?}
%

\subsection{Diffusion generative models}\label{sec:generativemodels}

Generative models based on SDEs \cite{song21, cao2024survey}, such as the prominent \textit{DALL-E} \cite{dall-e} and \textit{Stable Diffusion} \cite{stable} models for images, have recently achieved 
%, introduced originally in , have 
spectacular success. %We mention the prominent \textit{DALL-E} \cite{dall-e} and \textit{Stable Diffusion} \cite{stable} image generation models. 
Given high-dimensional data $X_1,...,X_n$, these models aim to sample from some unknown (possibly conditional) probability distribution $P_0$ underlying the data. %, as well as its conditional distributions (e.g.~given through text prompts).
This is achieved by first `diffusing' the data through an ergodic (e.g.~Ornstein-Uhlenbeck type) forward process and subsequently reversing the SDE to return to the data-generating distribution. The drift of the reverse process crucially features the \textit{score function} $\nabla \log p_t(\cdot;X_1,\dots,X_n)$, where $p_t(\cdot;X_1,\dots,X_n)$ is the time-$t$ marginal distribution of the forward processes. The key idea is to learn an approximation $\hat s(t,x) \approx \nabla \log p_t(x;X_1,\dots,X_N)$ which then provides an estimate for the `true' score function associated to $P_0$ -- see e.g.~\cite{song21} for details and \cite{OSS23} for a nonparametric statistical analysis.

Both in diffusion generative models and in our work, unknown SDE parameters are inferred from data. The key difference is that in our setting the data-generating process is inherently governed by SDEs while in generative modelling, SDEs are used to induce an (approximate) `coupling' between $P_0$ and the stationary distribution of the forward process. It would be interesting to utilise the present developments in the context of generative models -- for instance, divergence form reflected diffusions (\ref{Eq:SDE}) provide a flexible class of ergodic forward processes on bounded domains with uniform equilibrium distribution, which may be used to enhance existing generative models based on reflected diffusions \cite{reflected-generative}.

%
%
%
%
%

%%%%%%%%%%%%%%%%%%%%%%%%%%%%%%%%%%%%%%%%%%%%%%
\section{Proofs of Theorem \ref{thm:holder}, Corollary \ref{Cor:SpectrGrad} and Proposition \ref{Prop:GradLogPostPDF}}

In order to achieve a self-contained exposition, we deferred the proof of our main result, Theorem \ref{first-der}, to Section \ref{sec-gradient-pf} of the Supplement \cite{giordanowangsupplement}. We here present the proofs of Theorem \ref{thm:holder}, Corollary \ref{Cor:SpectrGrad} and Proposition \ref{Prop:GradLogPostPDF}.  

%
%
%

%%%%%%%%%%%%%%%%%%%%%%%%%%%%%%%%%%%%%%%%%%%%%%
\subsection{Proof of Theorem \ref{thm:holder}}\label{Sec:ProofHolder} We will use throughout notation and facts from the proof of Theorem \ref{first-der}, which can be found in the supplementary Section \ref{sec-gradient-pf}. Let $R>0$, and let $g,h\in C^{1+\kappa}(\Ocal)$ with $\|g\|,\|h\|_{C^{1+\kappa}}\le R$. For some regularisation parameter $\delta>0$ which will be chosen below in dependence on $\|g\|_{C^1}$, let $D\Phi_{f+g}^\delta$ and $D\Phi^\delta_{f}$ be `regularised Frechét derivatives' defined as in \eqref{Eq:RegularDeriv} in the supplement, and make the decomposition
\begin{equation*}
    \begin{split}
        \big\| D\Phi_{f+g}[h]&- D\Phi_f[h] \big\|_{L^\infty}\\
        &\le \big\|D\Phi_{f+g}[h]-D\Phi_{f+g}^\delta[h] \big\|_{L^\infty}\\
        &\quad + \big\|D\Phi_f[h]-D\Phi_f^\delta[h] \big\|_{L^\infty} +\big\|D\Phi_{f+g}^\delta [h]-D\Phi_f^\delta[h] \big\|_{L^\infty}\\
    &=:\|I\|_{L^\infty}+\|II\|_{L^\infty}+\|III\|_{L^\infty}.
    \end{split}
\end{equation*}
We estimate each of the terms separately. Our goal is to show that for some $\zeta>0$,
\[ 
    \|I\|_{L^\infty}+\|II\|_{L^\infty}+\|III\|_{L^\infty} = O(\|h\|_{C^1}\|g\|_{C^1}^\zeta).
\]
Arguing like in the estimates for term $IV$ in the proof of Theorem \ref{first-der}, there exists some constant $\eta>0$ (specifically, chosen as in \eqref{eta-choice})
such that 
\[ \|I\|_{L^\infty}+\|II\|_{L^\infty}= O(\|h\| \delta^\eta).\]

It remains to estimate term $III$. To this end, we define the function $w:[0,T]\to L^2(\mathcal O)$ by
\begin{equation*}
        w :=  (\partial_t-\mathcal L_{f+g})^{-1}\big[ \Lcal_h u^\delta_{f+g}\big] - (\partial_t-\mathcal L_{f})^{-1}\big[ \Lcal_h u^\delta_{f}\big],
\end{equation*}
where we have used the previous notation $\mathcal L_h[\cdot]= \nabla \cdot (h\nabla[\cdot])$ (despite $h$ potentially taking nonpositive values). By inspection of the definition of $D\Phi^\delta$, we have that $III= w(D)$. The function $w$ satisfies the PDE
\begin{equation*}
    \begin{split}
        (\partial_t-\mathcal L_{f}) w &= \Lcal_h u^\delta _{f + g} + (\mathcal L_{f + g} - \mathcal L_f) (\partial_{t} - \mathcal{L}_{f + g})^{-1}[\Lcal_h u^{\delta}_{f + g}] - \Lcal_h u^\delta_f\\
        &= \Lcal_h \big( u^\delta _{f + g}- u^\delta_f \big) + \Lcal_g (\partial_{t} - \mathcal{L}_{f + g})^{-1}[\Lcal_h u^{\delta}_{f + g}].
    \end{split}
\end{equation*}
Using Theorem \ref{thm:lunardi-main} in the Supplement and the Sobolev embedding $H^2(\Ocal)\subset C(\bar\Ocal)$ (holding since $d\le 3$), we obtain that for any $\alpha\in (0,1)$, 
%we then obtain \textcolor{red}{For $\alpha$ small enough}
\begin{align*}
    \|III\|_{L^\infty} &= \|w(D)\|_{L^\infty} \\
    &\lesssim \|w\|_{C^\alpha_{\alpha+\mu}((0,T];H^2_N(\Ocal))} \\
    &\lesssim \big\|\Lcal_h \big( u^\delta _{f + g}- u^\delta_f \big) + \Lcal_g (\partial_{t} - \mathcal{L}_{f + g})^{-1}[\Lcal_h u^{\delta}_{f + g}]\big\|_{C^\alpha_{\alpha+\mu}((0,T];L^2(\Ocal))}\\
    &\le \big\|\Lcal_h \big( u^\delta _{f + g}- u^\delta_f \big)\big\|_{C^\alpha_{\alpha+\mu}((0,T];L^2(\Ocal))} \\
    &\quad + \big\| \Lcal_g (\partial_{t} - \mathcal{L}_{f + g})^{-1}[\Lcal_h u^{\delta}_{f + g}]\big\|_{C^\alpha_{\alpha+\mu}((0,T];L^2(\Ocal))}
    =: III_a+III_b,
\end{align*}
where the spaces $C^\alpha_{\alpha+\mu}((0,T];L^2(\Ocal))$ and $C^\alpha_{\alpha+\mu}((0,T];H^2_N(\Ocal))$ are defined in (\ref{C-alphabeta}) in the supplement. Noting that the difference $u^\delta _{f + g}- u^\delta_f$ equals the term $R_0^\delta [g]$ defined in \eqref{R0-def}, we now choose $\alpha>0$ small enough -- like in the Lemmas \ref{R0-reg} and \ref{delta-reg-est} -- to obtain that
\begin{align*}
    \big\|\Lcal_h \big( u^\delta _{f + g}- u^\delta_f \big)\big\|_{C^\alpha_{\alpha+\mu}((0,T];L^2(\Ocal))}& \lesssim \|h\|_{C^1}\big\|u^\delta _{f + g}- u^\delta_f \big\|_{C^\alpha_{\alpha+\mu}((0,T];H^2_N(\Ocal))}\\
    &= \|h\|_{C^1}\big\|R_0^\delta[g] \big\|_{C^\alpha_{\alpha+\mu}((0,T];H^2_N(\Ocal))}\\
    &\lesssim \|h\|_{C^1}\|g\|_{C^1} \delta^{-\gamma},
\end{align*}
where $\gamma\equiv \gamma(\alpha)=\alpha(d/2+2)+(1-\alpha)d/4$. By our choice of $\alpha>0$ it is ensured that $\gamma(\alpha)<1$.

The term $III_b$ can be treated similarly. Using Theorem \ref{thm-lunardi}, 
\begin{align*}
    \big\| \Lcal_g (\partial_{t}& - \mathcal{L}_{f + g})^{-1}[\Lcal_h u^{\delta}_{f + g}]\big\|_{C^\alpha_{\alpha+\mu}((0,T];L^2(\Ocal))}\\
    &\lesssim \|g\|_{C^1} \big\|(\partial_{t} - \mathcal{L}_{f + g})^{-1}[\Lcal_h u^{\delta}_{f + g}]\big\|_{C^\alpha_{\alpha+\mu}((0,T];H^2_N(\Ocal))}\\
    &\lesssim \|g\|_{C^1} \big\|\Lcal_h u^{\delta}_{f + g}\big\|_{C^\alpha_{\alpha+\mu}((0,T];L^2(\Ocal))}.
\end{align*}
The last expression is almost identical to the right hand side of the PDE (\ref{R0-PDE}) satisfied by $R_0^\delta[h]$, except with $u_{f+g}$ in place of $u_{f+h}$. Following the exact same argument as in the proof of Lemma \ref{R0-reg}, it can be shown that $\big\|\Lcal_h u^{\delta}_{f + g}\big\|_{C^\alpha_{\alpha+\mu}((0,T];L^2(\Ocal))}=O( \|h\|_{C^1}\delta^{-\gamma})$. Overall, we have obtained that $III= O(\|h\|_{C^1}\|g\|_{C^1} \delta^{-\gamma})$.

Finally, choose $\delta:=\|g\|_{C^1}$; upon combining all the preceding estimates, this yields 
\[ 
    \|I\|_{L^\infty} + \|II\|_{L^\infty}+\|III\|_{L^\infty} \lesssim \|h\|_{C^1} (\|g\|_{C^1}^\eta + \|g\|_{C^1}^{1-\gamma}),
\]
which concludes the proof upon choosing $\zeta = \min \{\eta, 1-\gamma\}$.
\qed

%%%%%%%%%%%%%%%%%%%%%%%%%%%%%%%%%%%%%%%%%%%%%%
\subsection{Proof of Corollary \ref{Cor:SpectrGrad}}\label{Sec:ProofSpectrGrad}
For $f$ and $h$ satisfying the assumptions of Corollary \ref{Cor:SpectrGrad}, recall the expression of the Frechét derivative from Theorem \ref{first-der},
$$
    D\Phi_f [h] = \int_0^{D} P_{D-s,f}
    \big[ \nabla\cdot(h\nabla p_{s,f}(x,\cdot))\big](y) ds.
$$
By the spectral representations \eqref{Eq:SpectTrOp} and \eqref{Eq:FundSol} for the transition operators and density functions, we can then write the above integrand as, for any fixed $x,y\in\Ocal$ and $s\in(0,D]$, recalling that $e_{f,0}=1$ (independently of $f$),
\begin{equation}
\begin{split}
\label{Eq:Interm}
    P_{D-s,f}&\big[ \nabla\cdot(h\nabla p_{s,f}(x,\cdot))\big](y)\\
    &=\sum_{j=0}^\infty e^{-(D-s) \lambda_{f,j}}\langle \nabla\cdot(h\nabla p_{s,f}(x,\cdot)), e_{f,j}\rangle_{L^2} e_{f,j}(y)\\
%    &=\sum_{j\ge0}e^{-(D-s) \lambda_{f,j}}
%    \left[\sum_{j'\ge1}e^{-s \lambda_{f,j'}}e_{f,j'}(x)
%    \langle \nabla\cdot(h \nabla e_{f,j'}),e_{f,j}\rangle_{L^2}\right] e_{f,j}(y)\\
    &=\sum_{j=0}^\infty \sum_{j'=1}^\infty e^{-D \lambda_{f,j}} e^{s(\lambda_{f,j} - \lambda_{f,j'})}\langle \nabla\cdot(h \nabla e_{f,j'}),e_{f,j}\rangle_{L^2}e_{f,j'}(x)e_{f,j}(y).
\end{split}
\end{equation}
Now,
\begin{align*}
    \langle \nabla\cdot(h \nabla e_{f,j'}),e_{f,j}\rangle_{L^2}
    & =
    \langle h\Delta e_{f,j'},e_{f,j}\rangle_{L^2}
    +\langle \nabla h\cdot\nabla e_{f,j'},e_{f,j}\rangle_{L^2},
\end{align*}
and since by Green's first identity (e.g.~\cite[Section C.2]{E10}), recalling that $\partial_\nu e_{f,j'} = \nabla e_{f,j'}\cdot\nu = 0$,
\begin{align*}
    \langle h \Delta e_{f,j'},e_{f,j}\rangle_{L^2}
    &=\int_{\partial \mathcal O}he_{f,j}\nabla e_{f,j'}\cdot \nu d\sigma-\int_{\mathcal O} \nabla(he_{f,j})\cdot\nabla e_{f,j'} dx\\
    &=-\int_{\mathcal O} h\nabla e_{f,j}\cdot\nabla e_{f,j'} dx
    -\int_{\mathcal O} e_{f,j}\nabla h \cdot\nabla e_{f,j'} dx\\
    &= - \langle h , \nabla e_{f,j}\cdot\nabla e_{f,j'}\rangle_{L^2}
    -\langle \nabla h\cdot\nabla e_{f,j'}, e_{f,j}\rangle_{L^2},
\end{align*}
we have
\begin{align*}
    \langle \nabla\cdot(h \nabla e_{f,j'}),e_{f,j}\rangle_{L^2}
    & =
    - \langle h , \nabla e_{f,j}\cdot\nabla e_{f,j'}\rangle_{L^2}.
\end{align*}
Replaced into \eqref{Eq:Interm}, this gives
\begin{align*}
    P_{D-s,f}&\big[ \nabla\cdot(h\nabla p_{s,f}(x,\cdot))\big](y)\\
    &=
    -\sum_{j,j'=1}^\infty 
    e^{-D\lambda_{f,j}} e^{s (\lambda_{f,j} - \lambda_{f,j'})}\langle h , \nabla e_{f,j}\cdot\nabla e_{f,j'}\rangle_{L^2} e_{f,j'}(x)e_{f,j}(y),
\end{align*}
and finally
\begin{align*}
    D\Phi_f[h] 
    &=-
    \sum_{j,j'=1}^\infty   e^{-D\lambda_{f,j}} \left(\int_0^D e^{s(\lambda_{f,j} - \lambda_{f,j'})} ds\right) \langle h , \nabla e_{f,j}\cdot\nabla e_{f,j'}\rangle_{L^2} e_{f,j'}(x) e_{f,j}(y)\\
    &=
    \sum_{j,j'=1}^\infty C_{f,j,j'} \langle h , \nabla e_{f,j}\cdot\nabla e_{f,j'}\rangle_{L^2} e_{f,j'}(x)e_{f,j}(y),
\end{align*}
where 
$$
C_{f,j,j'}:=
    \begin{cases}
    -D e^{- \lambda_{f,j}D}, & \lambda_{f,j}=\lambda_{f,j'}\\
    (e^{-\lambda_{f,j}D}-e^{- \lambda_{f,j'}D})/(\lambda_{f,j} - \lambda_{f,j'}), & \textnormal{otherwise}.
    \end{cases}
$$
\qed

%
%
%

%%%%%%%%%%%%%%%%%%%%%%%%%%%%%%%%%%%%%%%%%%%%%%
\subsection{Proof of Proposition \ref{Prop:GradLogPostPDF}}
For fixed $x,y\in\Ocal$, denote for convenience
\begin{equation}
\label{Eq:AuxFun}
    l: \R^{K+1} \to \R, \qquad l(\theta)=\log p_{D,f_\theta} (x,y).
\end{equation}
The proof then clearly follows if we show that $l$ is continuously differentiable and that for $k=0,\dots,K$,
\begin{equation}
\label{Eq:Wish}
    \partial_{\theta_k}l(\theta)=
    \frac{ D\Phi_{f_\theta} \big[ (\phi' \circ F_\theta) e_k\big] }{p_{D,f_\theta}(x,y)},   
\end{equation}
where $D\Phi_{f_\theta}$ is the linear operator defined as in \eqref{Eq:FrechDeriv} with $f=f_\theta$. Recalling the notation $\Phi(f)\equiv \Phi_{D,x,y}(f)=p_{D,f}(x,y)$, $f\in\Fcal$, the map $l$ in \eqref{Eq:AuxFun} is seen to be the result of the composition
$$
    l=\log \circ \Phi\circ \Lambda\circ \Theta,
$$
where
$$
    \Lambda : C^2(\bar \Ocal)\to\Fcal, \qquad F_\theta\mapsto f_\theta=\phi\circ F_\theta;
    \qquad \Theta:\R^{K+1}\to C^2(\bar \Ocal),
    \qquad \theta\mapsto F_\theta.
$$

The linear function $\Theta$ is smooth (in the Frechét sense), with $D\Theta_\theta=\Theta$ for all $\theta\in\R^{K+1}$. Further, it is easy to see that in view of the regularity of $\phi$, the function $\Lambda$ also is smooth, with Frechét derivative given by the linear operator
$$
    D\Lambda: C^2(\bar\Ocal)\to C^2(\bar\Ocal),
    \qquad D\Lambda_F[h]=(\phi'\circ F)h.
$$
Theorem \ref{first-der} and Theorem \ref{thm:holder} together imply that $\Phi$ is continuously differentiable. The differentiability of $l$ is then obtained the chain rule, e.g.~\cite[Section E.4]{E10}. In particular, with $\xi_k$ the $k^{\textnormal{th}}$ element of the standard basis of $\R^{K+1}$,
\begin{align*}
    \partial_{\theta_k}l(\theta)
    &=Dl_\theta[ \xi_k]\\
    &= \Bigg( D\log_{[\Phi\circ \Lambda\circ \Theta] (\theta)} \circ D\Phi_{[\Lambda\circ \Theta] (\theta)} \circ D\Lambda_{\Theta(\theta)}\circ D\Theta_{\theta} \Bigg) [ \xi_k]\\
    &=\frac{ \big(D\Phi_{[\Lambda\circ \Theta] (\theta)} \circ D\Lambda_{\Theta(\theta)}\circ \Theta\big) [ \xi_k]}{[\Phi\circ \Lambda\circ \Theta] (\theta)}\\
    &=\frac{\big( D\Phi_{f_{\boldsymbol\theta}} \circ D\Lambda_{F_{\boldsymbol\theta}}\big) [e_k] }{p_{D,f_\theta}(x,y)}
    =\frac{ D\Phi_{f_{\boldsymbol\theta}} \big[(\phi'\circ F_\theta)e_k \big] }{p_{D,f_\theta}(x,y)}.
\end{align*}
This concludes the derivation of \eqref{Eq:Wish} and the proof of the proposition.

%%%%%%%%%%%%%%%%%%%%%%%%%%%%%%%%%%%%%%%%%%%%%%
%\section{Outline of the proof of Theorem \ref{first-der}}\label{Sec:ProofOutline}

%%%%%%%%%%%%%%%%%%%%%%%%%%%%%%%%%%%%%%%%%%%%%%
%% Support information, if any,             %%
%% should be provided in the                %%
%% Acknowledgements section.                %%
%%%%%%%%%%%%%%%%%%%%%%%%%%%%%%%%%%%%%%%%%%%%%%
\begin{acks}[Acknowledgments]
The authors would like to thank Richard Nickl and Markus Reiß for several helpful discussions, as well as the Associate Editor and three anonymous Referees for many helpful comments.
\end{acks}

%
%
%
%
%

%%%%%%%%%%%%%%%%%%%%%%%%%%%%%%%%%%%%%%%%%%%%%%
%% Funding information, if any,             %%
%% should be provided in the                %%
%% funding section.                         %%
%%%%%%%%%%%%%%%%%%%%%%%%%%%%%%%%%%%%%%%%%%%%%%
\begin{funding}
M.~G.~gratefully acknowledges the support of MUR - Prin 2022 - Grant no. 2022CLTYP4, funded by the European Union – Next Generation EU, and the affiliation to the "de Castro" Statistics Initiative, Collegio Carlo Alberto, Torino. S.~W. gratefully acknowledges the support of the Air Force Office of
Scientific Research Multidisciplinary University Research Initiative (MURI) project ANSRE.

%
%The second author was supported in part by NIH Grant ???????????.
\end{funding}

%% if your bibliography is in bibtex format, uncomment commands:
\bibliographystyle{imsart-number} % Style BST file (imsart-number.bst or imsart-nameyear.bst)
\bibliography{RefLFDiff}       % Bibliography file (usually '*.bib')

\newpage 
%%%%%%%%%%%%%%%%%%%%%%%%%%%%%%%%%%%%%%%%%%%%%%
\appendix
\section{Proof of Theorem \ref{first-der}}\label{sec-gradient-pf}

%Given some fixed constant $\fmin >0$, we introduce the following set of sufficiently regular and positive coefficient functions
%\begin{equation}\label{mathcal-F}
%    \mathcal F = \big\{ f \in C^2(\bar{\mathcal O}): \inf_{x\in\mathcal O} f(x) > \fmin  \big\}.
%\end{equation}
%In this section, we prove the characterisation for the gradient of the transition densities in Theorem \ref{first-der}. Recall the notation $\mathcal L_f[\cdot] = \nabla \cdot (f\nabla [\cdot])$ for the infinitesimal generator of $(X_t, \ t\ge0)$, with domain
%\[ \mathcal D(\mathcal L_f) = \Big\{ g\in H^2(\mathcal O): \partial_\nu g= 0~\textnormal{on}~\partial \mathcal O  \Big\}.\]
%Moreover, let $\varphi_\delta(x,y)$ and $(P_{t,0}, \ t\ge0)$ respectively denote the transition densities and the transition semigroup of the reflected Brownian motion on $\mathcal O$,
%\begin{equation}\label{refl-BM}
%    dY_t= \sqrt 2 dW_t+ \nu(Y_t)dL_t.
%\end{equation}

Given some fixed constant $\fmin >0$, recall the definition of the parameter space $\Fcal$ in \eqref{Eq:ParamSpace}, and the notation $\mathcal L_f[\cdot] = \nabla \cdot (f\nabla [\cdot])$ for the infinitesimal generator of $(X_t, \ t\ge0)$, with domain
\[ 
    \mathcal D(\mathcal L_f) = H^2_N(\Ocal) = \Big\{ u\in H^2(\mathcal O): \partial_\nu u= 0~\textnormal{on}~\partial \mathcal O  \Big\},
\]
equipped with the $H^2(\Ocal)$-norm. Moreover, let $\varphi_t(x,y),\ x,y\in\Ocal$, and $(P_{t,0}, \ t\ge0)$ respectively denote the transition densities and the transition semigroup of the reflected Brownian motion on $\mathcal O$,
\begin{equation}\label{refl-BM}
    dY_t= \sqrt 2 dW_t+ \nu(Y_t)dL_t, \qquad t\ge0.
\end{equation}

%
%
%

%%%%%%%%%%%%%%%%%%%%%%%%%%%%%%%%%%%%%%%%%%%%%%
\subsection{Proof strategy} 

In order to prove the gradient characterisation stated in Theorem \ref{first-der}, we first introduce a sequence of regularised transition densities $p_{t,f}^\delta$ that are shown to satisfy certain parabolic PDEs whose initial conditions become singular as $\delta\to 0$ (Section \ref{sec:reg-def}). For each fixed $\delta>0$, we then use the regularity theory for parabolic PDEs (reviewed in Section \ref{sec:parabol}) to estimate the differences between the regularised transition densities (Section \ref{sec:diff-est}). Using a recursive argument, higher order differences can also be controlled (Section \ref{sec:higher-order}). Finally, the result is obtained by employing a careful limiting argument to let the regularisation parameter $\delta\to 0$ (Section \ref{sec:delta-to-0}).

%
%
%

%%%%%%%%%%%%%%%%%%%%%%%%%%%%%%%%%%%%%%%%%%%%%%
\subsection{Regularised transition densities}\label{sec:reg-def}
For any conductivity function $f\in \mathcal F$, we define a regularised version of the transition density $p_{t,f}$ by
\[ 
    p^\delta_{t,f}(x,y)
    :=P_{t,f}[\varphi_\delta(\cdot,y)] (x) = P_{\delta,0}[p_{t,f}(x,\cdot)] (y),
    \qquad t,\delta>0, \qquad x,y\in\Ocal.  
\]
It will be helpful to regard these as functions from $[0,T]$ to $L^2(\mathcal O)$ for some $T>0$, where $y$ is fixed and the space variable is $x$. To this end, we introduce the notation
\[ 
    u_{f}^\delta(t)
    := p_{t,f}^\delta(\cdot,y),
    \qquad  u_f^\delta: [0,T]\to L^2(\mathcal O).
\]
By standard parabolic PDE theory (see \cite[Proposition 4.1.2]{lunardi}), it is clear that $u_{f}^\delta$ uniquely solves the initial value problem
\begin{equation}\label{delta-parabolic}
\begin{cases}
    (\partial_t-\mathcal L_f)u_f^\delta(t) = 0, &\textnormal{for} \ t>0,\\
    u_f^\delta(0)=\varphi_\delta(\cdot,y).
\end{cases}    
\end{equation}

Now suppose that $h$ is a `small perturbation' such that $f+h\in \mathcal F$. Then, using (\ref{delta-parabolic}) and the corresponding PDE with $\mathcal L_{f+h}$ in place of $\mathcal L_f$, we see that the difference $u_{f+h}^\delta -u_f^\delta$ constitutes the (unique) solution to
\begin{equation}\label{R0-PDE}
      \begin{cases}
    (\partial_t-\mathcal L_{f})w(t)   = \nabla \cdot(h\nabla u_{f+h}^\delta(t)), &\textnormal{for}~t>0,\\
    w(0)=0.
\end{cases}
\end{equation}

%
%
%

%%%%%%%%%%%%%%%%%%%%%%%%%%%%%%%%%%%%%%%%%%%%%%
\subsection{A key parabolic regularity result}\label{sec:parabol}

We will crucially use the PDE characterisation (\ref{R0-PDE}) to derive a norm bound for $u_{f+h}^\delta -u_f^\delta$ (in a suitable function space). To do so, we will make extensive use of the optimal regularity theory of parabolic PDEs, e.g.~\cite{lunardi}. For the convenience of the reader, we shall summarise some key results below -- details are left to Appendix \ref{app:parabolic}.

It is well-known that $\mathcal L_f: H^2_N(\mathcal O)\to L^2(\mathcal O)$ is a `sectorial operator' in the sense that its `resolvent set' contains a large enough sector in the complex plane $\mathbb C$, and that the resolvents satisfy a suitable norm estimate; see Appendix \ref{app:parabolic} for the precise definition. Consequently, the transition semigroup $e^{t\mathcal L_f}$ can be interpreted using the holomorphic functional calculus. For any bounded and continuous function $g:[0,T]\to L^2(\mathcal O)$ and any initial condition $u_0\in H^2_N(\mathcal O)$, consider the initial value problem
\begin{equation}\label{parabol2}
\begin{cases}
        (\partial_t - \Lcal_f)u(t)= g(t), &  t\in (0,T),\\
        u(0)=u_0.
\end{cases}
\end{equation}
Then, by standard theory, laid out e.g.~in Chapter 4.1.~of \cite{lunardi}, the unique solution to (\ref{parabol2}) is given by the variation-of-constants formula\footnote{This should be interpreted as a Bochner integral of $L^2(\mathcal O)$-valued functions. We shall not be concerned with distinguishing different notions of solution to (\ref{parabol2}), since all solutions considered here will constitute `strict' solutions. This is the strongest notion of those considered in the abstract parabolic theory; see, e.g.~\cite{lunardi}.}
\begin{equation}\label{eq:parasol}
    u(t) = e^{t\mathcal L_f} u_0 +\int_0^te^{(t-s)\mathcal L_f} g(s)ds,\qquad t\in[0, T].
\end{equation}
Thus, the regularity theory for the solutions to (\ref{parabol2}) reduces to the study of the above representation formula. We will also use the notation $(\partial_t-\mathcal L_f)^{-1}$ to denote the solution operator mapping (suitably smooth and integrable) functions $g:[0,T]\to L^2(\mathcal O)$ to the solution of \eqref{eq:parasol} with $u_0=0$,
\begin{equation}\label{parabol-inverse}
    \big[(\partial_t-\mathcal L_f)^{-1}g  \big](t) = \int_0^te^{(t-s)\mathcal L_f} g(s)ds, \qquad t\in [0,T].
\end{equation}

For $\alpha\in (0,1)$, let $C^\alpha([0,T];L^2(\Ocal))$ denote the space of $\alpha$-H\"older continuous functions from $[0,T]$ to $L^2(\mathcal O)$. For $0<\alpha<1$, $\beta>0$ and $X\in \{L^2(\mathcal O),H^2_N(\mathcal O)\}$, we introduce the spaces
\begin{equation}\label{C-alphabeta}
\begin{split}
    C^\alpha_\beta((0,T];X)&:=
    \bigcap_{0<\eps\le T} C^\alpha([\eps,T];X) \cap  \Big\{ u : \sup_{0<t\le T} t^{\beta-\alpha} \|u(t)\|_{X}<\infty\Big\}\\
    &\qquad\cap 
\Big\{ u: \sup_{0<\eps<T} \eps^\beta \|u\|_{C^\alpha([\varepsilon,T],X)} <\infty \Big\},
\end{split}
\end{equation}
normed by 
\[ 
    \|u\|_{C^\alpha_\beta((0,T];X)} := \sup_{0<t\le T} t^{\beta-\alpha} \|u(t)\|_{X} +  \sup_{0<\eps<T} \eps^\beta \|u\|_{C^\alpha([\varepsilon,T];X)}. 
\]
Note that if $\beta>\alpha$ the above norm allows for $\|u(t)\|_X$ to blow up at polynomial rate when $t\to 0$.

The following `optimal regularity' estimate for the solution $u$ given by the variation-of-constants formula (\ref{eq:parasol}) is a version of Theorem 4.3.7 in \cite{lunardi}, and is used repeatedly throughout our proofs. A more general version can be found in Theorem \ref{thm-lunardi} below. We shall only need the case $u_0=0$.

\begin{theorem}\label{thm:lunardi-main}
For some $\alpha,\mu\in (0,1)$, assume that $g\in C^\alpha_{\alpha+\mu} ((0,T];L^2(\mathcal O))$, and let $u:[0,T]\to L^2(\mathcal O)$ be given by (\ref{eq:parasol}). Then, $u\in C^\alpha_{\alpha+\mu}((0,T]; H^2_N(\mathcal O))$ and $\partial_t u\in C^\alpha_{\alpha+\mu}((0,T]; L^2(\mathcal O))$. Moreover, there exists a constant $C>0$ (independent of $g$) such that
\[ \|u\|_{C^\alpha_{\alpha+\mu}((0,T];H^2_N(\mathcal O))} + \|\partial_t u\|_{C^\alpha_{\alpha+\mu}((0,T];L^2(\mathcal O))} \le C \|g\|_{C^\alpha_{\alpha+\mu}((0,T];L^2(\mathcal O))}. \]
\end{theorem}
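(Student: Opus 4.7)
The plan is to follow the classical approach for optimal parabolic regularity in sectorial operator settings (closely paralleling Theorem 4.3.7 of Lunardi), adapted to the weighted spaces $C^\alpha_{\alpha+\mu}((0,T]; X)$ which allow polynomial blow-up at $t \to 0$. The core inputs are the analytic semigroup estimates for $\mathcal{L}_f$: since $\mathcal{L}_f$ is sectorial on $L^2(\mathcal{O})$ with domain $H^2_N(\mathcal{O})$, there exists $M>0$ such that for $0 < \tau \le T$,
\begin{equation*}
\|e^{\tau \mathcal{L}_f}\|_{L^2 \to L^2} \le M, \qquad \|\mathcal{L}_f e^{\tau \mathcal{L}_f}\|_{L^2 \to L^2} \le M/\tau,
\end{equation*}
together with the fact that $\|v\|_{H^2_N} \lesssim \|\mathcal{L}_f v\|_{L^2} + \|v\|_{L^2}$ (since $\mathcal{L}_f$ is a Fredholm isomorphism up to kernel corrections). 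These estimates will be used systematically throughout.

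First, I would establish the pointwise-in-$t$ bound, i.e.\ $\sup_{0<t\le T} t^{\mu}\|\mathcal{L}_f u(t)\|_{L^2} \lesssim \|g\|_{C^\alpha_{\alpha+\mu}}$. Writing $\mathcal{L}_f u(t) = \int_0^t \mathcal{L}_f e^{(t-s)\mathcal{L}_f}[g(s) - g(t)]\,ds + \big(e^{t\mathcal{L}_f} - I\big)g(t)$ (the standard Da Prato--Grisvard splitting, justified using $\int_0^t \mathcal{L}_f e^{(t-s)\mathcal{L}_f}\,ds = e^{t\mathcal{L}_f} - I$), one bounds the first integrand by $\frac{M}{t-s} \cdot (t-s)^\alpha s^{-(\alpha+\mu)} \|g\|_{C^\alpha_{\alpha+\mu}}$ on $(0,t)$, and integrates (the singularities at $s=0$ and $s=t$ are integrable since $\alpha \in (0,1)$ and the Beta-function $B(\alpha, 1-\alpha-\mu)$ is finite, producing a factor $t^{-\mu}$). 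The second term is bounded by $2M t^{-(\alpha+\mu)} t^\alpha = 2M t^{-\mu}$. Combining gives $\|\mathcal{L}_f u(t)\|_{L^2} \lesssim t^{-\mu}\|g\|_{C^\alpha_{\alpha+\mu}}$, and since $\partial_t u = \mathcal{L}_f u + g$ on $(0,T]$, the same weighted bound holds for $\partial_t u$.

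Next, I would establish the weighted Hölder seminorm: $\sup_{0<\varepsilon<T}\varepsilon^{\alpha+\mu}\|\mathcal{L}_f u\|_{C^\alpha([\varepsilon,T];L^2)} \lesssim \|g\|_{C^\alpha_{\alpha+\mu}}$. For $\varepsilon \le s < t \le T$, write
\begin{equation*}
\mathcal{L}_f u(t) - \mathcal{L}_f u(s) = \int_s^t \mathcal{L}_f e^{(t-r)\mathcal{L}_f} g(r)\,dr + \int_0^s \mathcal{L}_f\bigl(e^{(t-r)\mathcal{L}_f} - e^{(s-r)\mathcal{L}_f}\bigr) g(r)\,dr.
\end{equation*}
The first integral is handled by the same splitting as above (pivoting around $g(s)$), yielding an $(t-s)^\alpha s^{-(\alpha+\mu)}$ bound. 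For the second, use $e^{(t-r)\mathcal{L}_f} - e^{(s-r)\mathcal{L}_f} = \int_{s-r}^{t-r} \mathcal{L}_f e^{\sigma \mathcal{L}_f}\,d\sigma$ and the bound $\|\mathcal{L}_f^2 e^{\sigma \mathcal{L}_f}\|_{L^2 \to L^2} \le M'/\sigma^2$, together with a further $g(r)-g(s)$ pivoting, which yields an $(t-s)^\alpha$ factor times an integrable singularity; the endpoint at $r=0$ requires care but is controlled by the weighted norm of $g$. All contributions collapse into the required bound after absorbing powers of $s \ge \varepsilon$.

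The main obstacle is the bookkeeping around $t \to 0$: the source $g$ may diverge at rate $t^{-(\alpha+\mu)}$, and we must show that the regularising effect of the semigroup reduces this to $t^{-\mu}$ in the stronger $H^2_N$ norm while preserving $\alpha$-Hölder regularity away from zero. The pivoting tricks above work precisely because $\alpha < 1$ and $\alpha + \mu < 1$ ensures integrability of the Beta-type kernels; verifying this carefully at both singular endpoints $r=0$ and $r=t$ (or $r=s$) is the technical heart of the proof. Finally, the $L^2$-bound on $u$ itself follows from direct integration $\|u(t)\|_{L^2} \le M\int_0^t s^{-(\alpha+\mu)}\,ds \cdot \|g\|_{C^\alpha_{\alpha+\mu}}$ (which is finite since $\alpha+\mu<1$), completing the estimate for $\|u\|_{C^\alpha_{\alpha+\mu}((0,T];H^2_N)}$ and, via $\partial_t u = \mathcal{L}_f u + g$, also for $\|\partial_t u\|_{C^\alpha_{\alpha+\mu}((0,T];L^2)}$. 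Boundary condition preservation (i.e.\ $u(t) \in H^2_N$) is inherited from $e^{\tau \mathcal{L}_f}: L^2 \to H^2_N$ mapping into the domain.
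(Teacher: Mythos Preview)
The paper does not give a self-contained proof of this theorem; it simply records it as a specialisation of Theorem~4.3.7 (together with Theorem~4.3.1(iii)) in Lunardi's monograph, applied to the sectorial operator $\mathcal L_f$ on $L^2(\mathcal O)$ with domain $H^2_N(\mathcal O)$. Your sketch, by contrast, attempts to reproduce the actual Lunardi argument, and the overall architecture (Da~Prato--Grisvard splitting, analytic semigroup bounds, Beta-type integrals) is indeed the right one.

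However, there is a genuine error in how you read off the pointwise blow-up from the weighted norm. By definition,
\[
\|g\|_{C^\alpha_{\alpha+\mu}((0,T];L^2)} \;=\; \sup_{0<t\le T} t^{\mu}\|g(t)\|_{L^2} \;+\; \sup_{0<\varepsilon<T}\varepsilon^{\alpha+\mu}\|g\|_{C^\alpha([\varepsilon,T];L^2)},
\]
so the pointwise control is $\|g(s)\|_{L^2}\lesssim s^{-\mu}$, \emph{not} $s^{-(\alpha+\mu)}$. You use the latter in two places: in the $L^2$-bound for $u$ at the end (``$\int_0^t s^{-(\alpha+\mu)}ds$, finite since $\alpha+\mu<1$'') and implicitly when integrating near $s=0$ in the first splitting. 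This forces you to assume $\alpha+\mu<1$, which is neither stated nor true in the applications: in the paper's Lemma~A.2 the choice is explicitly $\mu\in(1-\alpha,1)$, i.e.\ $\alpha+\mu>1$. The fix is standard but essential: near $s=0$ use the pointwise bound $\|g(s)-g(t)\|\le \|g(s)\|+\|g(t)\|\lesssim s^{-\mu}+t^{-\mu}$ (integrable against $(t-s)^{-1}$ since $\mu<1$), and reserve the H\"older bound $\|g(s)-g(t)\|\lesssim (t-s)^\alpha s^{-(\alpha+\mu)}$ for $s$ bounded away from zero, e.g.\ $s\ge t/2$. Once you split the integral this way, no constraint on $\alpha+\mu$ is needed and the argument goes through for all $\alpha,\mu\in(0,1)$.
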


Informally speaking, the theorem asserts that the solution (\ref{eq:parasol}) satisfies two types of `parabolic regularity'. Firstly, the regularity \textit{in space} of $u$ is of order $H^2_N(\Ocal)$ whenever the spatial smoothness of $g$ is of $L^2(\Ocal)$ type. Secondly, $u$ also possesses regularity \textit{in time}, in that the smoothness of $\partial_t u$ matches that of $g$.

%
%
%

%%%%%%%%%%%%%%%%%%%%%%%%%%%%%%%%%%%%%%%%%%%%%%
\subsection{Estimates for differences of regularised transition densities}\label{sec:diff-est}
We are now ready to use the PDE (\ref{R0-PDE}) to derive bounds for differences between regularised transition densities.

\begin{lemma}\label{R0-reg} Suppose that $d\le 3$ and that $f,h:\mathcal O\to \R$ are such that $f, f+h\in \mathcal F$, where $\mathcal F$ is given by (\ref{Eq:ParamSpace}). Let
\[
    w_{f,h}
    :=u_{f+h}^\delta-u_{f}^\delta,
    \qquad w_{f,h}: [0,T]\to L^2(\mathcal O). 
\]
Then, there exist some constants $\alpha\in (0,1)$ (sufficiently small), $\mu\in (0,1)$ (sufficiently large), $\gamma\in (0,1)$ as well as $C>0$ only depending on $\fmin ,\|f\|_{C^1}, \|h\|_{C^1}, \mu$ and $\gamma$ such that 
\[ 
    \|w_{f,h}\|_{C^\alpha_{\alpha+\mu}((0,T];H^2_N(\mathcal O))}
    +  \big\|\partial_t w_{f,h} \big\|_{C^{\alpha}_{\alpha+\mu}((0,T];L^2(\mathcal O))}\le C \|h\|_{C^1}\delta^{-\gamma}. \]
\end{lemma}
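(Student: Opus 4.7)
\emph{Proof plan.} From the PDE (\ref{R0-PDE}) and the variation-of-constants representation (\ref{parabol-inverse}), $w_{f,h}$ is the unique solution of $(\partial_t-\Lcal_f)w=g$ with $w(0)=0$, where $g(t):=\nabla\cdot(h\nabla u^\delta_{f+h}(t))$. Applying Theorem \ref{thm:lunardi-main} to this equation bounds the left-hand side of the lemma by a multiple of $\|g\|_{C^\alpha_{\alpha+\mu}((0,T];L^2(\Ocal))}$ for any admissible $\alpha,\mu\in(0,1)$, so the main task is to estimate this norm in terms of $\delta$. Since $\nabla\cdot(h\nabla v)=h\Delta v+\nabla h\cdot\nabla v$ and $h\in C^1(\bar\Ocal)$, one has $\|g(t)\|_{L^2}\lesssim\|h\|_{C^1}\|u^\delta_{f+h}(t)\|_{H^2}$ with analogous control of H\"older-in-time differences of $g$. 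Hence the problem further reduces to estimating $\|u^\delta_{f+h}\|_{C^\alpha_{\alpha+\mu}((0,T];H^2_N(\Ocal))}$.

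The starting point is a sharp bound on the initial datum $\varphi_\delta(\cdot,y)$. Expanding it in the Neumann-Laplacian eigenbasis and using Weyl's asymptotics together with the uniform $L^\infty$ bounds on the eigenfunctions yields $\|\varphi_\delta(\cdot,y)\|_{H^s(\Ocal)}\lesssim\delta^{-d/4-s/2}$ for every $s\ge 0$, uniformly in $y\in\Ocal$. Since $\Lcal_{f+h}$ generates an analytic semigroup on $L^2(\Ocal)$ with domain $H^2_N(\Ocal)$, parabolic smoothing gives $\|u^\delta_{f+h}(t)\|_{H^{2+2\alpha}}\lesssim t^{-1-\alpha}\|\varphi_\delta(\cdot,y)\|_{L^2}$, while stability yields the uniform-in-$t$ bound $\|u^\delta_{f+h}(t)\|_{H^{2+2\alpha}}\lesssim\|\varphi_\delta(\cdot,y)\|_{H^{2+2\alpha}}$. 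Taking the minimum produces the two-regime estimate $\|u^\delta_{f+h}(t)\|_{H^{2+2\alpha}}\lesssim\min(t^{-1-\alpha},\delta^{-1-\alpha})\,\delta^{-d/4}$.

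For the pointwise piece $\sup_{0<t\le T}t^\mu\|u^\delta_{f+h}(t)\|_{H^2}$ of the target norm, split the supremum at $t=\delta$: the stability bound controls the regime $t\le\delta$, the smoothing bound controls $t>\delta$, and both regimes contribute $O(\delta^{\mu-1-d/4})$. The H\"older-in-time seminorm is handled via $u^\delta_{f+h}(t)-u^\delta_{f+h}(s)=(e^{(t-s)\Lcal_{f+h}}-I)u^\delta_{f+h}(s)$ combined with the fractional-power estimate $\|(e^{\tau\Lcal_{f+h}}-I)v\|_{H^2}\lesssim\tau^\alpha\|v\|_{H^{2+2\alpha}}$, which reduces to the same interpolation and yields a bound of the same order; multiplying by the weight $\epsilon^{\alpha+\mu}$ in the definition of $C^\alpha_{\alpha+\mu}$ does not worsen the exponent because of the same two-regime split at $\epsilon=\delta$. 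The $\partial_t w_{f,h}$ contribution comes for free from Theorem \ref{thm:lunardi-main}, which controls it by the same right-hand side. Choosing $\mu$ close enough to $1$ and $\alpha$ sufficiently small gives the claim with some $\gamma<1$.

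The main technical obstacle is this sharp interpolation between parabolic smoothing and the singular initial datum: a loss of even one extra factor of $\delta^{-1}$ would invalidate the $\delta\to 0$ limiting argument to be carried out in Section \ref{sec:delta-to-0}. The dimensional restriction $d\le 3$ enters precisely here, since the $L^2$-blow-up rate $\delta^{-d/4}$ of $\varphi_\delta(\cdot,y)$ must be offset by an admissible weight $\mu\in(d/4,1)$, a choice which is only possible when $d/4<1$.
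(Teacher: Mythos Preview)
Your overall reduction---apply Theorem \ref{thm:lunardi-main} to the PDE \eqref{R0-PDE} and then bound $\|g\|_{C^\alpha_{\alpha+\mu}((0,T];L^2)}$ via $\|h\|_{C^1}\|u^\delta_{f+h}\|_{C^\alpha_{\alpha+\mu}((0,T];H^2_N)}$---is the same as the paper's, but the core estimate is carried out differently. The paper never splits into the regimes $t\lessgtr\delta$ and never touches $H^{2+2\alpha}$: instead, for the H\"older increment it writes $\|\Lcal_{f+h}[u(t)-u(t')]\|_{L^2}=\|(P_{t-t',f+h}-Id)\Lcal_{f+h}P_{t',f+h}\varphi_\delta\|_{L^2}$, controls this by $(t-t')^\alpha\|\Lcal_{f+h}P_{t',f+h}\varphi_\delta\|_{\mathcal D_\alpha}$ via the interpolation space $\mathcal D_\alpha$ between $L^2$ and $H^2_N$, and then uses the inequality $\|v\|_{\mathcal D_\alpha}\lesssim\|v\|_{L^2}^{1-\alpha}\|v\|_{H^2}^\alpha$ together with the analytic-semigroup bound $\|\Lcal_{f+h}P_{t',f+h}\|_{X\to X}\lesssim (t')^{-1}$ on both $X=L^2$ and $X=H^2_N$. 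This produces a single $(t')^{-1}$ factor, which is absorbed by the weight $\eps^{\alpha+\mu}$ once one picks $\mu>1-\alpha$; the $\delta$-dependence enters only through $\|\varphi_\delta\|_{L^2}^{1-\alpha}\|\varphi_\delta\|_{H^2}^\alpha\lesssim\delta^{-\gamma}$ with $\gamma=d/4+\alpha(2+d/4)$, using the (cruder) bound $\|\varphi_\delta\|_{H^2}\lesssim\delta^{-d/2-2}$ from Lemma \ref{lem-heatkernel}. Your two-regime argument is conceptually appealing and reaches the same limiting exponent $\gamma\to d/4$, but the paper's route is more economical: it stays entirely within $L^2$ and $H^2_N$, for which the norm equivalence (Lemma \ref{norm-equiv}) and the semigroup bounds are available under the minimal $C^1$ control on $f+h$ that the lemma tracks.

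Two technical points in your sketch need attention. First, Neumann--Laplacian eigenfunctions are \emph{not} uniformly bounded in $L^\infty$ (cf.\ Lemma \ref{lem-eigenfunction-bounds}), so the stated justification of $\|\varphi_\delta(\cdot,y)\|_{H^s}\lesssim\delta^{-d/4-s/2}$ is incorrect; the bound itself is true, but the clean argument is $\sum_j(1+\lambda_j)^se^{-2\lambda_j\delta}e_j(y)^2\le\sup_j[(1+\lambda_j)^se^{-\lambda_j\delta}]\cdot\varphi_\delta(y,y)\lesssim\delta^{-s}\cdot\delta^{-d/2}$ via the Gaussian on-diagonal estimate. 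Second, the ``stability'' bound $\|e^{t\Lcal_{f+h}}\varphi_\delta\|_{H^{2+2\alpha}}\lesssim\|\varphi_\delta\|_{H^{2+2\alpha}}$ and the fractional estimate $\|(e^{\tau\Lcal_{f+h}}-I)v\|_{H^2}\lesssim\tau^\alpha\|v\|_{H^{2+2\alpha}}$ are not automatic: they require identifying $H^{2+2\alpha}$ (with suitable boundary conditions) with the fractional domain $\mathcal D(\Lcal_{f+h}^{1+\alpha})$, and controlling $\|\varphi_\delta\|$ in that norm even though $\varphi_\delta$ is defined via $\Delta$ rather than $\Lcal_{f+h}$. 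This can be made to work for small $\alpha$ using $f+h\in C^2$ and interpolation, but it is extra baggage that the paper's $\mathcal D_\alpha$-based argument avoids entirely.
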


The preceding bound is the main technical result of this section. It will be the key for us to employ a bootstrap argument to also control `higher order approximations' of the transition densities; see Section \ref{sec:higher-order} below.

\begin{proof}
Since $w_{f,h}$ satisfies the parabolic PDE \eqref{R0-PDE}, in light of Theorem \ref{thm:lunardi-main}, it suffices to derive a regularity estimate for the inhomogeneity term in \eqref{R0-PDE}, which we shall denote by
\[ 
    g(t)\equiv g_{f,h}^\delta(t):=\nabla \cdot (h \nabla u_{f+h}^\delta(t)\big),\qquad g: [0,T]\to L^2(\mathcal O). 
\]
Our goal is to bound the $C^\alpha_{\alpha+\mu}((0,T]; L^2(\mathcal O))$-norm of $g$ for $\alpha$ and $\mu$ to be suitably chosen later in the proof. This is achieved in four steps; Steps 1-3 deal with estimating 
\[
    \sup_{\eps\in(0,T]} \eps^{\alpha+\mu}\|g\|_{C^{\alpha}([\eps,T]; L^2(\mathcal O))},
\]
while Step 4 deals with bounding \[\sup_{t\in (0,T]}t^\mu\|g(t)\|_{L^2}.\]

\textbf{Step 1.}
Fix any $\eps \in (0,T)$, as well as any $\eps<t'<t\le T$. Then, we have that
\begin{align*}
    \|g(t)-g(t')\|_{L^2} &= \Big\| \nabla \cdot (h\nabla (u_{f+h}^\delta(t) -u_{f+h}^\delta(t')  )) \Big\|_{L^2}\\
    & \lesssim \|h\|_{C^1} \big\| u_{f+h}^\delta(t) -u_{f+h}^\delta(t') \big\|_{H^2}.
\end{align*}
In order to further bound the right hand side, we use the norm equivalence
\begin{equation}\label{normequiv}
   C^{-1} \big( \|\mathcal L_{f+h} u\|_{L^2}+ \| u\|_{L^2}\big)\le  \|u\|_{H^2}\le C \big( \|\mathcal L_{f+h} u\|_{L^2}+ \| u\|_{L^2}\big)
\end{equation}
from Lemma \ref{norm-equiv}, holding for all $u\in H^2_N(\Ocal)$ and for some $C$ that only depends on $\fmin $ and $\|f+h\|_{C^1}$. It follows that
\begin{align}
   \big\| u_{f+h}^\delta(t) &-u_{f+h}^\delta(t') \big\|_{H^2} \notag\\
   &\simeq \big\|\mathcal L_{f+h} \big[ 
    u_{f+h}^\delta(t) -u_{f+h}^\delta(t') \big]\big\|_{L^2} 
    + \big\|
    u_{f+h}^\delta(t) - \label{decomp} u_{f+h}^\delta(t')\big\|_{L^2} \\
    &=: I+II.\notag
\end{align}

\textbf{Step 2: Term I.}
Term $I$ is the most difficult to bound among the two, and we treat it first. Denoting by $Id$ the identity operator, and using the semigroup property of $(P_{t,f+h}, \ t\ge0)$, the definition of $u_{f+h}^\delta$ as well as the fact that $\mathcal L_{f+h}$ and $P_{t,f+h}$ commute, we have that
\begin{align*}
    \big\|\mathcal L_{f+h} \big[ 
    u_{f+h}^\delta(t) -u_{f+h}^\delta(t') \big]\big\|_{L^2}
    &=\big\| \mathcal L_{f+h} (P_{t-t',f+h} - Id ) \big[P_{t',f+h} [\varphi_\delta (\cdot, y)]\big] \big\|_{L^2}\\
    &= \big\| (P_{t-t',f+h} - Id ) \big[\mathcal L_{f+h} P_{t',f+h} [\varphi_\delta (\cdot, y)]\big] \big\|_{L^2}.
\end{align*}
Moreover, standard analytic semigroup theory implies that for some $M<\infty$ it holds that
\begin{equation}\label{M-bounds}
    \begin{split}
       \sup_{0<t\le T} t\|\mathcal L_{f+h}P_{t,f+h}\|_{L^2\to L^2} \le  M;
       \qquad \sup_{0<t\le T} t\|\mathcal L_{f+h}P_{t,f+h}\|_{H^2_N\to H^2_N} &\le  M,
    \end{split}
\end{equation}
see e.g.~the estimates (2.1.1) in \cite{lunardi}. Above, for $X\in\{L^2(\Ocal),H^2_N(\Ocal)\}$, we have denoted by $\|\cdot\|_{X\to X}$ the usual operator norm. Next, for $\alpha\in(0,1)$ we denote by $\mathcal D_\alpha$ a suitable interpolation space between $L^2(\mathcal O)$ and $H^2_N(\mathcal O)$ defined in (\ref{D-alpha}) below. This space has norm
\[ 
    \|u\|_{\mathcal D_\alpha}:=\|u\|_{L^2}+\sup_{t\in (0,1)}t^{-\alpha}\|P_{t,f}u-u\|_{L^2};
\]
see (\ref{Dalpha-decay}). Using the interpolation inequality 
\[
    \|u\|_{\mathcal D_\alpha}\lesssim \|u\|_{L^2}^{1-\alpha}\|u\|_{H^2}^\alpha,\qquad u\in \mathcal D_\alpha,
\] 
as well (\ref{M-bounds}), we may therefore estimate
\begin{align*}
    \big\| (P_{t-t',f+h} & - Id ) \big[\mathcal L_{f+h} P_{t',f+h} [\varphi_\delta (\cdot, y)]\big] \big\|_{L^2} \\
    &\lesssim (t-t')^\alpha \big\| \mathcal L_{f+h} P_{t',f+h} [\varphi_\delta (\cdot, y)] \big\|_{\mathcal D_\alpha}\\
    &\lesssim (t-t')^\alpha \big\| \mathcal L_{f+h} P_{t',f+h} [\varphi_\delta (\cdot, y)] \big\|_{L^2}^{1-\alpha}\big\| \mathcal L_{f+h} P_{t',f+h} [\varphi_\delta (\cdot, y)] \big\|_{H^2}^{\alpha}\\
    &\lesssim (t-t')^\alpha (t')^{-1} \big\| \varphi_\delta (\cdot, y)\big\|_{L^2}^{1-\alpha} \big\| \varphi_\delta (\cdot, y)\big\|_{H^2}^{\alpha}\\
    &\lesssim (t-t')^\alpha \eps^{-1} \big\| \varphi_\delta (\cdot, y)\big\|_{L^2}^{1-\alpha} \big\| \varphi_\delta (\cdot, y)\big\|_{H^2}^{\alpha}.
\end{align*}
Next, we use the heat kernel estimates from Lemma \ref{lem-heatkernel} below to obtain the bound
\begin{equation}
\label{termI-estimate}
\begin{split}
    %\big\| (P_{t-t',f+h} &- Id ) \big[\mathcal L_{f+h}P_{t',f+h} %[\varphi_\delta (\cdot, y)]\big] \big\|_{L^2} \\
    %&\lesssim 
    (t-t')^\alpha \eps^{-1} &\big\| \varphi_\delta (\cdot, y)\big\|_{L^2}^{1-\alpha} \big\| \varphi_\delta (\cdot, y)\big\|_{H^2}^{\alpha}\\
    &\lesssim (t-t')^\alpha \eps^{-1}\delta^{-(1-\alpha)d/4}\delta^{-\alpha (d/2+2)}
    = (t-t')^\alpha \eps^{-1}\delta^{-\gamma}, 
\end{split}
\end{equation}
where we have set
\begin{equation}\label{gamma-alpha}
    \gamma\equiv\gamma(\alpha) := (1-\alpha)d/4 + \alpha (d/2+2)= d/4 + \alpha\big(2+d/4\big).
\end{equation}
Since $d\le 3$, we can (and will) choose a sufficiently small value for $\alpha>0$ to guarantee that $\gamma<1$, as required in the statement of the lemma.

\textbf{Step 3: Term II; combining the estimates.}
Term $II$ can be estimated similarly. Indeed, it holds that for some $M>0$,
\begin{equation*}
    \begin{split}
       \sup_{0<t\le T} \|\mathcal L_{f+h}P_{t,f+h}\|_{L^2\to L^2} \le  M;\qquad \sup_{0<t\le T} \|\mathcal L_{f+h}P_{t,f+h}\|_{H^2_N\to H^2_N} \le  M.
    \end{split}
\end{equation*}
Then, arguing as above, we obtain that
\begin{align}
    \big\| (P_{t-t',f+h} - Id ) \big[P_{t',f+h} [\varphi_\delta (\cdot, y)]\big] \big\|_{L^2} &\lesssim (t-t')^\alpha \big\| \varphi_\delta (\cdot, y)\big\|_{L^2}^{1-\alpha} \big\| \varphi_\delta (\cdot, y)\big\|_{H^2}^{\alpha}\notag\\
    &\lesssim (t-t')^\alpha \delta^{-(1-\alpha)d/4}\delta^{-\alpha (d/2+2)} \notag \\
    &= (t-t')^\alpha \delta^{-\gamma}. \label{termII-estimate}
\end{align}
Combining (\ref{decomp}), (\ref{termI-estimate}) and (\ref{termII-estimate}) then implies that $\|g\|_{C^\alpha([\eps,T]; L^2(\mathcal O))} \lesssim \|h\|_{C^1} \eps^{-1}\delta^{-\gamma}$. Now choose any $\mu \in (1-\alpha,1)$. For such $\mu$, we have proven that 
\[ \sup_{\eps\in(0,T]} \eps^{\alpha+\mu}\|g\|_{C^{\alpha}([\eps,T]; L^2(\mathcal O))}<\delta^{-\gamma}\|h\|_{C^1}. \]

\textbf{Step 4.}
There remains to bound $\sup_{0<t\le 1}t^\mu \|g(t)\|_{L^2}$. Arguing as before and using again the norm equivalence (\ref{normequiv}),
\begin{align*}
    \|g(t)\|_{L^2} &\lesssim \|h\|_{C^1}\|P_{t,f+h}[\varphi_\delta(\cdot,y)]\|_{H^2}\\
    &\simeq \|P_{t,f+h}\mathcal L_{f+h}[\varphi_\delta(\cdot,y)]\|_{L^2}+\|P_{t,f+h}[\varphi_\delta(\cdot,y)]\|_{L^2} .
\end{align*}
Just like in Steps I and II, we focus on the first term, as the second one is easier to control. By another application of the standard properties of analytic semigroups, it holds that for some $M<\infty$, 
\[
    \sup_{0<t\le 1}\|P_{t,f+h}\mathcal L_{f+h}\|_{L^2\to L^2}\le t^{-1} M;
    \qquad \sup_{0<t\le 1}\|P_{t,f+h}\mathcal L_{f+h}\|_{H^2\to L^2}\le M;
\]
cf.~eq.~(2.1.1) in \cite{lunardi}. Using interpolation theory, e.g.~Theorem 1.2.6 in \cite{lunardi}, and by possibly changing the constant $M$, we then obtain that
\[\sup_{0<t\le 1}\|P_{t,f+h}\mathcal L_{f+h}\|_{\mathcal D_\alpha\to L^2}\le t^{\alpha-1} M.\]
It follows that 
\[ \|P_{t,f+h}\mathcal L_{f+h}[\varphi_\delta(\cdot,y)]\|_{L^2}\lesssim t^{\alpha-1}\|\varphi_\delta(\cdot,y)\|_{L^2}^{1-\alpha}\|\varphi_\delta(\cdot,y)\|_{H^2}^{\alpha}. \]
We now may follow the same arguments as in (\ref{termI-estimate}) and (\ref{termII-estimate}) to derive the estimate,
\[ 
    \|g(t)\|_{L^2}\lesssim t^{\alpha-1}\delta^{-\gamma}. 
\]
Choosing $\alpha$ and $\gamma$ as in (\ref{gamma-alpha}), arguing as in (\ref{termI-estimate}) and using that $\mu>1-\alpha$, we then obtain that
\[ 
    \sup_{0<t\le T}t^{\mu} \|g(t)\|_{L^2}\le \sup_{0<t\le T}t^{\mu-(1-\alpha)} \|g(t)\|_{L^2}\lesssim T^{\mu-(1-\alpha)} \|h\|_{C^1} \delta^{-\gamma}. 
\]
This completes the proof of the lemma.
\end{proof}

%
%
%

%%%%%%%%%%%%%%%%%%%%%%%%%%%%%%%%%%%%%%%%%%%%%%
\subsection{Higher-order approximations}\label{sec:higher-order}

One can interpret the difference $w_{f,h}=u^\delta_{f+h}-u^\delta_{f}$ as the remainder term of a `zeroth order' (i.e.~constant) approximation of the operator $\tilde f\mapsto u^\delta_{\tilde f}$ at some fixed point $f\in\mathcal F$. In light of this interpretation, fixing some $f\in \mathcal F$, we introduce the notation 
\[
    M_0^\delta[h](t) :=  u_{f}^\delta(t), \qquad M_{0}^\delta[h]:[0,T]\to L^2(\mathcal O),
\]
for the zeroth order approximation, and define the remainder term by
\begin{align}\label{R0-def}
    R_0^\delta[h](t)&:= u_{f+h}^\delta(t)-u_{f}^\delta(t), \qquad R_0^\delta: [0,T]\to L^2(\mathcal O).
\end{align}
While this notation may at first seem artificial, it will prove very convenient for generalisations to higher order local polynomial approximations in directions $h$ around some $f\in\Fcal$, which we now define recursively. The following construction is adapted from \cite{W19}.

For any $k\ge 1$ and $f,h$ as above, and recalling the solution operator (\ref{parabol-inverse}), we define the $k^{\textnormal{th}}$ order `monomial' approximation term as 
\begin{align*}
    M_{k}^\delta[h](t) := (\partial_t-\mathcal L_f)^{-1} \big[ \nabla \cdot \big(h \nabla M_{k-1}^\delta[h] \big) \big] (t).
\end{align*}
Naturally, since we are mostly interested in the characterisation of the first-order derivative, the most important term will be the linear one, $M_1^\delta[h]$; see Corollary \ref{cor-regularised}. %For those readers concerned about whether this expression is well-defined, for now we can treat these definitions as formal expressions -- in the next lemma, it will be recursively shown that the terms $M_{k-1}^\delta[h]$ are suitably regular such that the action of the solution operator $(\partial_t-\mathcal L_f)^{-1}$ upon them is well-defined.
The above definitions should at first be understood to be formal expressions -- in the next lemma, it will be recursively shown that the terms $M_{k-1}^\delta[h]$ are suitably regular such that the action of the solution operator $(\partial_t-\mathcal L_f)^{-1}$ upon them is well-defined.

Before doing so, we note that it is clear from the recursive definition that, for all $k\ge 1$, $M_k^\delta[h]$ is homogeneous of degree $k$ in $h$. Indeed, the sum $\sum_{l=0}^k M_l^\delta [h]$ will serve as our `candidate' $k^{\textnormal{th}}$ order local polynomial approximation to the transition densities around $f$. We denote the associated remainder term by
\[ 
    R_k^\delta[h]:= u_{f+h}^\delta - \sum_{l=0}^k M_l^\delta[h]. 
\]
In order to show that the approximation by $\sum_{l=0}^k M_l^\delta [h]$ is the correct one, we will prove that the size of $R_k^\delta[h]$ is of order at most $o(\|h\|_{C^1}^k)$. Indeed, from the definition, it is easily seen that, for each $k\ge 1$, $R_k^\delta[h]$ also satisfies a recursive relationship, which reads
\begin{equation}\label{Rk-PDE}
\begin{cases}
  (\partial_t-\mathcal L_f) R_{k-1}^\delta [h](t)= [\nabla \cdot (h \nabla R_{k-1}^\delta[h])] (t), & \textnormal{for}\ t> 0,\\
  R_k^\delta[h](0)=0.
\end{cases}
\end{equation}
%Analogously to $R_0^\delta$, we define the time-integrated version of the remainder terms as
%\[ Q_k^\delta[h](t)=\int_0^t R_k^\delta[h](s) ds \]
%Then, we can easily observe that $R_0^\delta[h]$ and $Q_0^\delta[h]$ satisfy the following PDEs, respectively:
%\begin{equation*}
%    \begin{cases}
%        (\partial_t-\mathcal L_f)R_k^\delta[h](t) &= \nabla \cdot \big( hR_{k-1}^\delta [h]\big),~~~~ \text{for}~t>0,\\
%        R_k^\delta[h](0) &=0,
%    \end{cases}
%\end{equation*}
%as well as 
%\begin{equation}\label{Qk-PDE}
%    \begin{cases}
%        (\partial_t-\mathcal L_f)Q_k^\delta[h](t) &= \nabla \cdot \big( hQ_{k-1}^\delta [h]\big),~~~~ \text{for}~t>0,\\
%        Q_k^\delta[h](0) &=0.
%    \end{cases}
%\end{equation}
Using this characterisation, we will now extend Lemma \ref{R0-reg} (which established an upper bound for $R_0^\delta[h]$) to all $R_k^\delta[h],~k\ge 1$, by induction (again, we note that the case $k=1$ will be the most important for us, but proving the estimate for all $k\ge 1$ is no more complicated than the argument for $k=1$).

\begin{lemma}\label{delta-reg-est}
Suppose that $d\le 3$ and that $f,h:\mathcal O\to \R$ are such that $f,f+h\in\mathcal F$. Then, for any $k\ge 0$, there exists a sufficiently small constant $\alpha>0$, some $\gamma\in (0,1)$ and some $C>0$ such that 
\[ \|R_k^\delta[h]\|_{C^\alpha_{\alpha+\mu}((0,T];H^2_N(\mathcal O))}+  \|\partial_t R_k^\delta[h]\|_{C^{\alpha}_{\alpha+\mu}((0,T];L^2(\Ocal))}\le C \|h\|_{C^1}^{k+1}\delta^{-\gamma}. \]
%In particular, it also holds that
%\[ \|R_k^\delta[h]\|_{C^{\alpha}([0,T];L^2)}\le \|Q_k^\delta[h]\|_{C^{1+\alpha}([0,T];L^2)} \le  C \|h\|_{C^1}^{k+1}\delta^{-\gamma}. \]
\end{lemma}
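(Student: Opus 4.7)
The strategy is to proceed by induction on $k$, with the base case $k=0$ being precisely Lemma \ref{R0-reg}, since by definition $R_0^\delta[h] = u_{f+h}^\delta - u_f^\delta = w_{f,h}$. This fixes once and for all the parameters $\alpha, \mu, \gamma \in (0,1)$ from that lemma (small $\alpha$ ensuring $\gamma(\alpha) < 1$, and $\mu > 1-\alpha$), which will be preserved throughout the induction.

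For the inductive step, assume the claimed estimate holds at stage $k-1$. The function $R_k^\delta[h]$ satisfies the recursive PDE (\ref{Rk-PDE}) with inhomogeneity $\nabla \cdot (h \nabla R_{k-1}^\delta[h])$ and vanishing initial condition. Applying Theorem \ref{thm:lunardi-main} to this Cauchy problem yields
\[
    \|R_k^\delta[h]\|_{C^\alpha_{\alpha+\mu}((0,T];H^2_N(\mathcal O))} + \|\partial_t R_k^\delta[h]\|_{C^\alpha_{\alpha+\mu}((0,T];L^2(\mathcal O))} \lesssim \|\nabla \cdot (h \nabla R_{k-1}^\delta[h])\|_{C^\alpha_{\alpha+\mu}((0,T];L^2(\mathcal O))}.
\]
A routine product-rule calculation, using $h \in C^1(\bar{\mathcal O})$, gives the pointwise-in-time bound $\|\nabla \cdot (h \nabla v)\|_{L^2} \lesssim \|h\|_{C^1} \|v\|_{H^2}$; the same inequality applies to time-differences of $v$, so the Hölder-in-time and weighted-at-zero structure of $R_{k-1}^\delta[h]$ transfers directly, giving
\[
    \|\nabla \cdot (h \nabla R_{k-1}^\delta[h])\|_{C^\alpha_{\alpha+\mu}((0,T];L^2(\mathcal O))} \lesssim \|h\|_{C^1} \|R_{k-1}^\delta[h]\|_{C^\alpha_{\alpha+\mu}((0,T];H^2_N(\mathcal O))}.
\]
Inserting the inductive hypothesis $\|R_{k-1}^\delta[h]\|_{C^\alpha_{\alpha+\mu}((0,T];H^2_N(\mathcal O))} \le C \|h\|_{C^1}^k \delta^{-\gamma}$ produces an estimate of order $\|h\|_{C^1}^{k+1} \delta^{-\gamma}$, as desired. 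The constant $C$ grows by a fixed factor at each induction step, but $\alpha, \mu$ and $\gamma$ are unchanged.

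The main obstacles are essentially organisational: first, one must ensure that $\alpha, \mu, \gamma$ can be chosen uniformly across all induction steps, which is the case because Theorem \ref{thm:lunardi-main} preserves these parameters and the only source of the $\delta^{-\gamma}$ blow-up is the singular initial data inherited from $R_0^\delta[h]$ in the base case; no new singularities appear at higher orders because each $R_k^\delta[h]$ starts from zero initial data. Second, one should verify that $R_k^\delta[h]$ is a genuine strict solution of (\ref{Rk-PDE}), which requires a parallel induction confirming that each $M_k^\delta[h]$ lies in $C^\alpha_{\alpha+\mu}((0,T]; H^2_N(\mathcal O))$ so that $(\partial_t - \mathcal L_f)^{-1}[\nabla \cdot (h \nabla M_{k-1}^\delta[h])]$ is well-defined via Theorem \ref{thm:lunardi-main}; this is immediate from the same chain of applications, starting from $M_0^\delta[h] = u_f^\delta$, whose regularity follows from the analogous PDE analysis carried out in Lemma \ref{R0-reg}. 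Conceptually, therefore, the induction is mechanical once the base case has been established; the real work was already done in the proof of Lemma \ref{R0-reg}.
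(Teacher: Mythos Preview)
Your proposal is correct and follows essentially the same approach as the paper: induction on $k$ with Lemma \ref{R0-reg} as the base case, and the inductive step combining the PDE characterisation (\ref{Rk-PDE}) with Theorem \ref{thm:lunardi-main} and the elementary bound $\|\nabla\cdot(h\nabla v)\|_{L^2}\lesssim \|h\|_{C^1}\|v\|_{H^2}$ transferred through the $C^\alpha_{\alpha+\mu}$ norm. Your additional remarks on the uniform choice of $\alpha,\mu,\gamma$ and the well-definedness of the recursive construction via a parallel induction on $M_k^\delta[h]$ make explicit points that the paper leaves largely implicit.
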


The key implication of the preceding lemma is that it identifies the directional derivatives of the regularised transition densities. Using the lemma with $k=1$, we immediately obtain that the Frech\'et derivative of the regularised maps $f\in\mapsto u^\delta_f$ must be given by the operator $h\mapsto M_\delta^1 [h]$. In light of this, we introduce the notation
\begin{equation}
\label{Eq:RegularDeriv}
    D\Phi_f^\delta [h] := M_1^\delta[h](D). 
\end{equation}

\begin{cor}\label{cor-regularised}
Suppose that $d\le 3$. Then, with $\alpha,\mu,\gamma \in (0,1)$ as in Lemma \ref{delta-reg-est}, we have that
\begin{align*}
    \big\| u^\delta_{f+h} - u^\delta_f -  D\Phi_f^\delta [h]  \big\|_{C^\alpha_{\alpha+\mu}((0,T]; H^2_N(\mathcal O))} &= \big\|R_\delta^1[h] \big\|_{C^\alpha_{\alpha+\mu}((0,T]; H^2_N(\mathcal O))} 
    = O(\delta^{-\gamma}\|h\|^2_{C^1}),
\end{align*}
as $h\to 0$. In particular, using the Sobolev embedding $H^2(\Ocal)\subset C(\Ocal)$, it follows that for any fixed and positive $D>0$, 
    \[ \big\| u^\delta_{f+h}(D) - u^\delta_f(D) -  D\Phi_f^\delta [h]  \big\|_{L^\infty} = O(\delta^{-\gamma}\|h\|^2_{C^1}).\]
\end{cor}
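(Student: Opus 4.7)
The plan is to deduce Corollary \ref{cor-regularised} directly from Lemma \ref{delta-reg-est} with $k=1$, since the heavy technical work has already been done there. The key is simply to unfold the definitions and then extract a pointwise $L^\infty$ bound from the ambient weighted H\"older norm in space-time.

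First, I would note the algebraic identity
\[
u_{f+h}^\delta - u_f^\delta - D\Phi_f^\delta[h](\cdot) \;=\; u_{f+h}^\delta - M_0^\delta[h] - M_1^\delta[h] \;=\; R_1^\delta[h]
\]
(where the last equality uses $M_0^\delta[h] = u_f^\delta$ and the definition of $R_1^\delta[h]$). In particular, evaluating at $t=D$ gives $u_{f+h}^\delta(D) - u_f^\delta(D) - D\Phi_f^\delta[h] = R_1^\delta[h](D)$, since $D\Phi_f^\delta[h] = M_1^\delta[h](D)$ by \eqref{Eq:RegularDeriv}. This reduces both claimed bounds to estimates on $R_1^\delta[h]$.

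Next, applying Lemma \ref{delta-reg-est} with $k=1$ immediately yields
\[
\big\| R_1^\delta[h] \big\|_{C^\alpha_{\alpha+\mu}((0,T];H^2_N(\mathcal O))} \le C \|h\|_{C^1}^{2}\, \delta^{-\gamma},
\]
which proves the first assertion of the corollary.

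For the pointwise bound at $t=D$, I would extract the $H^2$-norm at time $D$ from the weighted H\"older norm using the first term in the definition \eqref{C-alphabeta}: since $\beta - \alpha = \mu$ here,
\[
\|R_1^\delta[h](D)\|_{H^2} \;\le\; D^{-\mu} \sup_{0<t\le T} t^{\mu}\|R_1^\delta[h](t)\|_{H^2} \;\le\; D^{-\mu}\, \|R_1^\delta[h]\|_{C^\alpha_{\alpha+\mu}((0,T];H^2_N(\Ocal))}.
\]
Then, since $d\le 3$, the Sobolev embedding $H^2(\Ocal) \hookrightarrow C(\bar\Ocal)$ gives $\|R_1^\delta[h](D)\|_{L^\infty} \lesssim \|R_1^\delta[h](D)\|_{H^2}$, and combining with the previous two displays yields
\[
\|R_1^\delta[h](D)\|_{L^\infty} \;\lesssim\; D^{-\mu}\, \|h\|_{C^1}^2\, \delta^{-\gamma} = O(\delta^{-\gamma}\|h\|_{C^1}^2),
\]
with implicit constants absorbing the fixed $D>0$.

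There is no real obstacle here beyond correctly tracing the definitions; the dimensional restriction $d\le 3$ enters only at the final Sobolev embedding step (and was already built into Lemma \ref{delta-reg-est} for the interpolation arguments bounding $R_0^\delta[h]$). The whole argument is essentially a bookkeeping reduction to the technical lemma.
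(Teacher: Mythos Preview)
Your proposal is correct and follows essentially the same route as the paper: the corollary is deduced immediately from Lemma \ref{delta-reg-est} with $k=1$ via the identification $u_{f+h}^\delta - u_f^\delta - M_1^\delta[h] = R_1^\delta[h]$, and the $L^\infty$ bound at time $D$ is obtained by Sobolev embedding together with the pointwise-in-time control built into the $C^\alpha_{\alpha+\mu}$ norm (the paper carries out exactly this chain of inequalities when bounding Term $I$ in the proof of Theorem \ref{first-der}).
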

In fact, following the arguments of \cite{W19}, one could also prove that the regularised transition densities are direction-wise analytic. However, this is not needed for the algorithmic purposes of this article, and we shall omit this generalisation in order to avoid additional technicalities.

We conclude this section with the proof of Lemma \ref{delta-reg-est}.

\begin{proof}[Proof of Lemma \ref{delta-reg-est}]  We proceed by induction on the order $k$. \textbf{Induction start $k=0$.} This is shown in Lemma \ref{R0-reg}.

\textbf{Induction step $(k-1)\to k$.}
Suppose now that $k\ge 1$, and that the claim holds for some $k-1$. In view of the PDE characterisation (\ref{Rk-PDE})  of $R_k^\delta[h]$ and by the regularity estimate from Theorem \ref{thm:lunardi-main}, our proof strategy is to derive a bound for $\|\nabla\cdot (h\nabla R_{k-1}^\delta[h])\|_{C^\alpha_{\alpha+\mu}((0,T];L^2(\mathcal O))}$. To this end, for each $t\in(0,T]$, we estimate
\begin{align*}
    t^\mu \|\nabla\cdot (h\nabla R_{k-1}^\delta[h])(t)\|_{L^2}&\lesssim t^\mu \|h\|_{C^1} \|R_{k-1}^\delta[h](t)\|_{H^2}\\
    &\lesssim \|h\|_{C^1} \|R_{k-1}^\delta[h]\|_{C^\alpha_{\alpha+\mu}((0,T];H^2_N(\Ocal))}.
    \end{align*}
Similarly, it holds that for any $t>t'\ge \eps>0$,
\begin{align*}
%    \eps^{\alpha+\mu} \|\nabla\cdot (h\nabla [R_{k-1}^\delta[h](t)&-R_{k-1}^\delta[h](t')])\|_{L^2}\\
    \eps^{\alpha+\mu} \|\nabla\cdot (h\nabla [R_{k-1}^\delta[h])(t)&-\nabla\cdot (h\nabla R_{k-1}^\delta[h])(t')\|_{L^2}\\
    &\le \eps^{\alpha+\mu}\|h\|_{C^1}\|R_{k-1}^\delta[h](t)-R_{k-1}^\delta[h](t')\|_{H^2}\\
    &\le \eps^{\alpha+\mu}\|h\|_{C^1} (t-t')^\alpha\|R_{k-1}^\delta[h](t)\|_{C^\alpha([\eps,T]; H^2_N(\mathcal O))}\\
     &\le \|h\|_{C^1} (t-t')^\alpha \|R_{k-1}^\delta[h]\|_{C^\alpha_{\alpha+\mu} ([0,T]; H^2_N(\mathcal O))}.
\end{align*}
Thus, using the above two displays together with the PDE (\ref{Rk-PDE}), the regularity estimate (\ref{thm-lunardi}) and the induction hypothesis, we obtain
\begin{align*}
    \|R_k^\delta[h]\|_{C^\alpha_{\alpha+\mu}((0,T];H^2_N(\mathcal O))} &+  \|\partial_t R_k^\delta[h]\|_{C^{\alpha}_{\alpha+\mu}((0,T];L^2(\Ocal))}\\
    &\le \big\|\nabla \cdot (h\nabla R_{k-1}^\delta[h]) \big\|_{C^\alpha_{\alpha+\mu}((0,T];L^2(\mathcal O))}\\
    &\lesssim \|h\|_{C^1} \big\| R_{k-1}^\delta[h]\big\|_{C^\alpha_{\alpha+\mu}((0,T];H^2_N(\mathcal O))} 
    \le C \|h\|_{C^1} \|h\|_{C^1}^k \delta^{-\gamma},
\end{align*}
    which proves the claim.
\end{proof}

%
%
%

%%%%%%%%%%%%%%%%%%%%%%%%%%%%%%%%%%%%%%%%%%%%%%
\subsection{Taking the limit $\delta\to 0$}\label{sec:delta-to-0}

Corollary \ref{cor-regularised} characterises the linearisation of the regularised transition densities for fixed $\delta>0$. To prove Theorem \ref{first-der}, we need to take the limit $\delta\to 0$. Since the regularity estimates obtained in the previous section depend in a specific manner on $\delta$, this requires some care; in particular, $\delta$ will be chosen according to the size of the perturbation $h$.

\begin{proof}[Proof of Theorem \ref{first-der}] Fix $x,y\in \mathcal O$, and $f$ and $h$ as in the hypotheses. Our goal to show that the (unregularised) transition densities satisfy
\[\Big
    | p_{D,f+h}(x,y) - p_{D,f}(x,y) - D\Phi_f[h](x) \Big| = o (\|h\|_{C^1}).
\]
To mirror the preceding notation, we write $u_f(t):= p_{t,f}(\cdot,y)$, $t\ge0$. Then, it suffices to show that 
\[\|u_{f+h}(D) - u_f (D) - D\Phi_f[h]\|_{L^\infty} = o(\|h\|_{C^1}).\]
We take the `intuitive' approach of approximating each of the above three terms with their $\delta$-regularised version, obtaining the decomposition
\begin{align*}
    \|u_{f+h}(D) - u_f (D) - D\Phi_f[h]\|_{L^\infty}
    &\le \|u_{f+h}^\delta(D) - u_{f+h}^\delta(D) -D\Phi^\delta_f [h] \|_{L^\infty}
    \\
    &\quad + \|u_{f+h}^\delta(D) - u_{f+h}(D)\|_{L^\infty} \\
    &\quad + \|u_{f}^\delta(D)- u_{f}(D)\|_{L^\infty}+ \|D\Phi^\delta_f [h]- D\Phi_f [h] \|_{L^2}\\
    &=: I+II+III+IV.
\end{align*}

\textbf{Choice of $\delta$.} We now make the crucial choice for the regularisation parameter $\delta$ in dependence of $h$. Let $\gamma \in (0,1)$ be the constant from Lemma \ref{delta-reg-est}. Then, we fix some $\beta\in (1,\gamma^{-1})$, and set $\delta:=\|h\|_{C^1}^{\beta}$. For this choice of $\delta$, we will be able to show that all four terms in the preceding decomposition are of order $o(\|h\|_{C^1})$.

\textbf{Term $I$.} The first term is exactly equal to $\|R_1^\delta[h](D)\|_{L^\infty}$. Thus, using Lemma \ref{delta-reg-est} as well as the Sobolev embedding $H^2(\mathcal O) \subset C(\mathcal O)$ (since $d\le 3$), we see that for some $\alpha>0$,
\begin{align*}
    I&=\|R_1^\delta[h](D)\|_{L^\infty}\\
    &\lesssim \|R_1^\delta[h](D)\|_{H^2}
    \lesssim  \|R_1^\delta[h]\|_{C^\alpha_{\alpha+\mu}((0,T]; H^2_N(\mathcal O))}
\lesssim \|h\|_{C^1}^2\delta^{-\gamma} = \|h\|_{C^1}^{2-\gamma\beta}= o(\|h\|_{C^1}).
\end{align*}
For the last identity, we used that $\beta\gamma<1$.

\textbf{Term $II$.} %To estimate the second term, we notice that $u_{f+h} (D) \in H^2_N$, i.e., $u_{f+h}$ belongs to the domain of the Neumann Laplace operator.
Let us denote the transition operators of the reflected Brownian motion by $(P_t,\ t>0)$. Since $(P_t,\ t>0)$ constitutes an analytic semigroup generated by the Laplace operator acting on $C(\mathcal O)$ with domain $C^2_N(\mathcal O):=\{u\in C^2(\Ocal):\partial_\nu u= 0\}$ (see Section \ref{Lf-sectorial} below for details), we can use the inequality (\ref{C2-decay}) together with the assumption that $\|h\|_{C^2}\le 1$ to obtain
\[  
    II= \| (P_\delta -Id ) u_{f+h}(D) \|_{C(\mathcal O)} \lesssim \delta \|u_{f+h}(D)\|_{C^2_N(\mathcal O)}\lesssim \delta \|u_{f+h}(D)\|_{C^{2+\eta}} = o(\|h\|_{C^1}).
\]
Here, we also used the fact that
\[ \sup_{h\in C^2: \|h\|_{ C^{1+\eta}} \le 1}\|u_{f+h}\|_{C^{2+\eta}} <\infty,\]
which follows from Lemma \ref{lem-heatkernel-uniform} below.

\textbf{Term $III$.} The third term can be treated in the same way as Term $II$, with $u_{f+h}$ replaced by $u_f$.

\textbf{Term $IV$.} We further decompose
\begin{align*}
    IV &= \|D\Phi_f^\delta[h] - D\Phi_f[h] \|_{L^\infty} \\
    &= \Big\| \int_0^D P_{D-s,f} \big[ \Lcal_h \big( p_{s,f}^\delta(x,\cdot) - p_{s,f}(x,\cdot) \big)\big]  ds \Big\|_{L^\infty}\\
    &=  \Big\| \int_0^{D/2} P_{D-s,f} \big[ \Lcal_h \big( p_{s,f}^\delta(x,\cdot) - p_{s,f}(x,\cdot) \big)\big]  ds \Big\|_{L^\infty}\\
    &\quad+\Big\| \int_{D/2}^D P_{D-s,f} \big[ \Lcal_h \big( p_{s,f}^\delta(x,\cdot) - p_{s,f}(x,\cdot) \big)\big]  ds \Big\|_{L^\infty}\\
    &=: IV_a+IV_b.
\end{align*}

\textbf{Term $IV_a$.} The main difficult is to deal with the singularity of $p_{s,f}$ for $s\to 0$. Using Lemma \ref{Pt-infinity-est}, we obtain that for some $C>0$ and any $t>D/2$, 
\begin{align*}
   \big\| P_{t,f} \big[ \Lcal_h \big( p_{s,f}^\delta(x,\cdot) - &p_{s,f}(x,\cdot) \big)\big]\big\|_{L^\infty} \le  \big\|\nabla \cdot \big(h\nabla \big( p_{s,f}^\delta(x,\cdot) - p_{s,f}(x,\cdot) \big)\big)\big\|_{(H^2_N)^*}\\
   &=\sup_{\varphi\in H^2_N:\|\varphi\|_{H^2}\le 1} \Big| \int_{\mathcal O} \big[ \nabla \cdot \big(h \nabla\big( p_{s,f}^\delta(x,\cdot) - p_{s,f}(x,\cdot) \big) \big] (z) \varphi(z) dz \Big|.
    %&=\sup_{\varphi\in H^2_N:\|\varphi\|_{H^2_N}\le 1} \Big|- \int_{\mathcal O} h (z) \langle \nabla \big( p_{s,f}^\delta(x,\cdot) - p_{s,f}(x,\cdot) \big) (z), \nabla \varphi(z) \rangle_{\R^d} dz \Big|\\
    \end{align*}
Using the self-adjointness of the differential operator $\nabla\cdot(h\nabla [\cdot])$, the dual characterisation of $\|\cdot\|_{L^2}$, (\ref{H2-decay}) as well as Lemma \ref{lem-heatkernel}, for any $\eta\in (0,1/2)$ we can further estimate the right hand side, up to a multiplicative constant, by 
\begin{align*}
   %\big\|h \nabla \big( p_{s,f}^\delta(x,\cdot) - p_{s,f}(x,\cdot) \big)\big\|_{H^{-1}}&\le \|h\|_{C^1} \big\| \nabla \big( p_{s,f}^\delta(x,\cdot) - p_{s,f}(x,\cdot) \big)\big\|_{H^{-1}}\\
    \sup_{\varphi\in H^2_N:\|\varphi\|_{H^2}\le 1} &\Bigg| \int_{\mathcal O} \big( p_{s,f}^\delta(x,z) - p_{s,f}(x,z) \big)  [\nabla \cdot \big(h \nabla \varphi \big)](z) dz \Bigg|\\
    &\lesssim \|h\|_{C^1} \big\|  p_{s,f}^\delta(x,\cdot) - p_{s,f}(x,\cdot) \big\|_{L^2}\\
    &= \|h\|_{C^1} \big\| (P_{\delta,0} - Id) p_{s,f}(x,\cdot) \big\|_{L^2} \\
    &\lesssim \|h\|_{C^1}  \delta^\eta \big\|p_{s,f}(x,\cdot) \big\|_{H^{2\eta}} \\
    &\lesssim \|h\|_{C^1}  \delta^\eta \big\|p_{s,f}(x,\cdot) \big\|_{L^2}^{1-\eta}\big\|p_{s,f}(x,\cdot) \big\|_{H^2}^\eta
    \lesssim \|h\|_{C^1}  \delta^\eta s^{-d(1-\eta)/4} s^{-\eta (d/2+2)}.
\end{align*}
Now let 
\begin{equation}\label{eta-choice}
    \gamma\equiv \gamma (\eta):= d(1-\eta) /4 + \eta ( d/2+2).
\end{equation}
Note that this corresponds to the exponent defined in \eqref{gamma-alpha}. By choosing $\eta>0$ small enough and using the fact that $d\le 3$, we can ensure that $\gamma<1$, so that the previous expression is integrable at $s=0$. As a consequence, for any such $\eta$, we have
\[ IV_a \lesssim \|h\|_{C^1}\delta^\eta = o(\|h\|_{C^1}). \]

\textbf{Term $IV_b$.} We now turn to term $IV_b$. Here, we fix some $y\in \mathcal O$. Then, using the self-adjointness of $\Lcal_h$ as well as of $P_{\delta,0}-Id$, we obtain
\begin{align*}
    \int_{D/2}^D P_{D-s,f} & \big[ \Lcal_h \big( p_{s,f}^\delta(x,\cdot) - p_{s,f}(x,\cdot) \big)\big] (y) ds \\
    &= \int_{D/2}^D \langle P_{D-s,f}(y,\cdot) , \Lcal_h \big( p_{s,f}^\delta(x,\cdot) - p_{s,f}(x,\cdot) \big)\rangle_{L^2} ds\\
    &=\int_{D/2}^D \langle (P_{\delta,0}-Id)\Lcal_h (P_{D-s,f}(y,\cdot)), p_{s,f}(x,\cdot) \rangle_{L^2} ds\\
    &= \int_{D/2}^D P_{s,f}\big[ (P_{\delta,0}-Id)\Lcal_h (P_{D-s,f}(y,\cdot)) \big](x) ds \\
     &= \int_0^{D/2} P_{D-s,f}\big[ (P_{\delta,0}-Id)\Lcal_h (p_{s,f}(y,\cdot)) \big] (x)ds\\
     &\le  \int_0^{D/2} \big\| P_{D-s,f}\big[ (P_{\delta,0}-Id)\Lcal_h (p_{s,f}(y,\cdot)) \big] \big\|_{L^\infty} ds.
\end{align*}
Now, we can employ a similar chain of estimates (albeit slightly more complicated) as we did for term $IV_a$. Another type of interpolation space between $L^2(\Ocal)$ and $H^2_N(\Ocal)$, different from the previously used ones, will be useful here. We follow Chapter 1, Section 2.1 of \cite{LM72}. We denote by $[H^2_N(\mathcal O), L^2(\mathcal O)]_\eta,~\eta\in [0,1]$, the interpolation spaces obtained through the construction there, defined as the domain of fractional powers $\Delta^\eta$ of the Neumann-Laplacian, satisfying $H^2_N(\mathcal O)\subseteq [H^2_N(\mathcal O), L^2(\mathcal O)]_\eta\subseteq L^2(\mathcal O)$. By the duality result is Theorem 6.2 of Chapter 1 in \cite{LM72} and identifying $(L^2(\Ocal))=(L^2(\Ocal))^*$, it holds that for all $\eta\in [0,1]$,
\begin{align}\label{interpol-dual}
    ([H^2_N(\mathcal O), L^2(\mathcal O)]_{1-\eta})^*=[L^2(\mathcal O), (H^2_N(\mathcal O))^*]_{\eta}.
\end{align}

For any $s\in [0,D]$, using Lemma \ref{Pt-infinity-est}, we can then estimate the integrand as follows, for any $\eta\in (0,1)$,
\begin{align*}
    \big\| P_{D-s,f}\big[ &(P_{\delta,0}-Id)\Lcal_h (p_{s,f}(y,\cdot)) \big] \big\|_{L^\infty}\\
    &\lesssim  \big\|(P_{\delta,0}-Id)\Lcal_h (p_{s,f}(y,\cdot)) \big\|_{(H^2_N)^*}\\
    &\lesssim \sup_{\varphi\in H^2_N(\Ocal):\|\varphi\|_{H^2}\le 1}\Big|\langle (P_{\delta,0}-Id)\Lcal_h (p_{s,f}(y,\cdot)) ,\varphi \rangle_{L^2}\Big| \\
    &\lesssim \sup_{\varphi\in H^2_N(\Ocal):\|\varphi\|_{H^2}\le 1}\Big|\langle p_{s,f}(y,\cdot) , \Lcal_h (P_{\delta,0}-Id)\varphi \rangle_{L^2}\Big|\\
    & \lesssim  \|p_{s,f}(y,\cdot)\|_{[H^2_N, L^2]_\eta} \sup_{\varphi\in H^2_N(\Ocal):\|\varphi\|_{H^2}\le 1}\| \Lcal_h (P_{\delta,0}-Id)\varphi \|_{([H^2_N, L^2]_\eta)^*}\\
    & \lesssim \|p_{s,f}(y,\cdot)\|_{H^{2}}^\eta \|p_{s,f}(y,\cdot)\|_{L^{2}}^{1-\eta} \sup_{\varphi\in H^2_N(\Ocal):\|\varphi\|_{H^2}\le 1}\| \Lcal_h (P_{\delta,0}-Id)\varphi \|_{[L^2,(H^2_N)^*]_{1-\eta}},
\end{align*}
the last step holding since $([L^2(\Ocal),H^2_N(\Ocal)]_\eta)^* = [L^2(\Ocal),(H^2_N(\Ocal))^*]_{1-\eta}$ (see Theorem 6.2 in \cite{LM72}). Then, using Lemma \ref{Lh-interpol} below, as well as the boundedness of $P_{\delta,0}-Id$ as an operator from $H^2_N(\Ocal)$ to itself, we see that
\begin{align*}
    \| \Lcal_h (P_{\delta,0}-Id)\varphi \|_{([L^2,H^2_N]_\eta)^*}&\lesssim \|h\|_{C^1} \| (P_{\delta,0}-Id)\varphi \|_{[L^2,H^2_N]_{1-\eta}}\\
    &\lesssim \|h\|_{C^1}\| (P_{\delta,0}-Id)\varphi \|_{L^2}^\eta \| (P_{\delta,0}-Id)\varphi \|_{H^2}^{1-\eta}\\
    &\lesssim \|h\|_{C^1} \delta^\eta \|\varphi \|_{H^2}.
\end{align*}
Now, we can choose $\eta>0$ small enough, just as after \eqref{eta-choice}, such that $\gamma(\eta)<1$. Combining the preceding estimates, we obtain that 
\begin{align*}
      \big\| P_{D-s,f}\big[ &(P_{\delta,0}-Id)\Lcal_h (p_{s,f}(y,\cdot)) \big] \big\|_{L^\infty} \lesssim \|h\|_{C^1}\delta^\eta s^{-\gamma},
\end{align*}
which is integrable on $s\in (0,D/2)$. Upon integration, we have $IV_b = O(\|h\|_{C^1}\delta^\eta)= o(\|h\|_{C^1})$, and the proof is complete.
\end{proof}

\begin{lemma}\label{Lh-interpol}
    Let $\eta\in [0,1]$. Then, there exists some $C>0$ such that for all $h\in C^1(\Ocal)$ and all $u \in H^2_N(\mathcal O)$,
\[ \|\Lcal_h u\|_{[L^2,(H^2_N)^*]_{1-\eta}}\le C\|h\|_{C^1} \|u\|_{[H^2_N, L^2]_{\eta}}. \]
\end{lemma}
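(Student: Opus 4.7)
The plan is to deduce the lemma by interpolation between two endpoint estimates for the operator $\Lcal_h[\cdot] = \nabla \cdot (h \nabla [\cdot])$, corresponding to $\eta = 0$ and $\eta = 1$.

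For the first endpoint ($\eta = 0$), I will show that $\Lcal_h : H^2_N(\Ocal) \to L^2(\Ocal)$ is bounded with operator norm $\lesssim \|h\|_{C^1}$. This follows directly from the product rule $\Lcal_h u = \nabla h \cdot \nabla u + h \Delta u$ together with H\"older's inequality, giving $\|\Lcal_h u\|_{L^2} \le \|\nabla h\|_\infty \|\nabla u\|_{L^2} + \|h\|_\infty \|\Delta u\|_{L^2} \lesssim \|h\|_{C^1} \|u\|_{H^2}$. For the second endpoint ($\eta = 1$), I will establish $\Lcal_h : L^2(\Ocal) \to (H^2_N(\Ocal))^*$ bounded with norm $\lesssim \|h\|_{C^1}$ via duality. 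The starting point is the self-adjointness relation $\langle \Lcal_h u, \varphi\rangle_{L^2} = \int u \, \Lcal_h \varphi \, dx$, which holds for $u, \varphi \in H^2_N$ by two successive integrations by parts (all boundary terms vanish since $\partial_\nu u = \partial_\nu \varphi = 0$). This identity naturally extends the action of $\Lcal_h$ to inputs $u \in L^2$ distributionally, and applying the first endpoint bound to $\varphi$ yields $|\langle \Lcal_h u, \varphi\rangle| \le \|u\|_{L^2}\|\Lcal_h \varphi\|_{L^2} \lesssim \|h\|_{C^1} \|u\|_{L^2} \|\varphi\|_{H^2}$, whence the claim upon taking the supremum over $\varphi \in H^2_N$ with $\|\varphi\|_{H^2} \le 1$.

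Applying the complex interpolation functor to these two endpoint estimates then produces, for every $\eta \in [0, 1]$, the boundedness of $\Lcal_h$ from $[H^2_N, L^2]_\eta$ into the complex interpolate of the target spaces, with operator norm again controlled by $\|h\|_{C^1}$. Identifying this target space with $[L^2, (H^2_N)^*]_{1-\eta}$ as in the statement is then a routine matter of combining the symmetry property $[X, Y]_\theta = [Y, X]_{1-\theta}$ of interpolation couples with the duality relation (\ref{interpol-dual}) quoted above from \cite{LM72}. I do not foresee any substantial obstacle in this plan; the only delicate point is justifying the distributional extension of $\Lcal_h$ to $L^2$-inputs in the second endpoint, which crucially exploits that test functions $\varphi \in H^2_N$ carry the zero Neumann boundary condition so that the boundary integrals arising in the integration by parts vanish.
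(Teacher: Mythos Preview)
Your proposal is correct and follows essentially the same approach as the paper's own proof: establish the two endpoint bounds $\Lcal_h: H^2_N(\Ocal) \to L^2(\Ocal)$ (via the product rule) and $\Lcal_h: L^2(\Ocal) \to (H^2_N(\Ocal))^*$ (via self-adjointness and duality), then interpolate using the result from \cite{LM72} and invoke the symmetry/duality identities for the interpolation scale to match the form in the statement.
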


\begin{proof}
Clearly, $\Lcal_h:H^2_N(\Ocal)\to L^2(\Ocal)$ is bounded and satisfies, for any $u\in H^2_N(\Ocal)$,
\[ 
    \|\Lcal_h u\|_{L^2}= \|\nabla \cdot (h\nabla u)\|_{L^2}\lesssim \|h\|_{C^1}\|u\|_{H^2}. 
\] 
Using a standard duality argument, we then obtain that for some $C>0$, for any $u\in H^2_N(\Ocal)$,
\begin{align*}
    \|\Lcal_hu\|_{(H^2_N)^*}
    &= \sup_{\varphi\in H^2_N(\Ocal):\|\varphi\|_{H^2}\le 1} \Big| \int_\Ocal \varphi (x)\Lcal_hu(x)dx \Big|\\   &=\sup_{\varphi\in H^2_N(\Ocal):\|\varphi\|_{H^2}\le 1}\Big| \int_\Ocal u(x) \Lcal_h\varphi(x)dx \Big|\\
    &\le C\sup_{\varphi\in L^2(\Ocal):\|\varphi\|_{L^2}\le 1}\Big| \int u(x) \varphi(x)dx \Big| =  C\|u\|_{L^2}.
\end{align*}
Following the argument of the Theorem 5.1 in \cite{LM72}, we then see that for all $u\in H^2_N(\mathcal O)$,
\[
    \|\Lcal_hu\|_{[L^2, (H^2_N)^*]_\eta} \lesssim \|u\|_{[H^2_N,L^2]_\eta}. 
\]
But at the same time, by (\ref{interpol-dual}), we have that $[L^2(\Ocal), (H^2_N(\Ocal))^*]_\eta=[H^2_N(\Ocal), L^2(\Ocal)]_{1-\eta}$, completing the proof.
\end{proof}

\section{Auxiliary technical results on parabolic equations and heat kernels}
\label{app:aux}

We review some basic definitions and facts about analytic semigroups, which constitute some crucial technical tools in the proofs of our main results.

%
%
%

%%%%%%%%%%%%%%%%%%%%%%%%%%%%%%%%%%%%%%%%%%%%%%
\subsection{Background on parabolic PDEs}\label{app:parabolic}

A key idea underpinning the analysis to follow is to interpret a collection of transition operators $(P_{t}, \ t\ge0)$ as an analytic semigroup generated by a so-called `sectorial' operator $\Lcal$. This allows to study the regularity properties of the solutions to abstract Cauchy problems of the form
\begin{equation}\label{parabol}
\begin{cases}
        \partial_t u(t) = \mathcal L u(t) + g(t),& t\in (0,T),\\
        u(0)=u_0.
\end{cases}
\end{equation}
Naturally, in our setting, the relevant operator is given by the infinitesimal generator $\Lcal_f$ of the diffusion process (\ref{Eq:SDE}).

%
%
%

%%%%%%%%%%%%%%%%%%%%%%%%%%%%%%%%%%%%%%%%%%%%%%
\subsubsection{Sectorial operators and analytic semigroups}
We follow the presentation in the monograph by Lunardi  \cite{lunardi}. Let $X$ be a complex Banach space, and suppose that $\mathcal L:\mathcal D(\mathcal L)\to X$ is a linear operator with domain $\mathcal D(\mathcal L)\subseteq X$, constituting a (not necessarily densely defined) subspace of $X$. Heuristically, $\mathcal L$ is called a \textit{sectorial operator} if its resolvent set $\rho(\mathcal L)\subseteq\mathbb C$ contains a `sector', and if the resolvents $R(\lambda,\mathcal L)$, $\lambda\in\rho(\Lcal)$, satisfy a certain decay. In particular, $\mathcal L$ is sectorial if there exist constants $\omega\in\R$, $\theta\in (\pi/2,\pi)$, and $M>0$ such that 
\begin{align*}
    \rho(\mathcal L)&\supset S_{\theta,\omega} := \{ \lambda \in \mathbb C: \lambda\neq \omega,~|\arg(\lambda-\omega)|<\theta \},
\end{align*}
and, denoting by $\|\cdot\|_{X\to X}$ the usual operator norm,
\begin{align}
    \| R(\lambda,\mathcal L)\|_{X\to X} &\le \frac{M}{|\lambda-\omega|},\qquad\forall \lambda \in S_{\theta,\omega}.\label{res-est}
\end{align}
It turns out that sectoriality is sufficient to that the operator exponentials $(e^{t\mathcal L}$,\ $t>0)$ are well-defined via functional calculus. They then form the analytic semigroup 
\[ 
    (P_t,\ t\ge 0),\qquad P_t:=e^{t\mathcal L},
\]
in the sense that the map $t\mapsto P_t$ is  analytic.

%

%%%%%%%%%%%%%%%%%%%%%%%%%%%%%%%%%%%%%%%%%%%%%%
\subsubsection{Characterisation of $\mathcal D(\mathcal L)$ and interpolation spaces}
The domain $\mathcal D(\mathcal L)$ of the sectorial operator $\Lcal$ is naturally equipped with the graph norm
\[ 
    \|u\|_{\mathcal D(\mathcal L)}:= \|u\|_X + \|\mathcal Lu\|_X,\qquad u\in \mathcal D(\mathcal L).
\]
Moreover, the behaviour of the map $t\mapsto e^{t\mathcal L}u$ at $t=0$ characterises the domain $\mathcal D(\mathcal L)$. Namely, it holds that $\lim_{t\to 0}(e^{t\mathcal L}u-u)/t$ exists (as a limit in $X$) and equals $\mathcal Lu$ if and only if $u\in \mathcal D(\mathcal L)$ and $\mathcal Lu\in \overline{\mathcal D(\mathcal L)}$. As an consequence, for every $u\in \mathcal D(\mathcal L)$ with $\Lcal u\in \overline{\mathcal D(\mathcal L)}$, an alternative characterisation of the graph norm of $u$ is
\begin{equation}\label{DL-decay}
    \|u\|_{\mathcal D(\mathcal L)}\simeq \|u\|_X + \sup_{0< t\le 1} t^{-1} \|e^{t\mathcal L}u-u\|_X.
\end{equation}
One can similarly define certain interpolation spaces $\mathcal D_\alpha$, for $\alpha\in (0,1)$, between $X$ and $\mathcal D(\mathcal L)$, which are of importance in the proofs.  Specifically, set
\begin{equation}\label{D-alpha}
    \mathcal D_\alpha:=\Big\{u\in X: \sup_{t\in (0,1]}t^{-\alpha} \|e^{t\mathcal L} u-u \|_X <\infty \Big\},
\end{equation}
which is equipped with norm
\begin{equation}\label{Dalpha-decay}
    \|u\|_{\mathcal D_\alpha} := \| u\|_{X} + \sup_{t\in (0,1] } t^{-\alpha} \| e^{t\mathcal L}u -u \|_{X}.
\end{equation}
We refer to Section 2.2.1 and Proposition 2.2.4 in \cite{lunardi} for further details on the interpretation of these spaces -- note that our spaces $\mathcal D_\alpha$ correspond to the spaces $D_\Lcal(\alpha)=(X,\mathcal D(\Lcal))_{\alpha}$ from \cite{lunardi}. The above interpolation can be defined for instance via the `K-method'; see the display (1.2.4) in \cite{lunardi}.

%

%%%%%%%%%%%%%%%%%%%%%%%%%%%%%%%%%%%%%%%%%%%%%%
\subsubsection{The case $\mathcal L=\mathcal L_f$}\label{Lf-sectorial}

For our purposes, the relevant instance is the transition semigroup $(P_{t,f}, \ t\ge0)$ of the reflected diffusion process (\ref{Eq:SDE}), with associated (elliptic) infinitesimal generator 
$\mathcal L_f[\cdot]= \nabla \cdot(f\nabla [\cdot])$ equipped with zero Neumann boundary conditions. Two different realisations of the operator $\mathcal L_f$ will play a role, firstly viewed as an operator on the Hilbert space $X=L^2(\mathcal O)$ with domain 
\[ 
    H^2_N(\bar{\mathcal O})=\big\{ u \in H^2(\mathcal O): \partial_\nu u =0~\text{on}~\partial\mathcal O \big\},
\]
and secondly as an operator on the Banach space $X=C(\bar{\mathcal O})$ with domain $C^2_N(\bar{\mathcal O})$ defined similarly to $H^2_N(\Ocal)$. It is well-known that $\mathcal L_f$ forms a sectorial operator on both spaces. Indeed, on $L^2(\mathcal O)$, this follows from Theorems 3.1.2 and 3.1.3 in \cite{lunardi}, where the main tool for the verification of the resolvent bounds (\ref{res-est}) are the a-priori estimates by Agmon-Douglis-Nirenberg \cite{ADN64}. On $C(\bar{\mathcal O})$, similar resolvent bounds can be verified; see Section 3.1.5, and especially Corollary 3.1.24 (ii), in \cite{lunardi}. We also note that the zero Neumann boundary conditions imposed here satisfy the non-tangentiality condition (3.1.3) in \cite{lunardi}, since
\[ 
    \inf_{x\in \partial \mathcal O} \Big|\sum_{i=1}^n\nu_i^2(x)\Big|
    =\inf_{x\in \partial \mathcal O}|\nu(x)|= 1>0. 
\]

The equations \eqref{DL-decay} and \eqref{Dalpha-decay} then directly yield useful characterisations of Sobolev- and H\"older-type norms in terms of the behaviour of $P_{t,f}u-u$, which we use throughout. Specifically, we note that for any $u\in H^2_N(\mathcal O)$, $f\in \mathcal F$ and $\alpha\in [0,1]$, it holds that
\begin{equation}\label{H2-decay}
  \|P_{t,f} u -u\|_{L^2} \lesssim t^\alpha \|u\|_{H^{2\alpha}},\qquad t\in (0,1).  
\end{equation}
Similarly, for $u\in C^2_N(\mathcal O)$, 
\begin{equation}\label{C2-decay}
    \|P_{t,f} u -u\|_{L^\infty} \lesssim t \|u\|_{C^2},\qquad t\in (0,1).    
\end{equation}

%

%%%%%%%%%%%%%%%%%%%%%%%%%%%%%%%%%%%%%%%%%%%%%%
\subsubsection{Regularity estimates for parabolic PDEs}
We now provide an overview on some key regularity results for the parabolic PDE (\ref{parabol}). A variety of notions of solutions to such equations exists in the literature. Since, via the regularisation argument developed in Section \ref{sec-gradient-pf}, we only need to consider a sequence of parabolic PDEs with regularised initial conditions, the strongest of these notions will suffice for our purposes: thus, when we speak of a solution to the Cauchy problem (\ref{parabol}), we mean a \textit{strict solution} in the sense that for all $t\in [0,T]$, $\partial_t u(t)=\mathcal Lu(t)+g(t)$ and $u(0)=u_0$.

When $g:[0,T]\to X$ is continuous, one then shows that any (strict) solution to (\ref{parabol}) must necessarily be given by the variation-of-constants formula
\[ 
    u(t) = e^{t\mathcal L} u_0(t) +\int_0^te^{(t-s)\mathcal L} g(s)ds,\qquad 0\le t\le T,
\]
where the integral on the right hand side is in the Bochner sense; see Proposition 4.1.2 in \cite{lunardi}. For our purposes, this property is always fulfilled.

Turning to the regularity of the solutions to the Cauchy problem (\ref{parabol}), the following (scales of) function spaces are needed: firstly, for $\alpha\in (0,1)$, let $C^\alpha([0,T];X)$ be the set of $\alpha$-H\"older continuous functions from $[0,T]$ to $X$. Secondly, we introduce spaces that allow for certain singularities at $t=0$ -- this is central for the proofs in Section \ref{sec-gradient-pf}, as it enables to derive regularity estimates that are uniform with respect to the regularisation parameter introduced there. For $0<\alpha<1$ and $\beta>0$, set
\begin{equation*}
\begin{split}
    C^\alpha_\beta((0,T];X)&:=
    \bigcap_{0<\eps\le T} C^\alpha([\eps,T];X) \cap  \Big\{ u : \sup_{0<t\le T} t^{\beta-\alpha} \|u(t)\|_{X}<\infty\Big\}\\
    &\qquad\cap 
\Big\{ u: \sup_{0<\eps<T} \eps^\beta \|u\|_{C^\alpha([\varepsilon,T],X)} <\infty \Big\}.
\end{split}
\end{equation*}

The following result is deduced from Theorem 4.3.1 (iii) and Theorem 4.3.7 in \cite{lunardi}. Since we shall only deal with the case $u_0=0$, let
\begin{equation}
\label{Eq:HomogSol}
    u(t)= \int_0^te^{(t-s)\mathcal L} g(s)ds.
\end{equation}

\begin{theorem}\label{thm-lunardi}
(i) Suppose that $g\in C^\alpha([0,T];X)$ for some $\alpha\in (0,1)$, and let $g(0) \in \mathcal D(\mathcal L)$. Then, $u$ in \eqref{Eq:HomogSol} constitutes a strict solution to the Cauchy problem (\ref{parabol}). Furthermore, $u\in C^\alpha([0,T];D(\mathcal L))\cap C^{1+\alpha}([0,T];X)$.

\textbf{(ii)} Suppose additionally that $g\in C^\alpha_{\alpha+\mu}((0,T];X)$ for some $\alpha,\mu\in (0,1)$. Then, there exists $C>0$ (independent of $g$) such that
\[ 
    \|u\|_{C^\alpha_{\alpha+\mu}((0,T];\mathcal D(\mathcal L))}+\|\partial_t u\|_{C^\alpha_{\alpha+\mu}((0,T];X)} \le C\|g\|_{C^\alpha_{\alpha+\mu}((0,T];X)}. 
\]
\end{theorem}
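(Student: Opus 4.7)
The plan is to deduce Theorem \ref{thm-lunardi} as a direct consequence of the abstract optimal parabolic regularity results developed in Chapter 4.3 of \cite{lunardi}, after matching our setup to its hypotheses. Since $\mathcal{L}$ in our framework is sectorial (cf.\ Section \ref{Lf-sectorial} for the realisations $\mathcal{L}_f$ on $L^2(\mathcal{O})$ and $C(\bar{\mathcal{O}})$), the resolvent estimate (\ref{res-est}) is available, and consequently the operator exponential $e^{t\mathcal{L}}$ is well-defined and analytic in $t>0$. The key inequalities underpinning all subsequent estimates are the standard bounds $\|\mathcal{L} e^{t\mathcal{L}}\|_{X\to X} \lesssim t^{-1}$ and $\|e^{t\mathcal{L}}\|_{X \to X} \lesssim 1$ for $0 < t \le T$, arising as immediate consequences of sectoriality via contour-integral representations in the functional calculus.

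For part (i), I would invoke Theorem 4.3.1(iii) in \cite{lunardi} directly. The standard argument decomposes the variation-of-constants integral as
\[ u(t) = \int_0^t e^{(t-s)\mathcal{L}}\bigl(g(s) - g(t)\bigr)\, ds + \int_0^t e^{(t-s)\mathcal{L}} g(t)\, ds, \]
where the H\"older continuity of $g$ combined with the bound $\|\mathcal{L} e^{(t-s)\mathcal{L}}\|_{X\to X} \lesssim (t-s)^{-1}$ renders the first integrand integrable in the domain norm (since $|g(s)-g(t)|_X \lesssim (t-s)^\alpha$ absorbs the singularity). The second integral equals formally $\mathcal{L}^{-1}(e^{t\mathcal{L}} - Id)g(t)$, and the hypothesis $g(0) \in \mathcal{D}(\mathcal{L})$ is exactly what is needed to control behaviour at $t = 0$. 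Careful bookkeeping of these estimates yields the claimed regularity $u \in C^\alpha([0,T]; \mathcal{D}(\mathcal{L})) \cap C^{1+\alpha}([0,T]; X)$, and differentiating under the integral confirms that $u$ satisfies (\ref{parabol}) strictly.

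For part (ii), I would invoke Theorem 4.3.7 in \cite{lunardi}. The argument is analogous in spirit but requires tracking the singular weights $t^{\mu}$ and $\varepsilon^{\alpha+\mu}$ encoded in the definition (\ref{C-alphabeta}) of $C^\alpha_{\alpha+\mu}((0,T]; X)$. One splits the integration range into $[0, t/2]$ and $[t/2, t]$: on $[0, t/2]$, $e^{(t-s)\mathcal{L}}$ is evaluated at times bounded away from the singularity and maps into $\mathcal{D}(\mathcal{L})$ with good control; on $[t/2, t]$ the H\"older continuity of $g$ (weighted by $s^{\mu}$) is used, and the singular bound $\|\mathcal{L} e^{(t-s)\mathcal{L}}\|_{X\to X} \lesssim (t-s)^{-1}$ is integrated against $(t-s)^\alpha$. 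Combining these gives the required weighted estimate uniformly in $g$.

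The main (indeed essentially only) obstacle is notational verification: one must check that our spaces $C^\alpha_\beta((0,T]; X)$ defined in (\ref{C-alphabeta}) coincide with the weighted H\"older spaces $C^{\alpha}_{\alpha+\mu}$ used in \cite{lunardi}, that the domain $\mathcal{D}(\mathcal{L})$ equipped with the graph norm matches the underlying Banach space structure on which Lunardi's abstract theorems operate, and that sectoriality as recalled in (\ref{res-est}) implies the specific assumptions (such as the density of $\mathcal{D}(\mathcal{L})$ in the relevant interpolation spaces) invoked in Chapter 4.3 of \cite{lunardi}. Once this alignment is carried out, the theorem follows with no additional technical content.
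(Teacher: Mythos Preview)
Your proposal is correct and matches the paper's approach exactly: the paper does not give an independent proof but simply states that the result ``is deduced from Theorem 4.3.1 (iii) and Theorem 4.3.7 in \cite{lunardi},'' which are precisely the two theorems you invoke. Your additional sketch of the decomposition and weighted estimates is a faithful summary of Lunardi's arguments, but no further content is needed here.
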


\subsection{Estimates for elliptic second order differential operators}
The proofs in Section \ref{sec-gradient-pf} also require a number of estimates for the solutions of elliptic PDEs, both in $L^2$-type norms and in uniform norms. We begin providing the estimates for the former. The first result is a basic lemma concerning the equivalence of the graph norm of $\mathcal L_f$, for $f$ belonging to the parameter space $\mathcal F$ from (\ref{Eq:ParamSpace}), and the $H^2(\mathcal O)$-norm.

\begin{lemma}\label{norm-equiv}
Let $R>0$. There exists a constant $C>0$ only depending on $\fmin ,R$ and $\mathcal O$ such that for all $f\in\mathcal F$ with $\|f\|_{C^1}\le R$, and all $u\in H^2_N(\mathcal O)$,
\[ 
    C^{-1} \|u\|_{H^2}\le \|u \|_{L^2} + \| \mathcal L_f u\|_{L^2} \le C \|u\|_{H^2}. 
\]
\end{lemma}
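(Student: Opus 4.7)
The upper bound is elementary. By the product rule $\mathcal{L}_f u = f\Delta u + \nabla f\cdot \nabla u$, so for any $f$ with $\|f\|_{C^1}\le R$ one has $\|\mathcal{L}_f u\|_{L^2}\le R\|\Delta u\|_{L^2}+R\|\nabla u\|_{L^2}\le 2R\|u\|_{H^2}$, whence $\|u\|_{L^2}+\|\mathcal{L}_f u\|_{L^2}\le (1+2R)\|u\|_{H^2}$.

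For the lower bound, my strategy is to first pass from the $f$-weighted operator $\mathcal{L}_f$ to the plain Neumann Laplacian and then invoke the standard elliptic regularity estimate on the smooth domain $\mathcal{O}$. Writing $\Delta u = f^{-1}(\mathcal{L}_f u - \nabla f\cdot \nabla u)$ and using $f\ge \fmin>0$ together with $\|\nabla f\|_{L^\infty}\le R$,
\begin{equation*}
\|\Delta u\|_{L^2} \le \fmin^{-1}\|\mathcal{L}_f u\|_{L^2} + \fmin^{-1}R\,\|\nabla u\|_{L^2}.
\end{equation*}

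The intermediate gradient term is controlled by an interpolation estimate. For $u\in H^2_N(\mathcal{O})$, since $\partial_\nu u=0$ on $\partial\mathcal{O}$, integration by parts gives $\|\nabla u\|_{L^2}^2 = -\langle u,\Delta u\rangle_{L^2}\le \|u\|_{L^2}\|\Delta u\|_{L^2}$. Applying Young's inequality $\sqrt{ab}\le \varepsilon a+(4\varepsilon)^{-1}b$ with $\varepsilon$ chosen small enough to absorb the $\|\Delta u\|_{L^2}$ term on the right into the left-hand side, I obtain
\begin{equation*}
\|\Delta u\|_{L^2}\le C_1(\fmin,R)\bigl(\|\mathcal{L}_f u\|_{L^2}+\|u\|_{L^2}\bigr).
\end{equation*}

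The final step is the classical Neumann elliptic regularity estimate $\|u\|_{H^2}\le C_2(\mathcal{O})(\|\Delta u\|_{L^2}+\|u\|_{L^2})$ for $u\in H^2_N(\mathcal{O})$, which is available in view of the smoothness of $\partial\mathcal{O}$; this is a standard consequence of the Agmon--Douglis--Nirenberg theory already invoked in Section \ref{Lf-sectorial} to establish sectoriality of $\mathcal{L}_f$. Combining the two preceding displays yields the desired bound with a constant $C=C(\mathcal{O},\fmin,R)$. The main (and essentially only) nontrivial input is the Neumann $H^2$-regularity for the plain Laplacian; once this is quoted, the rest is bookkeeping with $f$-dependent constants and an absorption argument.
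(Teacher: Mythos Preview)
Your proof is correct and follows essentially the same route as the paper: prove the upper bound by the product rule, then for the lower bound write $\Delta u = f^{-1}(\mathcal L_f u - \nabla f\cdot\nabla u)$, control the gradient term by interpolation, absorb via Young's inequality, and finish with Neumann $H^2$-regularity for the Laplacian. The only cosmetic difference is that you obtain the interpolation $\|\nabla u\|_{L^2}^2\le \|u\|_{L^2}\|\Delta u\|_{L^2}$ directly by integration by parts using $\partial_\nu u=0$, whereas the paper quotes the generic inequality $\|\nabla u\|_{L^2}\lesssim \|u\|_{L^2}^{1/2}\|u\|_{H^2}^{1/2}$ and then passes from $\|u\|_{H^2}$ to $\|\Delta u\|_{L^2}$; your version is marginally more self-contained.
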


\begin{proof}
To prove the second inequality, we estimate
\[ 
    \|\mathcal L_f u \|_{L^2}= \| f\Delta u + \nabla f \cdot \nabla u \|_{L^2} \lesssim \|f\|_{L^\infty} \|\Delta u\|_{L^2} +  \|f\|_{C^1} \| \nabla u \|_{L^2} \lesssim R \|u\|_{H^2}.
\]
Thus, it remains to show the first inequality. For that, it suffices to prove that $\|\Delta u\|_{L^2}\le C (\|u\|_{L^2}+\|\mathcal L_f u\|_{L^2})$ for some $C>0$ only depending on $\fmin ,R$ and $\mathcal O$. To this end, we use the definition of $\mathcal L_f$, the interpolation inequality 
\[\|\nabla u\|_{L^2}\lesssim \|u\|_{L^2}^{1/2}\|u\|_{H^2}^{1/2} \lesssim \|u\|_{L^2} + \|u\|_{L^2}^{1/2}\|\Delta u\|_{L^2}^{1/2},\]
as well as Cauchy's inequality with $\eps>0$, to obtain that for some universal constants $c,c'>0$, and for $\eps\in (0,1]$,
\begin{equation}
    \begin{split}
    \label{lapl-bound}
	\|\Delta u\|_{L^2}&= \| f^{-1} (\mathcal L_f u- \nabla f\cdot \nabla u) \|_{L^2}\\
	& \le \fmin ^{-1} \|\mathcal L_f u \|_{L^2} + \fmin ^{-1}\|\nabla f\cdot \nabla u\|_{L^2}\\
	&\le \fmin ^{-1} \|\mathcal L_f u \|_{L^2} +  c \fmin ^{-1}\|f\|_{C^1} \big( \|u\|_{L^2}+ \|u\|_{L^2}^{1/2}\|\Delta u\|_{L^2}^{1/2} \big)\\
	& \le  \fmin ^{-1} \|\mathcal L_f u \|_{L^2} +  c' \fmin ^{-1}\|f\|_{C^1}\Big(\frac{\|u\|_{L^2}}{ 2\eps} + \frac{\eps \|\Delta u\|_{L^2}}2 \Big).
    \end{split}
\end{equation}
Now, choosing $\eps>0$ small enough and subtracting the term containing $\Delta u$ on the right hand side, we obtain that for some $C>0$ only depending on $\fmin , R$ and $\mathcal O$, 
\[ 
    \frac 12\|\Delta u\|_{L^2} \le C \big( \|\mathcal L_f u \|_{L^2} + \|u \|_{L^2}\big),
\]
as desired. This concludes the proof of the lemma.
\end{proof}

%

%\textcolor{red}{[Use H-Z spaces only in the proof]}
Next, we study the analogous mapping properties of $\mathcal L_f$ with respect to the H\"older spaces $C^{2+\eta}(\Ocal)$ and $C^\eta(\Ocal)$ for $\eta\in (0,1)$. Define
\[ 
    C^{2+\eta}_N(\mathcal O)
    := \big\{u\in C^{2+\eta}(\Ocal): \partial_\nu u  =0 ~\text{on}~\partial \mathcal O\big\}. 
\]

\begin{lemma}\label{Holder} Suppose that $f\in \mathcal F$. For any $\eta\in (0,1)$, there exists $C>0$ only depending on $\|f\|_{C^{1+\eta}},\fmin$ and  $\mathcal O$ such that for all $u\in C^{2+\eta}_N(\Ocal)$,
\[ 
    C^{-1}\|u\|_{C^{2+\eta}} \le   \|u\|_{C^\eta} + \|\mathcal L_f u\|_{C^\eta} \le C\|u\|_{C^{2+\eta}}. 
\]
\end{lemma}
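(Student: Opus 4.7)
The plan is to establish the two inequalities separately, with the upper bound being a direct consequence of the Banach algebra structure of $C^\eta(\bar{\mathcal O})$ under multiplication, and the lower bound requiring classical Schauder theory for the Neumann Laplacian combined with an interpolation-absorption argument.

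For the upper bound, I would expand $\mathcal L_f u = f\Delta u + \nabla f \cdot \nabla u$ and invoke the standard product estimate $\|gh\|_{C^\eta} \le C(\|g\|_{L^\infty}\|h\|_{C^\eta} + \|g\|_{C^\eta}\|h\|_{L^\infty})$, valid on smooth bounded $\Ocal$. This yields
$$\|\mathcal L_f u\|_{C^\eta} \le C\|f\|_{C^\eta}\|\Delta u\|_{C^\eta} + C\|\nabla f\|_{C^\eta}\|\nabla u\|_{C^\eta} \le C'\|f\|_{C^{1+\eta}}\|u\|_{C^{2+\eta}},$$
giving the right-hand inequality with constant depending only on $\|f\|_{C^{1+\eta}}$ and $\Ocal$.

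For the lower bound, the central ingredient is the classical Schauder estimate for the Neumann problem of the Laplacian (see e.g.~\cite[Thm.~6.30]{E10} or \cite[Ch.~3]{lunardi}): there exists $C_0 = C_0(\Ocal,\eta) > 0$ such that for every $v \in C^{2+\eta}_N(\mathcal O)$,
$$\|v\|_{C^{2+\eta}} \le C_0\big(\|v\|_{L^\infty} + \|\Delta v\|_{C^\eta}\big).$$
Applying this to $u \in C^{2+\eta}_N(\mathcal O)$ and using $\Delta u = f^{-1}\mathcal L_f u - f^{-1}\nabla f \cdot \nabla u$ together with the fact that $f\in\Fcal$ with $f \ge \fmin$ implies $\|f^{-1}\|_{C^\eta} \le C_1(\fmin,\|f\|_{C^\eta})$, I obtain
$$\|\Delta u\|_{C^\eta} \le C_2\big(\|\mathcal L_f u\|_{C^\eta} + \|\nabla u\|_{C^\eta}\big),$$
with $C_2$ depending on $\fmin$ and $\|f\|_{C^{1+\eta}}$.

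To close the argument, I would use the standard Hölder interpolation inequality (e.g., \cite[Lem.~6.35]{E10}): for every $\varepsilon>0$ there is $C_\varepsilon$ such that $\|\nabla u\|_{C^\eta} \le \varepsilon\|u\|_{C^{2+\eta}} + C_\varepsilon\|u\|_{L^\infty}$. Combining the two previous displays yields
$$\|u\|_{C^{2+\eta}} \le C_0\|u\|_{L^\infty} + C_0 C_2\|\mathcal L_f u\|_{C^\eta} + C_0 C_2\varepsilon\|u\|_{C^{2+\eta}} + C_0 C_2 C_\varepsilon\|u\|_{L^\infty},$$
and choosing $\varepsilon$ small enough that $C_0 C_2\varepsilon \le 1/2$ allows absorption of $\|u\|_{C^{2+\eta}}$ into the left-hand side. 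Using $\|u\|_{L^\infty} \le \|u\|_{C^\eta}$ gives the desired bound with $C$ depending only on $\fmin,\|f\|_{C^{1+\eta}}$ and $\Ocal$. The main obstacle is the Schauder estimate for the Neumann Laplacian, which is classical but crucially requires the smoothness of $\partial\Ocal$ assumed in the paper; modulo this black-box input, the rest reduces to elementary algebra and interpolation.
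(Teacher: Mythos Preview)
Your proposal is correct and follows essentially the same route as the paper: both proofs obtain the upper bound from the Banach algebra property of $C^\eta$, and the lower bound by invoking the Schauder estimate for the Neumann Laplacian, rewriting $\Delta u = f^{-1}(\mathcal L_f u - \nabla f\cdot\nabla u)$, and then using an interpolation--absorption argument to control the intermediate first-order term. The only cosmetic difference is that the paper uses the multiplicative interpolation inequality $\|u\|_{C^{1+\eta}}\lesssim \|u\|_{C^\eta}^{1/2}\|u\|_{C^{2+\eta}}^{1/2}$ followed by Young's inequality with $\varepsilon$, whereas you apply the equivalent additive (Peter--Paul) form directly.
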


\begin{proof}
Throughout, we will use the fact that for $\eta\in (0,1)$, the spaces $C^\eta(\mathcal O)$ and $~C^{2+\eta}(\mathcal O)$ equal the H\"older-Zygmund spaces $\mathcal C^\eta(\mathcal O)$ and $\mathcal C^{2+\eta}(\mathcal O)$, with equivalent norms (see~e.g.~\cite{T78} for definitions). For these, we have the classical multiplication inequalities
\[ 
    \|u v\|_{\mathcal C^\alpha} \lesssim \|u\|_{\mathcal C^\alpha} \|v\|_{\mathcal C^\alpha},\qquad u,v\in \mathcal C^\alpha(\mathcal O),\qquad \alpha\ge 0.  
\]
The second inequality in the statement of the lemma then follows from the estimate
\begin{align*}
    \|\mathcal L_f u \|_{\mathcal C^{\eta}}&= \| f\Delta u + \nabla f \cdot \nabla u\|_{\mathcal C^\eta} \\ 
    & \lesssim \|f\|_{\mathcal C^\eta} \|\Delta u\|_{\mathcal C^{2+\eta}} + \|f\|_{\mathcal C^{1+\eta}} \| \nabla u \|_{\mathcal C^{1+\eta}} 
    \lesssim \|f\|_{\mathcal C^{1+\eta}} \|u\|_{\mathcal C^{2+\eta}}.
\end{align*}
For the first inequality, %let us denote the Neumann boundary operator by 
%\[\mathcal B:u\mapsto Bu= \nabla u \cdot \nu |_{\partial \mathcal O}.\]
we use the fact that the Laplace operator establishes (jointly with the trace operator $\Tr[\cdot]$) a topological isomorphism 
\[ 
    u\mapsto (\Delta u,\Tr[u]), \qquad 
    (\Delta, \Tr) : \mathcal C^{\alpha+2}_N(\mathcal O) \to \mathcal C^\alpha(\mathcal O)\times \mathcal C^{\alpha+2}(\partial \mathcal O),
\]
for any $\alpha\ge0$; see, for instance, Theorem 4.3.4 in \cite{T78} or also display (5.6) in \cite{nickl2020convergence}. Moreover, we note that by the chain rule, for any $f\in\mathcal F$, since $f\ge \fmin $, it holds that 
\[ 
    \|f^{-1}\|_{C^1}\lesssim \|f\|_{C^1},
\]
where the multiplicative constant only depends on $\fmin $. Using this, we obtain
\begin{equation*}
    \begin{split}
        \|u\|_{\mathcal C^{2+\eta}} &\lesssim \|\Delta u\|_{\mathcal C^\eta } +\|u\|_{\mathcal C^\eta}\\
        & = \|f^{-1} (\mathcal L_fu -\nabla f\cdot\nabla u) \|_{\mathcal C^\eta } +\|u\|_{C^\eta}\\
        & \lesssim \|f^{-1}\|_{C^1} \big(\|\mathcal L_fu\|_{\mathcal C^\eta} +\|f\|_{\mathcal C^{1+\eta}}\|u\|_{\mathcal C^{1+\eta}} \big)+\|u\|_{C^\eta}\\
        &\lesssim  \|f\|_{C^1} \|\mathcal L_fu\|_{\mathcal C^\eta} + \|f\|_{C^1} \|f\|_{\mathcal C^{1+\eta}}\|u\|_{\mathcal C^{\eta}}^{1/2}\|u\|_{\mathcal C^{2+\eta}}^{1/2} +\|u\|_{\mathcal C^\eta}\\
        &\le \|f\|_{C^1} \|\mathcal L_fu\|_{\mathcal C^\eta} + \|f\|_{C^1} \|f\|_{\mathcal C^{1+\eta}}\Big(\frac{1}{2\eps}\|u\|_{\mathcal C^{\eta}}+ \frac{\eps}{2} \|u\|_{\mathcal C^{2+\eta}}\Big) +\|u\|_{\mathcal C^\eta},
    \end{split}
\end{equation*}
for any $\eps>0$, where we used in the last step that $ab\le a^2/(2\eps) + \eps b^2/2$ for any $a,b\in\R$. Thus, choosing $\eps>0$ large enough (only depending on $\fmin , \|f\|_{\mathcal C^{1+\eta}}$ and $\mathcal O$) and subsequently subtracting the term in the right hand side containing $\|u\|_{\mathcal C^{2+\eta}}$, we have proved as desired that 
$$  \|u\|_{\mathcal C^{2+\eta}}\lesssim \|\mathcal L_{f}u\|_{\mathcal C^{\eta}} +\|u\|_{\mathcal C^{\eta}}.$$
\end{proof}

Next, we state and prove the following lemma which entails that the constants in Weyl's asymptotics for the eigenvalues of $\mathcal L_f$ can be controlled by terms that only depends on upper and lower bounds for the conductivity function $f$.

\begin{lemma}\label{lem-weyl}
There exists a constant $C>0$ only depending $\fmin ,\|f \|_{L^\infty}$ and $\mathcal O$ such that the eigenvalues $(\lambda_{f,j},\ j\ge 0)$ of the operator $-\mathcal L_f$ with domain $\mathcal D(\mathcal L_f)=H^2_N(\mathcal O)$, ordered increasingly, satisfy
\[ 
    C^{-1}j^{2/d} \le \lambda_{f,j} \le  Cj^{2/d}. 
\] 
\end{lemma}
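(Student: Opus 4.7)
The plan is to reduce the claim to the corresponding Weyl asymptotics for the Neumann-Laplacian via a standard variational (min-max) comparison argument, using the pointwise two-sided bounds on $f$.

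First, I would recall the Courant-Fischer min-max characterisation of the eigenvalues of the self-adjoint operator $-\mathcal L_f$ on $L^2(\Ocal)$ with domain $H^2_N(\Ocal)$. Writing $Q_f(u):=\int_\Ocal f(x)|\nabla u(x)|^2 dx$ for the associated Dirichlet form on $H^1(\Ocal)$ (with form domain independent of $f$, since $f$ is bounded above and below), the $j$-th eigenvalue satisfies
\[
    \lambda_{f,j}
    = \min_{\substack{V \subset H^1(\Ocal) \\ \dim V = j+1}}\, \max_{u \in V\setminus\{0\}} \frac{Q_f(u)}{\|u\|_{L^2}^2}.
\]
Since $f_{\min} \le f(x) \le \|f\|_{L^\infty}$ pointwise on $\Ocal$, for any $u \in H^1(\Ocal)$ we have
\[
    f_{\min}\, \|\nabla u\|_{L^2}^2 \;\le\; Q_f(u) \;\le\; \|f\|_{L^\infty}\, \|\nabla u\|_{L^2}^2,
\]
so that, applying the min-max formula with $f \equiv 1$ as well (which corresponds to the Neumann-Laplacian with eigenvalues $\lambda_{1,j}$), one obtains the two-sided bound
\[
    f_{\min}\, \lambda_{1,j} \;\le\; \lambda_{f,j} \;\le\; \|f\|_{L^\infty}\, \lambda_{1,j}, \qquad j \ge 0.
\]

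Second, I would invoke the classical Weyl asymptotics for the Neumann-Laplacian on a smooth bounded domain $\Ocal \subset \R^d$, which states that $\lambda_{1,j} \asymp j^{2/d}$ with implicit constants depending only on $\Ocal$ (specifically, on $\vol(\Ocal)$ and the regularity of $\partial\Ocal$); this is a well-known result, see for instance \cite[p.~403f]{TI}, which is already cited in the main text just above the lemma statement. Combining this with the preceding display yields the claim, with a constant $C>0$ depending only on $\fmin$, $\|f\|_{L^\infty}$ and $\Ocal$.

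The argument is entirely routine; the only mild subtlety to note is that the form domain of the Neumann quadratic form $Q_f$ is $H^1(\Ocal)$ independently of $f \in \Fcal$, which is what allows the min-max expressions for different $f$ to be compared directly. No deeper PDE or functional-analytic tool beyond the spectral theorem for self-adjoint operators with compact resolvent and Weyl's law is needed.
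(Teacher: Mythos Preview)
Your proposal is correct and follows essentially the same approach as the paper: both proofs use the min-max (Courant--Fischer) characterisation of the eigenvalues together with the pointwise bounds $f_{\min}\le f\le \|f\|_{L^\infty}$ to sandwich $\lambda_{f,j}$ between constant multiples of the Neumann-Laplacian eigenvalues $\lambda_{1,j}$, and then invoke Weyl's law for the latter. Your write-up is in fact slightly more explicit about the constants and about why the comparison is legitimate (common form domain $H^1(\Ocal)$), but there is no substantive difference in strategy.
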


\begin{proof}
Let the action of the quadratic form associated to $-\mathcal L_f$ on functions $u\in H^1(\mathcal O)$ be denoted by
\[ 
    Q_f(u) = -\langle u, \mathcal L_f u \rangle = \int_{\mathcal O} f(x)|\nabla u (x)|^2 dx. 
\]
For any finite-dimensional subspace $L\subseteq \mathcal D(\mathcal L_f)$, define 
\[ 
    \lambda_{f,j}(L) := \sup_{u\in L : \|u\|_{L^2}\le 1} Q_f(u). 
\]
We use the standard `minimax' characterisation of the eigenvalues via minimising the above quantity over all subspaces of dimension $j$:
\[ 
    \lambda_{f,j} = \inf_{L\subseteq \mathcal D(\mathcal L_f): \dim(L)=j} \lambda_{f,j}(L). 
\]
Now let us denote the eigenvalues of the Neumann-Laplace operator $-\Delta = \mathcal L_1$ by $\lambda_j=\lambda_{1,j}$, with associated quadratic form $Q_\Delta(u)= \int_\mathcal O |u(x)|^2 dx$. Using Weyl's law, it holds that for some constant $c\equiv c(\mathcal O)\ge 1$, we have $\lambda_j\in [c^{-1}j^{2/d},cj^{2/d} ] $. It then follows immediately form the above `variational' characterisation of the eigenvalues that there is a further constant $c'\equiv c'(\fmin , \|f\|_{L^\infty})>0$ for which 
\[ 
    \lambda_{f,j} \in [ (c')^{-1}\lambda_j,c'  \lambda_j]
    =[C^{-1}j^{2/d},Cj^{2/d} ], 
\]
with $C\equiv C(\fmin , \|f\|_{L^\infty},\Ocal)>0$, concluding the proof.
\end{proof}

We conclude this section with a lemma on the growth of Sobolev- and H\"older norms of the eigenfunctions $e_{f,j}$, which follow straightforwardly from the previous lemmas in this section.

\begin{lemma}\label{lem-eigenfunction-bounds}
    Let $d\le 3$, $R>0$ and $\eta\in (0,1)$. There exists a constant $C>0$ only depending on $f_{\min}$, $R$, $\mathcal O$ and $\eta$ such that the following holds true. 
    
    i) For all $f\in\mathcal F$ with $\|f\|_{C^1}\le R$ and all $j\ge 0$, $\|e_{f,j}\|_{H^2}\le C(1+j^{2/d})$.

    ii) For all $f\in\mathcal F$ with $\|f\|_{C^{1+\eta}}\le R$ and all $j\ge 0$, $\|e_{f,j}\|_{C^{2+\eta}}\le C(1+j^{5/d})$.
\end{lemma}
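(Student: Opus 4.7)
The strategy is to combine the eigenvalue equation $\mathcal L_f e_{f,j} = -\lambda_{f,j} e_{f,j}$ with the two norm equivalences already at our disposal (Lemmas \ref{norm-equiv} and \ref{Holder}) and with Weyl's bound $\lambda_{f,j}\le Cj^{2/d}$ from Lemma \ref{lem-weyl}. Part (i) will then be a direct application: since the $e_{f,j}$ are $L^2$-orthonormal, Lemma \ref{norm-equiv} gives
\[\|e_{f,j}\|_{H^2}\le C\big(\|e_{f,j}\|_{L^2}+\|\mathcal L_f e_{f,j}\|_{L^2}\big)=C(1+\lambda_{f,j})\le C(1+j^{2/d}),\]
with constants depending only on $\fmin$, $R$ and $\Ocal$.

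For part (ii), Lemma \ref{Holder} applied to $u=e_{f,j}$ yields analogously
\[\|e_{f,j}\|_{C^{2+\eta}}\le C\big(\|e_{f,j}\|_{C^\eta}+\|\mathcal L_f e_{f,j}\|_{C^\eta}\big)=C(1+\lambda_{f,j})\|e_{f,j}\|_{C^\eta}.\]
The main obstacle is to bound $\|e_{f,j}\|_{C^\eta}$ sharply enough for all $\eta\in(0,1)$: a brute-force Sobolev embedding $\|e_{f,j}\|_{C^\eta}\le C\|e_{f,j}\|_{H^2}$ combined with part (i) gives only the weaker exponent $4/d$, and moreover breaks down entirely in $d=3$ when $\eta>1/2$ since $H^2\not\hookrightarrow C^\eta$ there.

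To get around this, I would use a self-improving interpolation argument. The standard Hölder interpolation inequality on smoothly bounded domains (Gagliardo--Nirenberg) reads $\|u\|_{C^\eta}\le C\|u\|_\infty^{1-\theta}\|u\|_{C^{2+\eta}}^\theta$ with $\theta=\eta/(2+\eta)$. Substituting this into the Schauder estimate above and absorbing the factor $\|e_{f,j}\|_{C^{2+\eta}}^\theta$ onto the left-hand side yields
\[\|e_{f,j}\|_{C^{2+\eta}}\le C(1+\lambda_{f,j})^{(2+\eta)/2}\|e_{f,j}\|_\infty,\]
after using $1/(1-\theta)=(2+\eta)/2$. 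For $d\le 3$ the Sobolev embedding $H^2(\Ocal)\hookrightarrow L^\infty(\Ocal)$ together with part (i) gives $\|e_{f,j}\|_\infty\le C\|e_{f,j}\|_{H^2}\le C(1+j^{2/d})$, and Lemma \ref{lem-weyl} controls $(1+\lambda_{f,j})^{(2+\eta)/2}\le C(1+j^{(2+\eta)/d})$. Multiplying these two estimates yields
\[\|e_{f,j}\|_{C^{2+\eta}}\le C(1+j^{(4+\eta)/d})\le C(1+j^{5/d}),\]
the last inequality using $\eta<1$. The only slightly delicate point is the absorption/interpolation step: the key is to route the bound through $\|e_{f,j}\|_\infty$ (which inherits the sharp $H^2$-scaling from part (i)) rather than directly through $\|e_{f,j}\|_{C^\eta}$, and to verify that the $s=0$ endpoint version of the Hölder interpolation inequality applies on our smoothly bounded domain $\Ocal$.
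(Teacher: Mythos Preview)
Your proposal is correct and follows essentially the same route as the paper: part (i) via Lemma \ref{norm-equiv} and Weyl's law, and part (ii) via Lemma \ref{Holder}, the H\"older interpolation $\|u\|_{C^\eta}\lesssim \|u\|_\infty^{2/(2+\eta)}\|u\|_{C^{2+\eta}}^{\eta/(2+\eta)}$, absorption of the $C^{2+\eta}$ factor, and then the Sobolev embedding $H^2\hookrightarrow L^\infty$ combined with part (i) to control $\|e_{f,j}\|_\infty$. The resulting exponent $(4+\eta)/d\le 5/d$ matches the paper exactly.
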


\begin{proof}
    Part i) follows from an application of Lemma \ref{norm-equiv} and \ref{lem-weyl}, which together imply
    \[ \|e_{f,j}\|_{H^2} \simeq \|e_{f,j}\|_{L^2} +\|\mathcal L_f e_{f,j}\|_{L^2} \lesssim 1+ j^{2/d}. \]
    Note that the implicit constants, by Lemmas \ref{norm-equiv} and \ref{lem-weyl}, only depend on the relevant quantities in the statement.

To prove part ii), we use Lemma \ref{Holder}, the classical interpolation inequality for H\"older spaces (see, e.g.,~Corollary 1.2.19 from \cite{lunardi}), the Sobolev embedding, Lemma \ref{lem-weyl} as well as the first part of this lemma, to estimate
\begin{equation}
\begin{split}
    \|e_{f,j}\|_{C^{2+\eta}} &\simeq \|e_{f,j}\|_{C^{\eta}} +\|\mathcal L_f e_{f,j}\|_{C^{\eta}}\\
    &\lesssim (1+ \lambda_{f,j})\| e_{f,j}\|_{C^{\eta}}\\
    &\lesssim (1+ \lambda_{f,j})\| e_{f,j}\|_{L^\infty}^{2/(2+\eta)} \| e_{f,j}\|_{C^{2+\eta}}^{\eta/(2+\eta)}\\
    &\lesssim (1+ j^{2/d})\| e_{f,j}\|_{H^2}^{2/(2+\eta)} \| e_{f,j}\|_{C^{2+\eta}}^{\eta/(2+\eta)}\\
    &\lesssim (1+ j^{2/d})^{1+2/(2+\eta)} \|e_{f,j}\|_{C^{2+\eta}}^{\eta/(2+\eta)}.
\end{split}
\end{equation}
The statement follows from dividing both sides by $\|e_{f,j}\|_{C^{2+\eta}}^{\eta/(2+\eta)}$ and noting that $\eta<1$.
\end{proof}
%
%
%

%%%%%%%%%%%%%%%%%%%%%%%%%%%%%%%%%%%%%%%%%%%%%%
\subsection{Estimates for the Neumann heat kernel}

We now turn to deriving estimates for the Neumann heat kernels $p_{t,f}$. Again, we start with $L^2$-type bounds.

\begin{lemma}\label{lem-heatkernel}
Let $p_{t,f}$ denote the heat kernel (i.e.~the transition density function) of the reflected diffusion process \eqref{Eq:SDE}. For all $f\in\Fcal$, there exists a constant $C>0$ only depending on $\fmin ,\|f \|_{L^\infty}$ and $\mathcal O$, such that
\begin{align*}
	\sup_{y\in \mathcal O}\|p_{t,f}(\cdot,y)\|_{L^2} &\le C t^{-d/4},\qquad t>0.
\end{align*}
Moreover, for $d\le 3$ we also have that for some constant $C\equiv C(\mathcal O)>0$,
\[
    \sup_{y\in \mathcal O}\| p_{t,f} (\cdot,y)\|_{H^2} \le C t^{-d/2-2}, \qquad t>0.
\]
In particular, the heat kernel $\varphi_\delta$ of the reflected standard Brownian motion on $\mathcal O$ (corresponding to the case $f=1$) satisfies 
\begin{align*}
	\sup_{y\in \mathcal O}\|\varphi_\delta(\cdot,y)\|_{L^2} &\le C \delta^{-d/4},\qquad \delta>0.
\end{align*}
\end{lemma}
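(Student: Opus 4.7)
I would deduce both bounds from the standard analytic-semigroup framework applied to $\Lcal_f$, combined with a Nash-type ultracontractivity estimate for divergence-form operators on bounded Neumann domains. The $L^2$ bound rests on exploiting the $L^1$-normalisation of the transition density; the $H^2$ bound then follows from one further smoothing step via the estimate $\|\Lcal_f P_{t,f}\|_{L^2\to L^2}\lesssim 1/t$ recalled in (\ref{M-bounds}) and the norm equivalence of Lemma \ref{norm-equiv}. The statement for $\varphi_\delta$ is the specialisation $f\equiv 1$.

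For the first estimate, I would first note that by conservation of probability for the reflected diffusion and the symmetry $p_{t,f}(x,y)=p_{t,f}(y,x)$ (using $\vol(\Ocal)=1$), one has $\|p_{t,f}(\cdot,y)\|_{L^1}=1$ for every $y\in\Ocal$ and $t>0$. I then decompose $p_{t,f}(\cdot,y)=1+q_{t,f}(\cdot,y)$, where $q_{t,f}(\cdot,y)$ is mean-zero since $\int_\Ocal p_{t,f}(x,y)dx=1$. On the mean-zero subspace of $L^2(\Ocal)$, the Poincaré inequality (with constant depending only on $\Ocal$) combined with the standard Nash inequality yields
\[
\|u\|_{L^2}^{2+4/d}\le C_\Ocal \|u\|_{L^1}^{4/d}\|\nabla u\|_{L^2}^2,\qquad u\in H^1(\Ocal),\ \textstyle\int_\Ocal u\,dx=0.
\]
Combining this with the energy identity $\frac{d}{dt}\|P_{t,f}u\|_{L^2}^2=-2\int_\Ocal f|\nabla P_{t,f}u|^2 dx\le -2\fmin\|\nabla P_{t,f}u\|_{L^2}^2$, the $L^1$-contractivity $\|P_{t,f}u\|_{L^1}\le\|u\|_{L^1}$, and the fact that $P_{t,f}$ preserves the mean, produces a scalar differential inequality for $\phi(t)=\|P_{t,f}u\|_{L^2}^2$ of Nash type, which integrates to the ultracontractivity bound $\|P_{t,f}u\|_{L^2}\le Ct^{-d/4}\|u\|_{L^1}$ for every mean-zero $u\in L^1(\Ocal)$, with $C$ depending only on $\fmin$ and $\Ocal$. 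Applying this to the factorisation $q_{t,f}(\cdot,y)=P_{t/2,f}[q_{t/2,f}(\cdot,y)]$, using $\|q_{t/2,f}(\cdot,y)\|_{L^1}\le 2$, and absorbing the trivial contribution of the constant part then gives $\|p_{t,f}(\cdot,y)\|_{L^2}\lesssim t^{-d/4}$ on any bounded interval of $t$.

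For the $H^2$ bound in dimensions $d\le 3$, I would combine the semigroup factorisation $p_{t,f}(\cdot,y)=P_{t/2,f}[p_{t/2,f}(\cdot,y)]$ with the analytic-semigroup smoothing estimate $\|\Lcal_f P_{s,f}\|_{L^2\to L^2}\le M/s$ from (\ref{M-bounds}) evaluated at $s=t/2$. Together with the just-established $L^2$ bound, this yields
\[
\|\Lcal_f p_{t,f}(\cdot,y)\|_{L^2}\le \tfrac{2M}{t}\|p_{t/2,f}(\cdot,y)\|_{L^2}\lesssim t^{-d/4-1}.
\]
Since $p_{t,f}(\cdot,y)\in H^2_N(\Ocal)$ by parabolic smoothing in the $x$-variable, Lemma \ref{norm-equiv} then gives $\|p_{t,f}(\cdot,y)\|_{H^2}\lesssim t^{-d/4-1}$, which for $d\le 3$ and $t$ in any bounded range is dominated by the (non-sharp) bound $Ct^{-d/2-2}$ stated in the lemma.

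The main technical obstacle is ensuring that the ultracontractivity constant depends only on $\fmin,\|f\|_{L^\infty}$ and $\Ocal$. The subtlety is the non-trivial kernel of $\Lcal_f$ under Neumann boundary conditions, which forces the projection onto the mean-zero subspace before Nash's inequality can be applied without an additive $\|u\|_{L^2}^2$ term. Once this projection is carried out, the remaining chain of inequalities relies only on the ellipticity of $\Lcal_f$ (controlled by $\fmin$) and a geometric Poincaré constant of $\Ocal$, after which the $H^2$ step is a routine application of the analytic-semigroup machinery already employed throughout Section \ref{sec-gradient-pf}.
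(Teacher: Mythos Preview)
Your proposal is correct and follows a genuinely different route from the paper. For the $L^2$ bound, the paper invokes the pointwise Gaussian upper bound $p_{t,f}(x,y)\le Ct^{-d/2}\exp(-|x-y|^2/(Ct))$ for Neumann heat kernels (citing \cite{daners2000heat}) and integrates directly, whereas you derive ultracontractivity from Nash's inequality and the energy decay identity; your argument is more functional-analytic and in fact yields a constant depending only on $\fmin$ and $\Ocal$, not $\|f\|_{L^\infty}$. For the $H^2$ bound, the paper expands $p_{t,f}$ spectrally and combines Weyl's asymptotics (Lemma \ref{lem-weyl}) with the eigenfunction estimates of Lemma \ref{lem-eigenfunction-bounds} and the Sobolev embedding $\|e_{f,j}\|_{L^\infty}\lesssim\|e_{f,j}\|_{H^2}$, summing the resulting series via an integral comparison. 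Your route---one semigroup smoothing step $\|\Lcal_f P_{t/2,f}\|_{L^2\to L^2}\lesssim t^{-1}$ followed by Lemma \ref{norm-equiv}---is shorter and yields the sharper exponent $t^{-d/4-1}$, which indeed dominates the paper's $t^{-d/2-2}$ on bounded time intervals (and bounded $t$ is all that is needed, or indeed valid, here, as you correctly flag). Both approaches ultimately require control of $\|f\|_{C^1}$ for the $H^2$ estimate (through Lemma \ref{norm-equiv} in your case, through Lemma \ref{lem-eigenfunction-bounds} in the paper's), so the lemma's stated dependence ``$C\equiv C(\Ocal)$'' is imprecise in either reading. The trade-off is that the paper's spectral argument reuses machinery already built for other purposes in Section \ref{app:aux}, while your approach is more self-contained and avoids the auxiliary eigenvalue/eigenfunction lemmas altogether.
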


\begin{proof}	
We first note that by the classical Gaussian estimates for the Neumann heat kernel (e.g.~
\cite{daners2000heat})
\[ 
    p_{t,f}(x,y)\le C t^{-d/2} \exp\Big(-\frac{\|x-y\|^2}{Ct}\Big),  
\]
for some $C$ only depending on $\fmin$ and $\|f\|_{L^\infty}$. Then,
\begin{align*}
	\| p_{t,f} (\cdot,y)\|_{L^2}^2 
    &\lesssim \int_\mathcal O t^{-d}\exp\big(-\frac{2\|x-y\|^2 }{Ct}\big)  dx \\
    &\le t^{-d/2} \int_\mathcal O t^{-d/2}\exp\big(-\frac{2\|x-y\|^2 }{Ct}\big) dx \lesssim t^{-d/2},
\end{align*} 
proving the first claim. For the estimate in $H^2(\Ocal)$, assume that $d\le 3$. Using the spectral decomposition
\[ 
    p_{t,f}(x,y)= \sum_{j=0}^\infty e^{-\lambda_{f,j} t}e_{f,j}(x)e_{f,j}(y),\qquad t>0,\qquad x,y \in \mathcal O, 
\]
Then, using the Sobolev embedding and Lemmas \ref{lem-weyl} and \ref{lem-eigenfunction-bounds}, it follows that for some $c>0$,
\begin{align*}
	\|p_{t,f} (\cdot, y)\|_{H^2} &\le \sum_{j=0}^\infty e^{-\lambda_{f,j}t} \|e_{f,j}\|_{H^2} |e_{f,j}(y)|\\
	& \lesssim \sum_{j=0}^\infty e^{-cj^{-2/d}t} \|e_{f,j}\|_{H^2}^2
    \simeq \sum_{j=0}^\infty e^{-cj^{-2/d}t} (1+j^{4/d}),
\end{align*}
which is further upper bounded by
\begin{align*}
	\int_0^\infty e^{-cj^{2/d} t}(1+j^{4/d}) dj&=\int_0^\infty e^{-cx^2} (1+ x^dt^{-d/2})^{4/d} x^{d-1}t^{-d/2} dx\\
	%&\lesssim t^{-d/2}\int_0^\infty e^{-cx^2}x^{d-1}dx +  t^{-d/2-2}\int_0^\infty e^{-cx^2}x^4x^{d-1}dx \\
	&\lesssim t^{-d/2}+t^{-d/2-2},
\end{align*}
which proves the second claim.
\end{proof}

Next, we derive the following uniform estimate.

\begin{lemma}\label{lem-heatkernel-uniform}
Let $\eta\in (0,1)$, $D>0$ and $R>0$. Then, there exists $C\equiv C(D,R,\fmin )>0$ such that for every $f\in\mathcal F$ with                
$\|f\|_{C^{1+\eta}}\le R$ and for every $y\in\mathcal O$,
\[
    \|p_{D,f} (\cdot, y)\|_{C^{2+\eta}}\le C. 
\]
\end{lemma}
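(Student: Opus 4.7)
The plan is to bound $\|p_{D,f}(\cdot,y)\|_{C^{2+\eta}}$ directly via the spectral expansion
\[
p_{D,f}(x,y) = \sum_{j=0}^\infty e^{-\lambda_{f,j} D}\, e_{f,j}(x)\, e_{f,j}(y),
\]
treating $y$ as a fixed parameter and taking the $C^{2+\eta}$-norm in the $x$-variable term-by-term. Passing absolute values and H\"older norms inside the sum (which is justified once convergence is established), we need to control $\|e_{f,j}\|_{C^{2+\eta}}$ and $|e_{f,j}(y)|$ uniformly in $y$, together with the exponential decay factor.

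First I would invoke Lemma \ref{lem-eigenfunction-bounds}(ii) to obtain $\|e_{f,j}\|_{C^{2+\eta}} \le C(1 + j^{5/d})$ with $C$ depending only on $f_{\min}$, $R$, $\mathcal O$, $\eta$. For the pointwise factor, I would use the Sobolev embedding $H^2(\mathcal O) \hookrightarrow C(\bar{\mathcal O})$ (valid since $d \le 3$, which is implicit here through the use of Lemma \ref{lem-eigenfunction-bounds}) together with Lemma \ref{lem-eigenfunction-bounds}(i) to get $|e_{f,j}(y)| \le \|e_{f,j}\|_\infty \lesssim 1 + j^{2/d}$, again with an implicit constant controlled by $f_{\min}$, $R$, $\mathcal O$. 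Finally, Lemma \ref{lem-weyl} provides the Weyl-type lower bound $\lambda_{f,j} \ge c j^{2/d}$ with $c>0$ depending on $f_{\min}$, $\|f\|_\infty \le R$, and $\mathcal O$.

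Combining these three ingredients yields, uniformly in $y \in \mathcal O$,
\[
\|p_{D,f}(\cdot,y)\|_{C^{2+\eta}} \lesssim \sum_{j=0}^\infty e^{-c D j^{2/d}} (1 + j^{5/d})(1 + j^{2/d}),
\]
and the right-hand side is finite (compared to $\int_0^\infty e^{-cDx^{2/d}} x^{7/d}\, dx < \infty$), with a bound depending only on $D$, $R$, $f_{\min}$, $\mathcal O$, $\eta$. Since the bound is independent of $y$, the claim follows. I expect no serious obstacle in this argument: the only point requiring mild care is the termwise differentiation of the spectral series, which is justified by the absolute and uniform convergence of the differentiated series (itself a consequence of the same exponential-versus-polynomial comparison), and the mild technicality that the dimension restriction $d \le 3$ is inherited from the supporting lemmas and should be stated explicitly.
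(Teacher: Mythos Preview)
Your proposal is correct and follows essentially the same approach as the paper's own proof: both use the spectral expansion, bound $\|e_{f,j}\|_{C^{2+\eta}}$ via Lemma~\ref{lem-eigenfunction-bounds}(ii), bound $|e_{f,j}(y)|$ via the Sobolev embedding and Lemma~\ref{lem-eigenfunction-bounds}(i), and use Weyl's lower bound from Lemma~\ref{lem-weyl} to ensure summability of $\sum_j e^{-cDj^{2/d}}(1+j^{5/d})(1+j^{2/d})$. Your remark that the dimension restriction $d\le 3$ is implicitly needed (through the Sobolev embedding in the supporting lemmas) is a valid observation.
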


\begin{proof} Let us fix $y\in \mathcal O$ and $f$ as in the statement. Recall the notation $u_f(D)= p_{D,f}(\cdot,y)$. Then, using the spectral decomposition, the preceding estimates as well as Lemmas \ref{lem-weyl} and \ref{lem-eigenfunction-bounds}, we see that for some constants $c,C>0$ (only depending on $D, R$, $\fmin$, $\eta$ and $\mathcal O$)
\begin{align*}
    \|u_{f}(D)\|_{C^{2+\eta}}&\le \sup_{y\in\mathcal O}\sum_{j = 0}^\infty e^{-D\lambda_{f,j}}\|e_{f,j}\|_{C^{2+\eta}}|e_{f,j}(y)|\\
    &\le \sum_{j = 0}^\infty e^{-D\lambda_{f,j}}\|e_{f,j}\|_{C^{2+\eta}}\|e_{f,j}\|_{L^\infty}\\
    &\lesssim \sum_{j = 0}^\infty e^{-D\lambda_{f,j}}\|e_{f,j}\|_{C^{2+\eta}}\|e_{f,j}\|_{H^2}
    \lesssim \sum_{j = 0}^\infty e^{-cDj^{2/d}}(1+j^{5/d})(1+j^{2/d}) \lesssim C.
\end{align*}
\end{proof}

Alongside the above upper bounds, we record the following lower bound for the Neumann heat kernel, holding uniformly over balls of H\"older spaces -- see Proposition 4 in \cite{nickl2024inference}. In particular, if $\|f\|_{C^\alpha}\le B$ for some even integer $\alpha>d/2-1$ and some $B>0$, then for every $t>0$
\[ 
    \inf_{x,y\in \mathcal O}p_{t,f}(x,y)\ge C,
\]
for a constant $C\equiv C(t,\mathcal O, d,\fmin , B,\alpha)>0$.

Finally, we conclude with a uniform estimate for the action of the transition operator $P_{t,f}$ over functions in $L^2(\Ocal)$.

\begin{lemma}\label{Pt-infinity-est}
Suppose that $d\le 3$. Then, for any $D_0>0$ and $R>0$ there exist constants $C,C'>0$ only depending on $D_0,R,\fmin,$ and $\mathcal O$ such that for all $t\ge D_0$ and all $f\in\mathcal F$ with $\|f\|_{C^1}\le R$, $P_t:L^2(\Ocal) \to L^\infty(\Ocal)$ is a bounded linear operator satisfying
\begin{equation}\label{l2h-2}
     \|P_{t,f}\varphi \|_{L^\infty}\le C \|P_{t,f}\varphi \|_{H^2} \le C'\| \varphi\|_{(H^2_N)^*}, \qquad \forall\varphi \in L^2(\mathcal O).
\end{equation}
\end{lemma}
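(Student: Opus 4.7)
The plan is to exploit the spectral decomposition of $P_{t,f}$ against the orthonormal basis $(e_{f,j},\ j\ge 0)\subset L^2(\Ocal)$ established in Section \ref{Sec:PDEAppraoch}, together with the uniform estimates from the auxiliary appendix. The first inequality in \eqref{l2h-2} is immediate: since $d\le 3$, the Sobolev embedding $H^2(\Ocal)\subset L^\infty(\Ocal)$ yields $\|P_{t,f}\varphi\|_{L^\infty}\le C\|P_{t,f}\varphi\|_{H^2}$ with a constant depending only on $\Ocal$.

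For the substantive second inequality, I would write any $\varphi\in L^2(\Ocal)$ as $\varphi=\sum_{j\ge 0}a_j e_{f,j}$ with $a_j=\langle \varphi,e_{f,j}\rangle_{L^2}$, so that by \eqref{Eq:SpectTrOp}
\[
    P_{t,f}\varphi = \sum_{j\ge 0} e^{-\lambda_{f,j}t}a_j e_{f,j}.
\]
Applying the norm equivalence $\|u\|_{H^2}^2 \simeq \|u\|_{L^2}^2 + \|\mathcal L_f u\|_{L^2}^2$ from Lemma \ref{norm-equiv} (with a constant only depending on $\fmin$, $R$ and $\Ocal$) together with Parseval's identity gives
\[
    \|P_{t,f}\varphi\|_{H^2}^2 \lesssim \sum_{j\ge 0}(1+\lambda_{f,j}^2)\, e^{-2\lambda_{f,j}t}\,a_j^2.
\]
The key point is then to control the Fourier coefficients $a_j$ by the negative-order norm of $\varphi$ via duality: since $e_{f,j}\in H^2_N(\Ocal)$,
\[
    |a_j| \le \|\varphi\|_{(H^2_N)^*}\,\|e_{f,j}\|_{H^2}.
\]
Lemma \ref{lem-eigenfunction-bounds}(i) then yields $\|e_{f,j}\|_{H^2}\lesssim 1+j^{2/d}\lesssim 1+\lambda_{f,j}$ uniformly in $f$ with $\|f\|_{C^1}\le R$.

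Combining the last two displays gives
\[
    \|P_{t,f}\varphi\|_{H^2}^2 \lesssim \|\varphi\|_{(H^2_N)^*}^2\,\sum_{j\ge 0}(1+\lambda_{f,j})^4\, e^{-2\lambda_{f,j}t},
\]
and the remaining task is to bound the series uniformly for $t\ge D_0$ and $f$ in the stated ball. By Weyl's asymptotics (Lemma \ref{lem-weyl}), there is a constant $c>0$ depending only on $\fmin,R,\Ocal$ such that $\lambda_{f,j}\ge c\, j^{2/d}$, and since the exponential factor $e^{-2cD_0 j^{2/d}}$ dominates any polynomial growth in $j$, the series is uniformly summable. Taking square roots yields $\|P_{t,f}\varphi\|_{H^2}\le C'\|\varphi\|_{(H^2_N)^*}$ with $C'$ depending only on $D_0,R,\fmin,\Ocal$, completing the proof.

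The argument is essentially routine; the only real care point is ensuring all constants (from Lemmas \ref{norm-equiv}, \ref{lem-weyl} and \ref{lem-eigenfunction-bounds}) depend only on $\fmin$, $R$, $D_0$ and $\Ocal$, so that uniformity over the $C^1$-ball and over $t\ge D_0$ is genuinely preserved. In particular, $\|f\|_{L^\infty}\le \|f\|_{C^1}\le R$ allows Weyl's lower bound to be taken with a constant depending only on $\fmin,R,\Ocal$, which is what makes the convergence of the defining series depend only on $D_0$.
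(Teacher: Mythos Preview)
Your proof is correct, but it takes a different route from the paper's. Both arguments begin identically with the Sobolev embedding for the first inequality and with the norm equivalence from Lemma \ref{norm-equiv} combined with Parseval. The divergence is in how the dual norm $\|\varphi\|_{(H^2_N)^*}$ is brought in. You bound each Fourier coefficient individually via $|a_j|\le\|\varphi\|_{(H^2_N)^*}\|e_{f,j}\|_{H^2}$, then invoke Lemma \ref{lem-eigenfunction-bounds} for $\|e_{f,j}\|_{H^2}$ and Lemma \ref{lem-weyl} to sum the resulting series $\sum_j(1+\lambda_{f,j})^4e^{-2\lambda_{f,j}t}$. The paper instead first proves the simpler $L^2\to H^2$ bound $\|P_{t,f}\varphi\|_{H^2}\lesssim\|\varphi\|_{L^2}$ using only that $\sup_{\lambda\ge 0}\lambda^2e^{-2D_0\lambda}<\infty$ (no Weyl, no eigenfunction norm estimates), and then lifts this to $(H^2_N)^*\to H^2$ by a duality argument: write $P_{t,f}=P_{t/2,f}\circ P_{t/2,f}$, apply the $L^2\to H^2$ bound once, then use self-adjointness of $P_{t/2,f}$ to convert the remaining $L^2$ norm into a supremum over $H^2_N$ test functions. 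The paper's route is slicker in that it avoids the explicit series summation and the auxiliary Lemmas \ref{lem-weyl} and \ref{lem-eigenfunction-bounds}; your route is more direct and computational, trading elegance for transparency about where the uniformity in $f$ and $t\ge D_0$ comes from.
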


\begin{proof}
The first inequality is simply an application of the Sobolev embedding. To prove the second, we first note that
$P_{t,f}:L^2(\mathcal O) \to \mathcal D(\mathcal L_f)=H^2_N(\mathcal O)$ is linear and bounded, with operator norm only depending on $D_0,R,\fmin,$ and $ \mathcal O$. This is seen by applying the previous lemmas in this section, and by using Parseval's identity (twice), to the effect that
\begin{align*}
\|P_{t,f}\varphi\|_{H^2}^2&\simeq \|P_{t,f}\varphi\|_{L^2}^2 +   \|\mathcal L_fP_{t,f} \varphi\|_{L^2}^2\\
&\le \|\varphi\|_{L^2}^2 + \sum_{j= 0}^\infty |\langle e_{f,j}, \varphi\rangle_{L^2}|^2 \lambda_{f,j}^2e^{-2t\lambda_{f,j}}\lesssim \|\varphi\|_{L^2}^2.
\end{align*}
Now, applying this twice with $t/2$ (which is still lower bounded by $D_0/2$) instead of $t$, and using the self-adjointness of $P_{f,t/2}$, we obtain that
\begin{align*}
    \|P_{t,f}\varphi \|_{H^2} 
    &\le \|P_{f,t/2}\varphi \|_{L^2}\\
    &= \sup_{\phi \in L^2(\Ocal):\|\phi\|_{L^2}\le 1} \Bigg| \int_\Ocal \phi(x) P_{f,t/2}\varphi(x)dx  \Bigg|\\
    &= \sup_{\phi \in L^2(\Ocal):\|\phi\|_{L^2}\le 1} \Bigg| 
    \int_\Ocal \varphi(x) P_{f,t/2}\phi(x) dx  \Bigg| \\
   & \lesssim \sup_{\phi \in H^2_N(\Ocal) : \|\phi\|_{H^2}\le 1} \Bigg| \int_\Ocal \phi(x) \varphi (x)dx \Bigg| 
    = \| \varphi \|_{(H^2_N)^*}.
\end{align*}
\end{proof}

%
%
%
%
%

%%%%%%%%%%%%%%%%%%%%%%%%%%%%%%%%%%%%%%%%%%%%%%%
\section{Further numerical results}
\label{App:MoreNum}

In this appendix we provide additional simulation studies and expand the investigation of the numerical results presented in Sections \ref{sec:Num}.

%
%
%

%%%%%%%%%%%%%%%%%%%%%%%%%%%%%%%%%%%%%%%%%%%%%%%
\subsection{Initialisation of the algorithms}
\label{SubApp:Initial}

The choice of the starting points for iterative schemes such as the ones considered in the present article, cf.~\eqref{Eq:AccProb}, \eqref{Eq:ULA} and \eqref{Eq:GradDesc}, is known to be a delicate issue that, in high-dimensional and non-convex settings, may have a profound impact on the overall recovery performance; see e.g.~\cite{bandeira2023free}, where a discussion and further references can be found. To investigate the influence of the initialisation on the employed pCN algorithm, ULA and gradient descent method, we performed a set of numerical experiments based on the same synthetic data set $X^{(n)}$, with $n=50000$, and the same Gaussian prior $\Pi(\cdot)$ used in Section \ref{sec:Num}, each consisting in a run of the three schemes with a differently specified starting point $\vartheta_0$. In particular, alongside the cold start $\vartheta_0=0$ under which the results in Sections \ref{sec:Num} were obtained, we considered the `warm start' $\vartheta_0=\theta_0$, where $\theta_0\in\R^{K+1}$ is the vector of Fourier coefficients of the true (reparametrised) conductivity function $F_0$, the `random start' $\vartheta_0\sim \Pi(\cdot)$, and the `challenging start' $\vartheta_0=-\theta_0$. The results are summarised in the comparative Tables \ref{Tab:pCNInit} - \ref{Tab:GradDescInit}, with each row corresponding to a different starting point.

%

%%%%%%%%%%%%%%%%%%%%%%%%%%%%%%%%%%%%%%%%%%%%%%%
\subsubsection{Initialisation of sampling-based methods}

As revealed by Tables \ref{Tab:pCNInit} and \ref{Tab:ULAInit} respectively, the performances of the pCN algorithm and the ULA were  only slightly impacted in our numerical experiments by the choice of the initialiser, in terms of both the magnitude of the estimation error associated to the resulting posterior mean estimates, and the number of iterations required by the generated chains to move from the starting points to the regions containing higher posterior probability. Indeed, the same burnin times were deemed adequate across the runs of each method (except for the ones with warm start, for which no burnin was necessary).

\begin{table}
\caption{Performance of the pCN algorithm under different starting points}
\label{Tab:pCNInit}
\centering
\renewcommand{\arraystretch}{1.2}
\begin{tabular}{ c|c|c|c|c|c|c} 
 pCN  & $\ell_n(f_{\vartheta_0})$ & n.~iterations & burnin & stepsize & acceptance ratio & $\|F_0 - \bar F_n\|_2$  \\
  \hline
  $\vartheta_0=0$ & 4254.7336 & 25000 & 2500 & .0001 & 29.53\% & .2097\\
 \hline
  $\vartheta_0=\theta_0$ & 7707.9226 & 25000 & 0 & .0001 & 29.49\% & .1812\\
 \hline
 $\vartheta_0\sim\Pi(\cdot)$ & 809.0839 & 25000 & 2500 & .0001 & 30.76\% & .1995 \\
  \hline
 $\vartheta_0=-\theta_0$ & -11193.9608 & 25000 & 2500 & .0001 & 32.01\% & .2094 \\
\end{tabular}
\end{table}

\begin{table}
\caption{Performance of the ULA under different starting points}
\label{Tab:ULAInit}
\centering
\renewcommand{\arraystretch}{1.2}
\begin{tabular}{ c|c|c|c|c|c} 
 ULA & $\log\pi(\vartheta_0|X^{(n)})$ & n.~iterations & burnin & stepsize & $\|F_0 - \bar F_n\|_2$   \\
 \hline
    $\vartheta_0 = 0$ & 425.4734 & 10000 & 250 & .000025 & .2033 \\
    \hline
    $\vartheta_0=\theta_0$ & 7702.2889 & 10000 & 0 & .000025 & .2024  \\
 \hline
 $\vartheta_0\sim\Pi(\cdot)$ & 806.9926 & 10000 & 250 & .000025 & .2035  \\
  \hline
 $\vartheta_0=-\theta_0$ & -11199.5945 & 10000 & 250 & .000025 & .2056  \\
\end{tabular}
\end{table}

\begin{table}
\caption{Performance of the gradient descent method under different starting points}
\label{Tab:GradDescInit}
\centering
\renewcommand{\arraystretch}{1.2}
\begin{tabular}{ c|c|c|c|c} 
MAP & $\log\pi(\vartheta_0|X^{(n)})$ & n.~iterations & stepsize & $\|F_0 - \hat F_n\|_2$ \\
 \hline
 $ \vartheta_0=0$ & 425.4734 & 135 & .00001 & .2507\\
  \hline
  $\vartheta_0=\theta_0$ & 7702.2889 & 35 & .00001 & .0912 \\
 \hline
 $\vartheta_0\sim\Pi(\cdot)$ & 2856.6789 & 140 & .00001 & .2852 \\
  \hline
 $\vartheta_0=-\theta_0$ & -11199.5945 &  372 & .00001 & .5928 \\
\end{tabular}
\end{table}

%%%%%%%%%%%%%%%%%%%%%%%%%%%%%%%%%%%%%%%%%%%%%%%
\subsubsection{Initialisation of gradient descent and the implications of multimodality} 

A significantly stronger dependence on the initialisation step was instead observed for the computation of the MAP estimator via the gradient descent method. As reported in Table \ref{Tab:GradDescInit}, the runs with cold and random starts (first and third row) required a similar number of iterations to converge (according to the criterion established in Section \ref{Sec:GradResults}), and resulted in comparable estimation errors, which are both larger than the ones attained by the posterior mean estimates corresponding to the same initialisation. On the other hand, for the warm start (second row), a greatly reduced number of iterations were sufficient, and the obtained estimation error is by a wide margin the lowest across all the methods and all the experiments (and in fact it is close to the lower bound given by the approximation error from projecting the ground truth on the linear space spanned by the employed basis functions, which is equal to .08477). Finally, the challenging start $\vartheta_0 = -\theta_0$ (fourth row) was observed to produce the opposite effect, heavily slowing down the convergence of the scheme and yielding a sharp deterioration in the reconstruction quality.

The observed influence of the initialisation in this context furnishes a strong indication that the posterior distribution is multimodal, whence different starting points imply the convergence of the gradient descent method to a potentially different local optimum. The numerical results for the (computationally more intensive) pCN algorithm and the ULA are then aligned with the heuristics that sampling-based methods may generally be more robust in non-convex settings. These observations will be further corroborated by an examination of the one-dimensional marginal posterior distributions in Appendix \ref{Sec:MarginDistr} below.

%

%%%%%%%%%%%%%%%%%%%%%%%%%%%%%%%%%%%%%%%%%%%%%%%
\subsection{One-dimensional marginal posterior distributions}
\label{Sec:MarginDistr}

\begin{figure}
\hfill
\includegraphics[width=7cm,height=.75cm]
{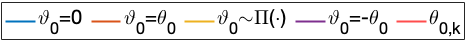}
\hfill
\includegraphics[width=6cm,height=.75cm]{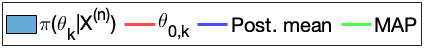}
\hfill
\includegraphics[width=8cm,height=3.55cm]{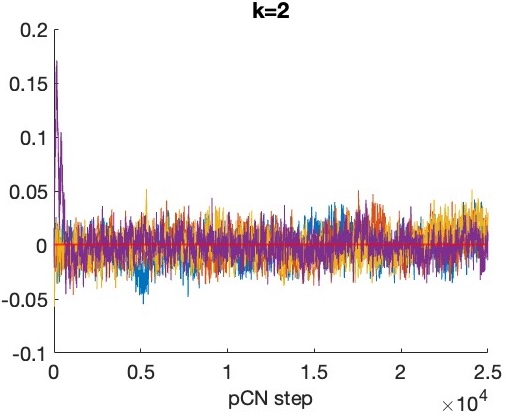}
\includegraphics[width=6cm,height=3.55cm]{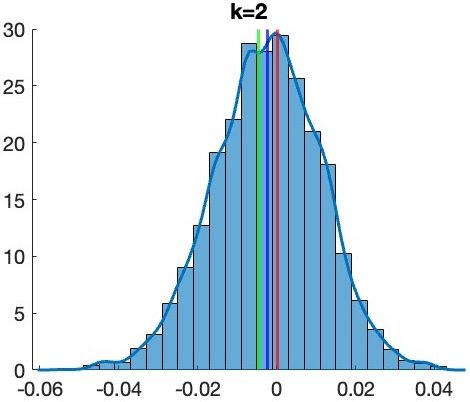}
\includegraphics[width=8cm,height=3.55cm]{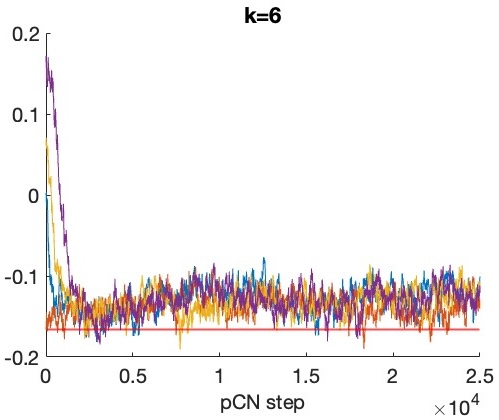}
\includegraphics[width=6cm,height=3.55cm]{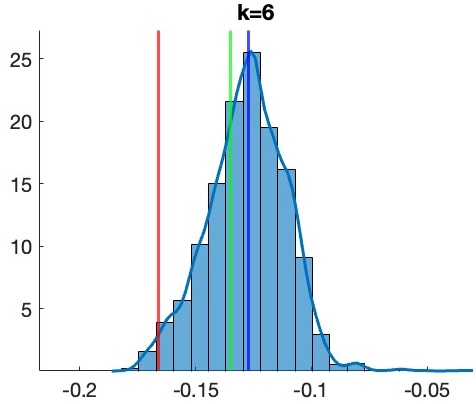}
\includegraphics[width=8cm,height=3.55cm]{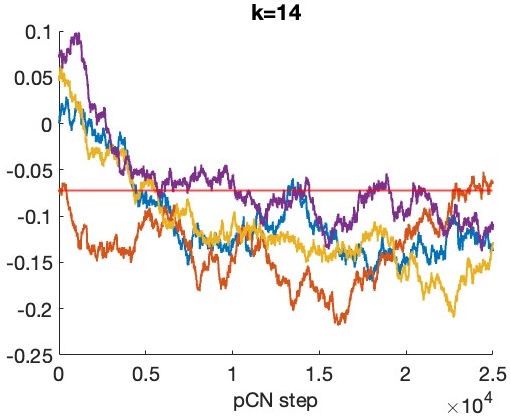}
\includegraphics[width=6cm,height=3.55cm]{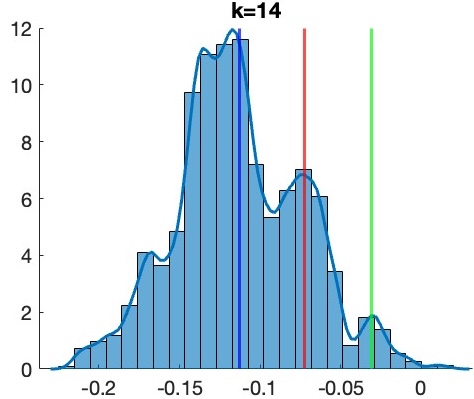}
\includegraphics[width=8cm,height=3.55cm]{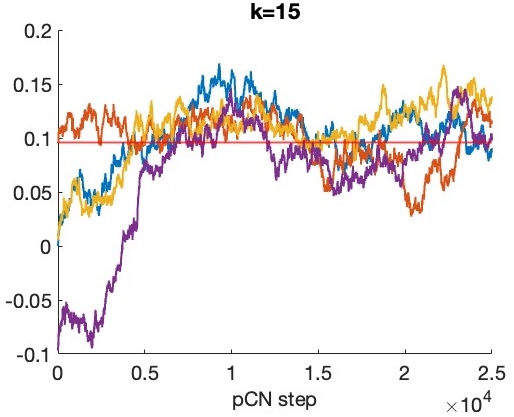}
\includegraphics[width=6cm,height=3.55cm]{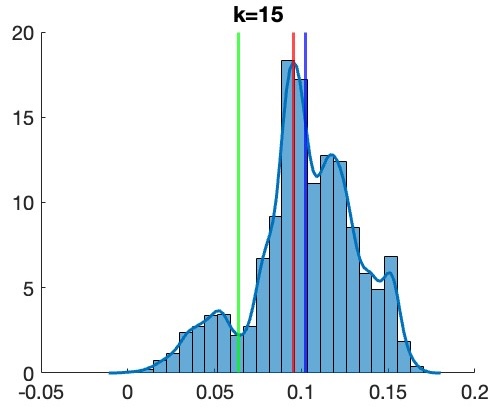}
\includegraphics[width=8cm,height=3.55cm]{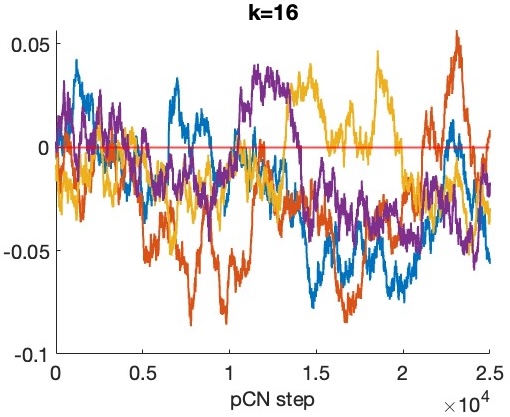}
\includegraphics[width=6cm,height=3.55cm]{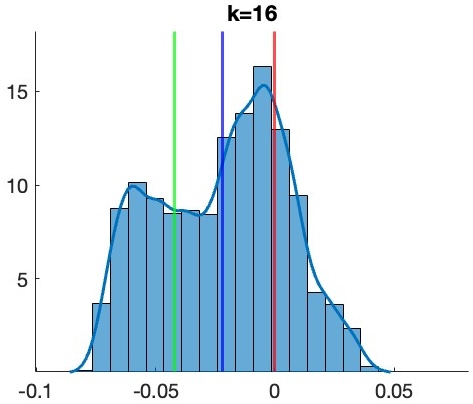}
\caption{Left column: trace-plots for some individual components of 25000 approximate samples from the posterior distribution of $\theta|X^{(n)}$ obtained via the pCN algorithm with different starting points. Right column: resulting approximations to the marginal posterior probability density functions.}
\label{Fig:MarginPost}
\end{figure}

Figure \ref{Fig:MarginPost} (left column) shows the trace-plots $(\vartheta_{m,k}, \ m=0,1,\dots,M)$, with $M=25000$, for some representative individual components (specifically, for $k=2,6,14,15,16$) of the approximate posterior samples obtained via the pCN algorithm in the context of the simulation study discussed in Appendix \ref{SubApp:Initial} above, based on low-frequency observations $X^{(n)}$ with $n=50000$. Trace-plots of different colours corresponds to the four different considered starting points. For the cold start $\vartheta_0=0$ (marked in blue), under which the results in Section \ref{Sec:pCNResults} were obtained, the associated Monte Carlo approximations (after the burnin) to the marginal posterior probability density functions of $\theta_k|X^{(n)}$, where $\theta_k=\langle F,e_k\rangle_{L^2}$, are shown in the right column, alongside the value of the corresponding true Fourier coefficients $\theta_{0,k}$ (vertical red lines) and the approximate posterior mean estimates $\bar\vartheta_{M,k}$ (vertical blue lines). The corresponding components of the MAP estimates calculated in Section \ref{Sec:GradResults} via the gradient descent algorithm  are also reported (vertical green lines). 

%

%%%%%%%%%%%%%%%%%%%%%%%%%%%%%%%%%%%%%%%%%%%%%%%
\subsubsection{Unimodality and Gaussian approximations for the lower frequencies}

An inspection of the plots showcases some interesting features for the statistical task at hand: in particular, for the lower frequencies, including for the first two displayed cases $k=2$ and $k=6$, the obtained marginal posterior distributions of the Fourier coefficients appear to be uni-modal and approximately of Gaussian shape. This lead to a substantial agreement between the posterior mean and the MAP estimates.

We also note that the marginal posterior distribution of $\theta_6|X^{(n)}$ displays a visible shrinkage towards the origin, which arises from the regularisation induced by the employed Gaussian prior. Such remaining finite sample effects are a possible indication of the potential severely ill-posedness of the problem and the resulting logarithmic rate of convergence, cf.~\cite{nickl2024inference}.

%%%%%%%%%%%%%%%%%%%%%%%%%%%%%%%%%%%%%%%%%%%%%%%
\subsubsection{Multimodality for the higher frequencies}

For the higher frequencies (corresponding to the cases $k=13,14,15$ in Figure \ref{Fig:MarginPost}), the marginal posterior distributions instead exhibit multiple local modes, as may be expected from the nonlinearity of the likelihood. In turn, this can be seen to impact the performance of the MAP estimators for which, through the gradient descent algorithm, we can generally only compute local optimisers. In line with the results in Appendix \ref{SubApp:Initial}, the sampling-based posterior mean estimators appear to be overall more robust to such non-convex settings.

A comparison of the shapes of the obtained marginal posterior distributions for the lower and higher frequencies further raises the question of the validity, in the considered statistical model, of the `Bernstein-von Mises' phenomenon and of a global Gaussian limit for the posterior distribution $\Pi(\cdot|X^{(n)})$ as $n\to\infty$.  In related parameter identification problems for the conductivity function in steady-state elliptic PDEs, an impossibility result was established in \cite{nickl2022some}. For the problem at hand, the results for the higher frequencies seem to provide some negative evidence.

%
%
%

%%%%%%%%%%%%%%%%%%%%%%%%%%%%%%%%%%%%%%%%%%%%%%%
\subsection{Stationary Gaussian process priors}
Alongside the random series expansions on orthonormal bases considered in the previous sections, a popular alternative approach to construct Gaussian priors on function spaces defined on $d$-dimensional domains, $d\in\N$, is through the specification of a stationary covariance kernel, namely a symmetric and positive semidefinite function $C:\R^d\times\R^d\to\R$ such that $C(x,y)=C(x+z,y+z)$ for all $x,y,z\in\R^d$; see Chapter 4 in \cite{Rasmussen2006gaussian}. A widely used choice is the Matérn kernel,
\begin{equation}
\label{Eq:MatKer}
    C_{\textnormal{MAT}}(x,y):=
    \frac{2^{1-\alpha}}{\Gamma(\alpha)}
    \left(\frac{|x-y|\sqrt{2\alpha}}{\ell} \right)^\alpha
    B_\alpha\left(\frac{|x-y|\sqrt{2\alpha}}{\ell} \right),
\end{equation}
where $\Gamma$ is the Gamma function, $B_\alpha$ denotes a modified Bessel function of the second kind, and $\alpha,\ell>0$ are hyper-parameters governing the regularity and the length-scale, respectively (cf.~Figure \ref{Fig:MatPrior}). A second example of interest is the squared exponential kernel,
\begin{equation}
\label{Eq:SEKer}
    C_{\textnormal{SE}}(x,y):=e^{-|x-y|^2/(2\ell^2)},
\end{equation}
corresponding to the limit of the Matérn one when  $\alpha\to\infty$.

%

%%%%%%%%%%%%%%%%%%%%%%%%%%%%%%%%%%%%%%%%%%%%%%%
\subsubsection{Posterior inference with stationary Gaussian process priors}

For the considered diffusion domain $\Ocal\subset\R^d$  and given any stationary covariance kernel $C$, let $G=(G(x), \ x\in\Ocal)$ be the associated centred and stationary Gaussian process indexed by $\Ocal$, identified by the relation
\begin{equation}
\label{Eq:FDDs}
    E[G(x)G(y)]=C(x,y), \qquad x,y\in\Ocal.
\end{equation}
In particular, if $C$ is either taken to be equal to the Matérn kernel \eqref{Eq:MatKer} with $\alpha>2+d/2$ (and any $\ell>0$) or the squared exponential kernel \eqref{Eq:SEKer} then, arguing as in p.~330f in \cite{GV17}, the (cylindrically defined) law $\Pi(\cdot)$ of the random function $G$ can be shown to be supported on $C^2(\Ocal)$ (in fact, on $C^\infty(\Ocal)$ in the latter case), and thus may serve as an appropriate prior model for the (reparametrised) conductivity function $F$ in \eqref{Eq:Reparam}. Prior distributions of this kind can be implemented in practice by discretising the parameter space according to the expansion
\begin{equation}
\label{Eq:DiscretisationScheme2}
    F(x)\equiv F_\theta(x):= \sum_{k=0}^K \theta_k \eta_k(x), \qquad
    K\in\N,\qquad \theta_0,\dots,\theta_K\in\R,
    \qquad x\in\Ocal,
\end{equation}
where now $\eta_0,\dots,\eta_K$ are piece-wise linear functions associated to a grid of points $y_0,\dots,y_K\in\Ocal$ (e.g.~the ones resulting from a triangulation of the domain), completely determined by the identity $\eta_k(y_{k'})=1_{\{k=k'\}}$. As a consequence, the function $F_\theta$ in \eqref{Eq:DiscretisationScheme2} satisfies $F_\theta(y_k)=\theta_k$ for all $k=0,\dots,K$, and for any $x\in\Ocal$ the value $F_\theta(x)$ is found by linearly interpolating the pairs $\{(y_k,\theta_k), \ k=0,\dots,K\}$. Accordingly, modelling $F$ via the stationary Gaussian process prior $\Pi(\cdot)$ with covariance structure \eqref{Eq:FDDs} corresponds to assigning to the vector of function evaluations $\theta:=(\theta_0,\dots,\theta_K)\in\R^{K+1}$ the multivariate Gaussian prior
\begin{equation}
\label{Eq:DiscrStatPrior}
    \theta\sim N(0,\Lambda),\qquad \Lambda:=(C(y_k,y_{k'}))_{k,k'=0}^K\in\R^{K+1,K+1}.
\end{equation}

\begin{figure}
\includegraphics[width=4.7cm]{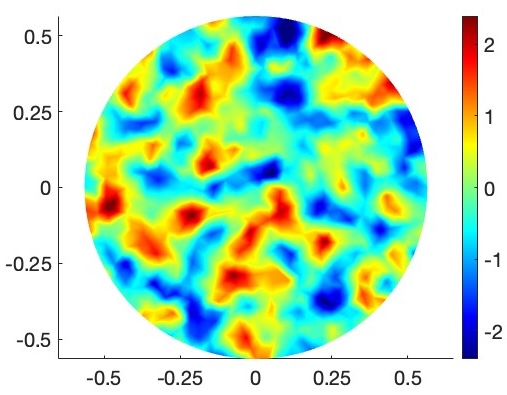}
\includegraphics[width=4.7cm]{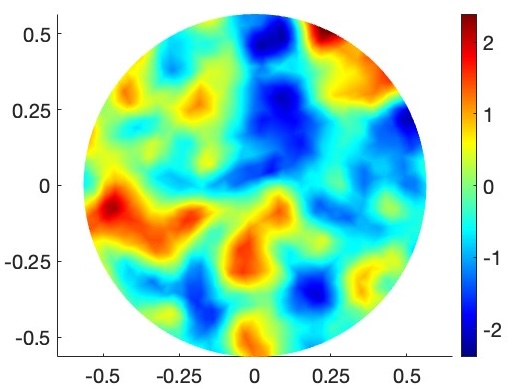}
\includegraphics[width=4.7cm]{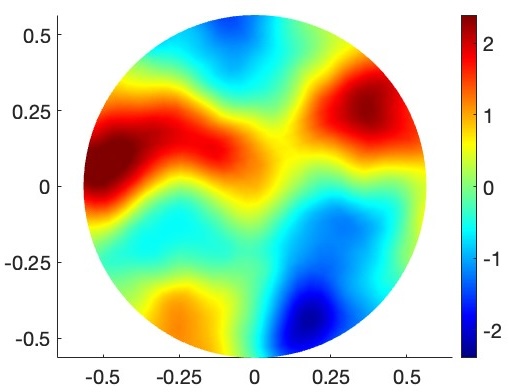}
\caption{Left to right: samples from the Matérn process priors with regularity $\alpha=2.5$ and with length scales $\ell=.05,.1,.25$, respectively.}
\label{Fig:MatPrior}
\end{figure}

\begin{figure}
\includegraphics[width=4.7cm]{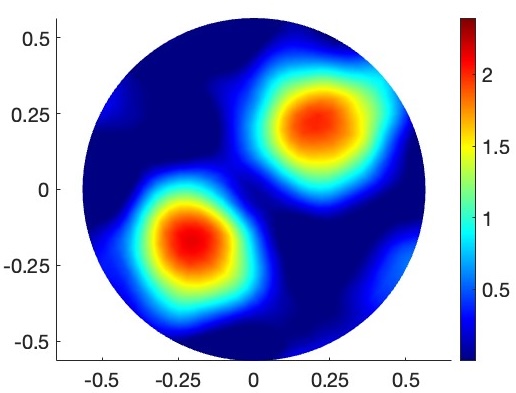}
\includegraphics[width=4.7cm,height=3.55cm]{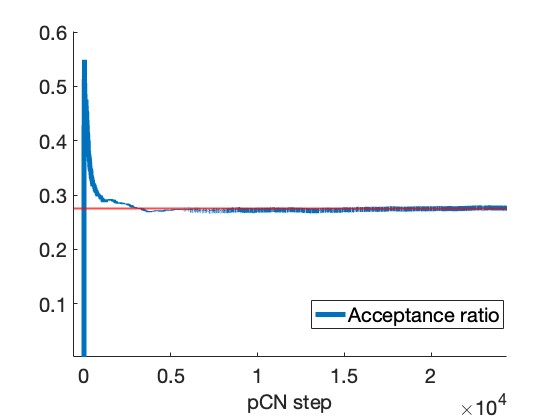}
\includegraphics[width=4.7cm,height=3.55cm]{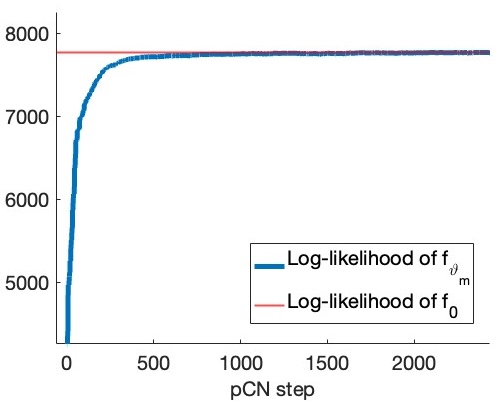}
\caption{Left: the posterior mean estimate $\bar F_n$ arising from a Matérn process prior with hyper-parameters $\alpha=2.5$ and $\ell=.25$, obtained via the pCN algorithm, to be compared to the ground truth $F_0$ shown in Figure \ref{Fig:StatProbl} (left). The overall computational time was 58 minutes. Centre: the acceptance ratio along the iterations of the pCN algorithm. Right: the log-likelihood for the first 2500 chain steps.}
\label{Fig:MatEstim}
\end{figure}

With such set-up, inference based on the posterior distribution of $\theta|X^{(n)}$ can be implemented by readily adapting the gradient-free and gradient-based methods outlined in Section \ref{Sec:Algo}, replacing the discretisation scheme \eqref{Eq:DiscretisationScheme} with \eqref{Eq:DiscretisationScheme2} and formally substituting in all the relevant equations the Neumann-Laplacian eigenfunctions $\{1,e_1,\dots,e_K\}$ used in the former with the linear interpolation functions $\{\eta_0,\dots,\eta_K\}$, and the diagonal multivariate Gaussian prior \eqref{Eq:TruncatedGP} with the discretised stationary Gaussian one defined by \eqref{Eq:DiscrStatPrior}. The numerical routines \eqref{Eq:NumLikelihood} and \eqref{Eq:NumGrad} for the evaluation of the likelihood and the gradient of the log-posterior density require no further modifications.

%

%

%%%%%%%%%%%%%%%%%%%%%%%%%%%%%%%%%%%%%%%%%%%%%%%
\subsubsection{Numerical experiments}

Based on the same data set $X^{(n)}$, with $n=50000$, underlying the simulation studies presented in Section \ref{sec:Num} and in Appendices \ref{SubApp:Initial} and \ref{Sec:MarginDistr}, we implemented posterior inference with the Matérn process prior. For brevity, let us focus on the results obtained via the pCN algorithm.

We set the hyper-parameters for the covariance kernel \eqref{Eq:MatKer}  to $\alpha=2.5$ and $\ell=.25$, and employed the discretisation scheme \eqref{Eq:DiscretisationScheme2} with $K=881$ linear interpolation functions defined over an unstructured triangular mesh covering the domain. Figure \ref{Fig:MatEstim} (left) shows the obtained posterior mean estimate of the reparametrised conductivity function, computed through the ergodic average of $M=25000$ samples from the pCN algorithm, initialised at the cold start $\vartheta_0=0$. The stepsize was tuned to $\delta=.000375$, with which a final acceptance ratio of $27.58\%$ was obtained; see Figure \ref{Fig:MatEstim} (centre). A burnin phase comprising the first 2500 iterates was identified, which we visualise in Figure \ref{Fig:MatEstim} (right) via the trace-plot of the log-likelihood. The obtained $L^2$-estimation error is equal to $.1872$, yielding a relative error of $22.42\%$. The procedure required an overall computational time of 58 minutes on a MacBook Pro with M1 processor, with an average of .14 seconds per iterate.

%
%
%

%%%%%%%%%%%%%%%%%%%%%%%%%%%%%%%%%%%%%%%%%%%%%%%
\subsection{Additional simulation studies}

\begin{figure}
\includegraphics[width=4.7cm]{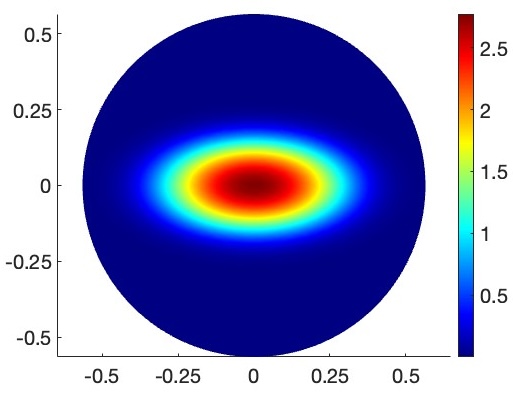}
\includegraphics[width=4.7cm]{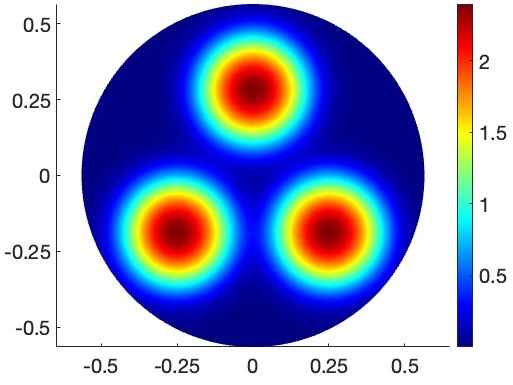}
\includegraphics[width=4.7cm]{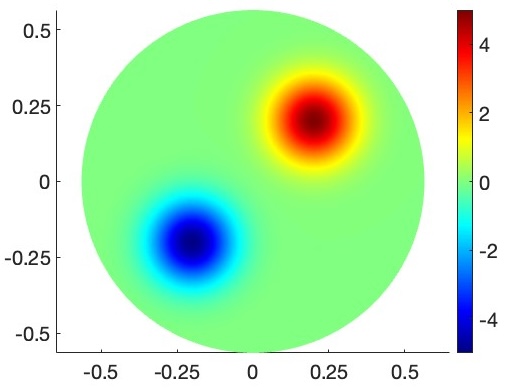}
\caption{Left to right: the (reparametrised) conductivity functions $F^{(1)}_0$, $F^{(2)}_0$ and $F^{(3)}_0$.}
\label{Fig:NewTruths}
\end{figure}

\begin{figure}
\includegraphics[width=4.7cm]{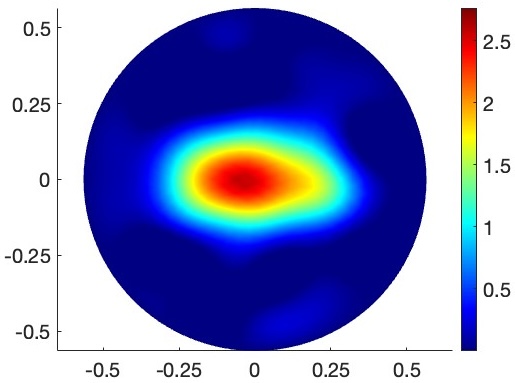}
\includegraphics[width=4.7cm]{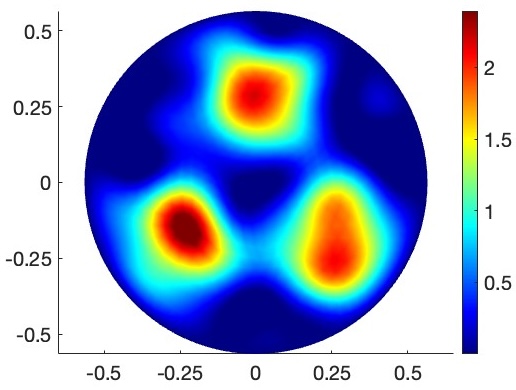}
\includegraphics[width=4.7cm]{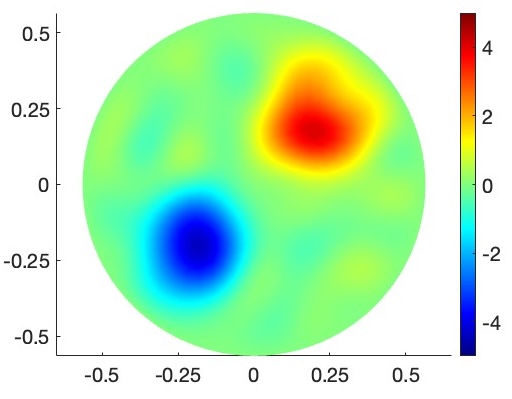}\\
\includegraphics[width=4.7cm]{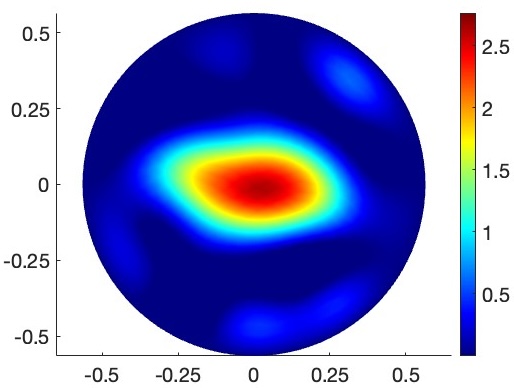}
\includegraphics[width=4.7cm]{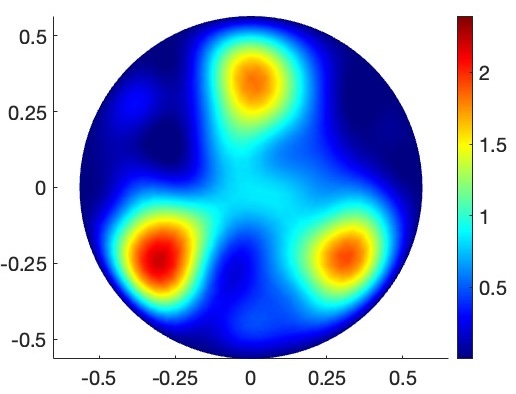}
\includegraphics[width=4.7cm]{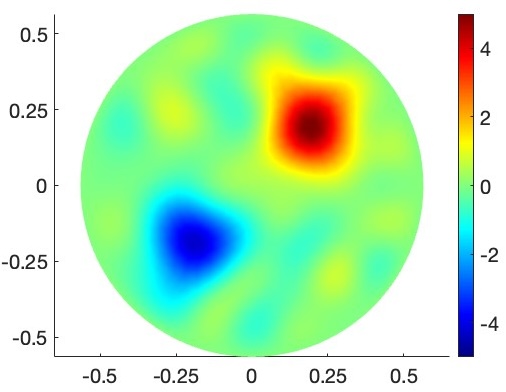}\\
\includegraphics[width=4.7cm]{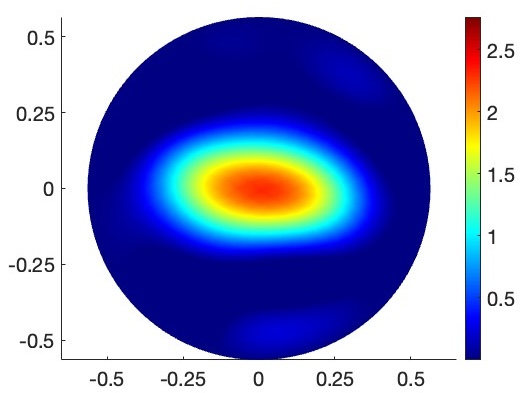}
\includegraphics[width=4.7cm]{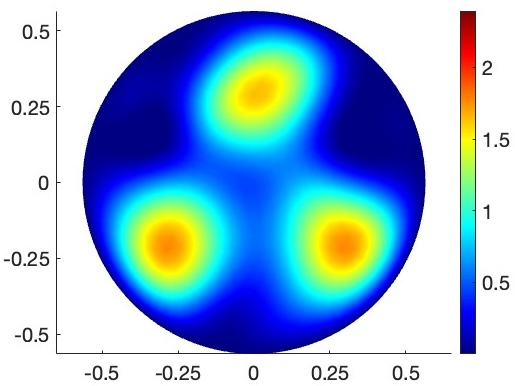}
\includegraphics[width=4.7cm]{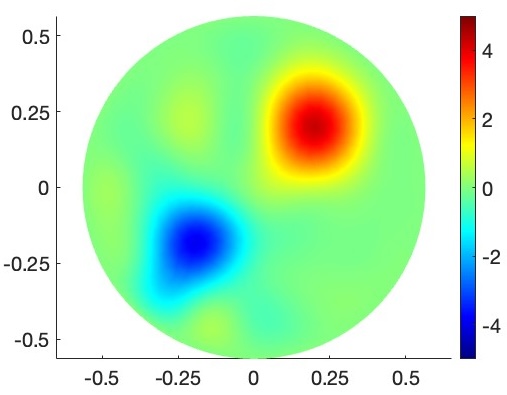}
\caption{Left column, top to bottom, respectively: the posterior means (computed via the pCN algorithm and the ULA) and the MAP estimate for the ground truth $F_0^{(1)}$ shown in Figure \ref{Fig:NewTruths} (left). Prior variability: $\sigma^2=100$. For pCN, the stepsize was set to $\delta=.00025$, the burnin to $2500$, and the acceptance ratio was $42.85\%$. For ULA, $\delta=.000025$, burnin: $250$. For gradient descent,  $\delta=.00001$, and 99 iterations were necessary for convergence. Central column: estimates for $F_0^{(2)}$, shown in Figure \ref{Fig:NewTruths} (centre). Prior variability: $\sigma^2=500$. For pCN, $\delta=.0001$, burnin: 2500, acceptance ratio: $32.19\%$. For ULA, $\delta=.000025$, burnin: 250. For gradient descent, $\delta=.00005$, number of iterations: 249. Right column: estimates for $F_0^{(3)}$, shown in Figure \ref{Fig:NewTruths} (right). Prior variability: $\sigma^2=500$. For pCN, $\delta=.0001$, burnin: 5000, acceptance ratio: $27.75\%$. For ULA, $\delta=.000025$, burnin: 250. For gradient descent, $\delta=.00005$, number of iterations: 152. }
\label{Fig:NewEstimates}
\end{figure}

\begin{table}
\caption{Recovery performances for the posterior mean $\bar F_n$ and the MAP estimate $\hat F_n$, for different ground truths}
\label{Tab:NewResults}
\centering
\renewcommand{\arraystretch}{1.75}
\begin{tabular}{ c|c|c|c|c|c} 
 & $\|F_0\|_2$ & proj. error & $\|F_0 - \bar F_n\|_2$, pCN & $\|F_0 - \bar F_n\|_2$, ULA & $\|F_0 - \hat F_n\|_2$ \\
  \hline
  $F_0^{(1)}$ & .7627 & .0759 & .1671 & .20548 & .1815 \\
 \hline
  $F_0^{(2)}$ & .9623 & .0834 & .2295 & .3461 & .3306 \\
 \hline
 $F_0^{(3)}$ & 1.2275 & .1212 & .3751  & .30848 & .2972 
\end{tabular}
\end{table}

We conclude this section presenting some further empirical investigations in which we considered the recovery of three additional true conductivity functions, respectively specified, under the reparametrisation $F=\log(f - \fmin)$, with $\fmin=.1$, by
\begin{align*}
    F^{(1)}_0(x_1,x_2)&=\log\big(1 + 15e^{-(5x_1)^2-(10x_2)^2}\big);\\
    F^{(2)}_0(x_1,x_2)&=\log\big(1+ 10e^{-(8x_1)^2-(8x_2-2.25)^2} + 10e^{-(8x_1+2)^2-(8x_2+1.5)^2}\\
    &\quad + 10e^{-(8x_1-2)^2-(8x_2+1.5)^2}\big);
\\
F^{(3)}_0(x_1,x_2)&=5e^{-(7.5x_1-1.5)^2-(7.5x_2-1.5)^2} - 5e^{-(7.5x_1+1.5)^2-(7.5x_2+1.5)^2},
\end{align*}
for $(x_1,x_2)\in\Ocal$; see Figure \ref{Fig:NewTruths}. For each of these, we generated synthetic data sets of discrete observations $X^{(n)}$, with $n=50000$, as described in Section \ref{sec:DataGen}, sampling from the Euler-Maruyama approximations of the corresponding continuous trajectories at low  `frequency' $\delta_t/D=.0001$. Next, for each set of observations, we implemented posterior inference with a truncated Gaussian series priors based on the Neumann-Laplacian eigenpairs, defined as in \eqref{Eq:TruncatedGP}, numerically computing the associated posterior mean estimates via the pCN algorithm and the ULA, and the MAP estimates through the gradient descent method. Across the three collections of experiments, the same truncation level $K=68$ and the same regularity parameter $\alpha=1$ for the prior were used. Each run of the pCN algorithm and of the ULA comprised 25000 and 10000 steps respectively, while each instance of the gradient descent method was iterated until the fulfilment of the convergence criterion laid out in Section \ref{Sec:GradResults}. All the schemes were initialised with cold starts. The obtained results are summarised in Table \ref{Tab:NewResults} and visualised in Figure \ref{Fig:NewEstimates}. The computation times were in line with those of the experiments presented in the previous sections.

\end{document}